\renewcommand{\arraystretch}{1.1}
\theoremstyle{plain}
\newtheorem{thm}{Theorem}
\newtheorem{lem}[thm]{Lemma}
\newtheorem{prop}[thm]{Proposition}
\newtheorem{cor}{Corollary}
\theoremstyle{definition}
\newtheorem{defn}{Definition}
\newtheorem{ex}{Example}
\theoremstyle{remark}
\newtheorem{rem}{Remark}
\newcommand{\Fq}{\mathbb{F}_{q}}
\newcommand{\To}{\longrightarrow}
\newcommand{\E}{\mathbb{E}}
\newcommand{\F}{{\mathbb F}}
\newcommand{\K}{\mathbb{K}}
\newcommand{\LL}{\mathbb{L}}
\newcommand{\N}{\mathbb{N}}
\newcommand{\Tr}{{\rm Tr}}
\newcommand{\Span}{{\rm Span}}
\begin{document}

\title{A Comparison of Distance Bounds for Quasi-Twisted Codes}

\author{\centerline{Martianus Frederic Ezerman, John Mark Lampos, San Ling, Buket \"{O}zkaya, and Jareena Tharnnukhroh}
\thanks{M. F. Ezerman, S. Ling, B. \"{O}zkaya, and J. Tharnnukhroh are with the School of Physical and Mathematical Sciences, Nanyang Technological University, 21 Nanyang Link, Singapore 637371, e-mails: $\{\rm fredezerman, lingsan, buketozkaya\}$@ntu.edu.sg, ${\rm jareena001}$@e.ntu.edu.sg.}
\thanks{J.~M.~Lampos is with the Institute of Mathematical Sciences and Physics, University of the Philippines, Los Ba\~{n}os, Laguna, Philippines 4031, 
e-mail: ${\rm jtlampos}$@up.edu.ph.}
\thanks{M. F. Ezerman, S. Ling, and B. \"{O}zkaya are supported by Nanyang Technological University Research Grant No. 04INS000047C230GRT01.}
\thanks{J.~M.~Lampos is supported by DOST-ASTHRDP Dissertation Grant and CHED K-12 Transition Program Scholarship for Graduate Studies Abroad.}
\thanks{J.~Tharnnukhroh's scholarship is from the Development and Promotion of Science and Technology (DPST) talent project of Thailand.}
\thanks{The fourth section of this paper contains results presented at the 2019 IEEE International Symposium on Information Theory \cite{ELOT}.}
\thanks{This work has been submitted to the IEEE for possible publication. Copyright may be transferred without notice, after which this version may no longer be accessible.}
}


\maketitle

\begin{abstract}
Spectral bounds on the minimum distance of quasi-twisted codes over finite fields are proposed, based on eigenvalues of polynomial matrices and the corresponding eigenspaces. They generalize the Semenov-Trifonov and Zeh-Ling bounds in a way similar to how the Roos and shift bounds extend the BCH and HT bounds for cyclic codes. The eigencodes of a quasi-twisted code in the spectral theory and the outer codes in its concatenated structure are related. A comparison based on this relation verifies that the Jensen bound always outperforms the spectral bound under special conditions, which yields a similar relation between the Lally and the spectral bounds. The performances of the Lally, Jensen and spectral bounds are presented in comparison with each other.
\end{abstract}

\begin{IEEEkeywords}
quasi-twisted code, concatenated code, minimum distance bound, polynomial matrices, spectral analysis
\end{IEEEkeywords}

\section{Introduction}
\label{intro}

Cyclic codes have been widely studied, since their algebraic structure provides effective encoding and decoding. Several lower bounds on the minimum distance of cyclic codes had been derived. The first and perhaps the most famous one was obtained by Bose and Chaudhuri (\cite{BC}) and by Hocquenghem (\cite{H}), known as the BCH bound. An extension of the BCH bound was formulated by Hartmann and Tzeng in \cite{HT}, which can be considered as a two-directional BCH bound. The Roos bound (\cite{R2}) generalized this idea further by allowing the HT bound to have a certain number of gaps in both directions, which was extended to constacyclic codes in \cite{RZ}. Another remarkable extension of the HT bound, known as the shift bound, was introduced by van Lint and Wilson in \cite{LW}. The shift bound is known to be particularly powerful on many nonbinary codes ({\it e.g.} see \cite{EL}). 

Quasi-twisted (QT) codes form an important class of block codes that includes cyclic codes, quasi-cyclic (QC) codes and constacyclic codes as special subclasses. In addition to their rich algebraic structure (\cite{Y,SZ}), QT codes are also known to be asymptotically good (\cite{C,DH,WS}) and they yield good parameters (\cite{AGOSS1,AGOSS2,QSS,SQS}).

Even though QC and QT codes are interesting from both theoretical and practical points of view, the study on their minimum distance estimates is not as rich as for cyclic and constacyclic codes. Jensen derived a significant bound in \cite{J}, which is valid for many code classes having a concatenated structure, including QT codes (\cite{LG}). Lally gave another estimate on the minimum distance of a given QC code, which is obtained by a simpler concatenation \cite{L2}, depending only on the index and the co-index of the QC code in consideration. More recently, Semenov and Trifonov developed a spectral analysis of QC codes (\cite{ST}), based on the work of Lally and Fitzpatrick in \cite{LF}, and formulated a BCH-like minimum distance bound, together with a comparison with a few other bounds for QC codes. Their approach was generalized by Zeh and Ling in \cite{LZ}, by using the HT bound. They extended the spectral method to QC product codes in \cite{LZ2}. The first spectral analysis of QT codes appeared in \cite{ELOT}, where a spectral bound under some restricted conditions was proven and then this bound was only compared with the BCH-like and HT-like versions.

In this work, we investigate the spectral theory for QT codes, by following the steps in \cite{ELOT,ST}, that is centered around the eigenvalues of a given QT code. They can be considered as the QT analogues of the zeros of constacyclic codes. We aim at deriving a general spectral bound which holds for a larger choice of eigenvalues than the more constrained version in \cite{ELOT}. For this, we focus on connecting the concatenated structure of QT codes to the key elements of the spectral method. Using this relation, we prove three important results. First, we push the general spectral bound to the largest possible extent and show that it holds for any nonempty subset of eigenvalues. Second, we establish a theoretical link between the spectral and Jensen bounds. We show that deploying all of the eigenvalues causes the Jensen bound to win against the spectral bound, but looking at proper subsets of the eigenvalues may occasionally turn the situation to the opposite, allowing the spectral bound to beat the Jensen bound. At last, we prove a relation between the Lally and spectral bounds, which mimics the result when comparing the spectral bound with the Jensen bound, except that this time the largest possible set of eigenvalues is considered. The numerical comparisons at the end present how all these three bounds behave over a large number of randomly chosen QT codes.

This paper is organized as follows. Section \ref{basics} recalls some required background on the algebraic structure of constacyclic and QT codes. Section \ref{comparison section} studies the spectral method in terms of the concatenated structure of QT codes. Using this unified approach, Section \ref{bound sect} modifies the spectral method of Semenov-Trifonov to QT codes. A generalized spectral bound on the minimum distance of QT codes is formulated and proven, where the Roos-like and shift-like bounds for QT codes follow as special cases. In Section \ref{res sect}, we first prove that the Jensen bound performs at least as well as the spectral bound under certain assumptions. Then we formulate the Lally bound for QT codes and compare it with the spectral bound in a similar way. Section \ref{Example section} presents explicit examples and constructions over a large set of random QT codes, displaying the performances of these three bounds in terms of sharpness and rank. All computations were carried out in \textsc{magma} \cite{BCP}.
 
\section{Background}\label{basics}

\subsection{Constacyclic codes and minimum distance bounds from defining sets} \label{consta sect}

Let $\Fq$ denote the finite field with $q$ elements, where $q$ is a prime power. Let $m$ be throughout a positive integer with $\gcd(m,q)=1$. For a fixed $\lambda \in \Fq\setminus \{0\}$, a linear code $C\subseteq \Fq^m$ is called a $\lambda$-{\it constacyclic} code if it is invariant under the $\lambda$-constashift of codewords, {\it i.e.}, $(c_0,\ldots,c_{m-1}) \in C$ implies $(\lambda c_{m-1},c_0,\ldots,c_{m-2}) \in C$. In particular, if $\lambda = 1$ or $q=2$, then $C$ is a cyclic code.

Consider the principal ideal $I=\langle x^m -\lambda \rangle$ of $\Fq[x]$ and define the residue class ring $R:=\Fq[x]/I$. For an element $\mathbf{a}\in \Fq^m$, we associate an element of $R$ via the following $\Fq$-module isomorphism:
\begin{equation}\begin{array}{rcl} \label{identification-1}
\phi: \ \F_q^{m} & \longrightarrow & R  \\
\mathbf{a}=(a_0,\ldots,a_{m-1}) & \longmapsto & a(x):= a_0+\cdots + a_{m-1}x^{m-1}.
\end{array}
\end{equation}
While elements in $R$ are cosets of the form $b(x) + I$ where $b(x) \in \F_q[x]$, we write them with a slight abuse of notation as $b(x)$. Observe that the $\lambda$-constashift in $\F_q^{m}$ corresponds to multiplication by $x$ in $R$. Therefore, a $\lambda$-constacyclic code $C\subseteq \Fq^m$ can be viewed as an ideal of $R$. Since every ideal in $R$ is principal, there exists a unique monic polynomial $g(x)\in R$ such that $C=\langle g(x)\rangle$, {\it i.e.}, each codeword $c(x)\in C$ is of the form $c(x)=a(x)g(x)$, for some $a(x)\in R$. The polynomial $g(x)$, which is a divisor of $x^m-\lambda$, is called the {\it generator polynomial} of $C$, whereas the {\it check polynomial} of $C$, say $h(x)\in R$, satisfies $g(x)h(x)=x^m-\lambda$. 

Let $\mbox{wt}(c)$ denote the number of nonzero coefficients in $c(x)\in C$. Recall that the minimum distance of $C$ is defined as $d(C):=\min\{\mbox{wt}(c) : 0\neq c(x)\in C\}$ when $C$ is not the trivial zero code. For any positive integer $p$, let $\mathbf{0}_p$ denote throughout the all-zero vector of length $p$. We have $C=\{\mathbf{0}_m\}$ if and only if $g(x)=x^m-\lambda$. In this case, we assume throughout that $d(C)=\infty$.

Let $r$ be the smallest divisor of $q-1$ with $\lambda^r=1$ and let $\alpha$ be a primitive $rm^{\rm th}$ root of unity such that $\alpha^m=\lambda$. Then, $\xi:=\alpha^r$ is a primitive $m^{\rm th}$ root of unity and the roots of $x^m-\lambda$ are of the form $\alpha, \alpha\xi, \ldots, \alpha\xi^{m-1}$. Henceforth, let $\Omega :=\{\alpha\xi^k : 0\leq k \leq m-1\}=\{\alpha^{1+kr} : 0\leq k \leq m-1\}$ be the set of all $m^{\rm th}$ roots of $\lambda$ and let $\F$ be the smallest extension of $\Fq$ that contains $\Omega$ (equivalently, $\F=\Fq(\alpha)$ so that $\F$ is the splitting field of $x^m-\lambda$). Given the $\lambda$-constacyclic code $C=\langle g(x)\rangle$, the set of roots of its generator polynomial, say $$L:=\{\alpha\xi^k\ :\ g(\alpha\xi^k)=0\}\subseteq \Omega,$$ is called the {\it zero set} of $C$. The power set $\mathcal{P}(L)$ of $L$ is called the {\it defining set} of $C$. Clearly, $L=\emptyset$ if and only if $C=\langle 1\rangle = \Fq^m$. Note that $\alpha\xi^k\in L$ implies $\alpha^q\xi^{qk}\in L$, for each $k$, where $\alpha^q\xi^{qk}=\alpha\xi^{k'}$ with $k'=\frac{q-1}{r}+qk \mod m$. A nonempty subset $E\subseteq\Omega$ is said to be {\it consecutive} if there exist integers $e,n$ and $\delta$ with $e\geq 0,\delta \geq 2, n> 0$ and $\gcd(m,n)=1$ such that
\begin{equation} \label{cons zero set}
E:=\{\alpha\xi^{e+zn}\ :\ 0\leq z\leq \delta-2\}\subseteq\Omega.
\end{equation}

Let $\mathcal{P}(\Omega)$ denote the power set of $\Omega$. Observe that any $P\in\mathcal{P}(\Omega)$ is the zero set of some $\lambda$-constacyclic code $D_P\subseteq\F^m$ since $x^m-\lambda$ splits into linear factors over $\F$.
Let $C$ be a nontrivial $\lambda$-constacyclic code of length $m$ over some subfield of $\F$ with zero set $L\subseteq \Omega$. Then, for any $P\subseteq L$, $C$ is contained in $D_P$ and therefore we have $d(C)\geq d(D_P)$. 
Throughout, we define a {\it defining set bound} to be a member of a chosen family $\mathcal{B}(C):=\{(P, d_P)\} \subseteq \mathcal{P}(\Omega) \times (\N \cup \{\infty\})$ such that, for any $(P, d_P)\in \mathcal{B}(C)$, $P\subseteq L$ implies $d(C) \geq d(D_P) \geq d_P$ (a more detailed formulation given for the cyclic codes can be found in \cite{BGS}). We set 
\begin{multline*}\mathcal{B}_1(C) := \{(P, d(D_P)) : \\
D_P\subseteq\F^m \mbox{\ has\ zero\ set\ } P, \mbox{\ for\ all\ } P\subseteq L\}.
\end{multline*}
In particular, following the notation given above in the case when $P=L=\Omega$, we have $D_{\Omega}=\{\mathbf{0}_m\}$ over $\F$ with $d(D_{\Omega})=\infty$ and consequently, we include $(\Omega,\infty)$ in every collection $\mathcal{B}(C)$ as a convention when $L=\Omega$.

If we choose $$\mathcal{B}_2(C) := \{(E, |E|+1) : E\subseteq L \mbox{\ is\ consecutive}\},$$ then we obtain the BCH bound, where $E$ is of the form given in (\ref{cons zero set}). Similarly, we formulate the HT bound as \begin{multline*}\mathcal{B}_3(C) := \{(D, \delta + s) : \\
D = \{\alpha\xi^{e + z  n_1 + y  n_2} : 0\leq z \leq \delta-2, 0 \leq y\leq s\} \subseteq L\},\end{multline*} for integers $ e \geq 0$, $\delta \geq 2$ and positive integers $s, n_1$ and $n_2$ such that $\gcd(m, n_1) = 1$ and $\gcd(m, n_2) < \delta$. Note that any union of defining set bounds is again a defining set bound.

We are now ready to present the Roos bound on the minimum distance of a given $\lambda$-constacyclic code (see \cite[Theorem 2]{R2} for the original version by C. Roos for cyclic codes). For the proof of the result below, we refer to \cite[Theorem 6]{RZ}.

\begin{thm}[Roos bound] \label{Roos}
Let $N$ and $M$ be two nonempty subsets of $\Omega$. If there exists a consecutive set $M'$ containing $M$ such that $|M'| \leq |M| + d_N -2$, then we have $d_{MN}\geq |M| + d_N -1$ where $MN:=\displaystyle\frac{1}{\alpha}\bigcup_{\varepsilon\in M} \varepsilon N$.
\end{thm}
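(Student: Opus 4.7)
The plan is a proof by contradiction along the lines of the classical Roos argument for cyclic codes, adapted to the constacyclic setting. Suppose, for contradiction, that $d_{MN} \leq |M| + d_N - 2$, and let $c \in D_{MN}$ be a nonzero codeword of weight $w := \mbox{wt}(c) \leq |M| + d_N - 2$, with support $T \subseteq \{0, 1, \dots, m-1\}$ of cardinality $w$.

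First I would assemble the key matrix $S \in \F^{|M'| \times |N|}$ with entries $S_{\mu, \nu} := c(\mu\nu/\alpha)$. Since $\mu\nu/\alpha \in MN$ for $\mu \in M$, $\nu \in N$, and $MN$ is contained in the zero set of $D_{MN}$, the $|M|$ rows of $S$ indexed by $M$ are identically zero, giving $\mbox{rank}(S) \leq |M'| - |M|$. On the other hand, expanding $c(\mu\nu/\alpha) = \sum_{i \in T} c_i \alpha^{-i} \mu^{i} \nu^{i}$ yields the factorization
\[
S \;=\; V_{M'} \cdot D \cdot V_{N}^{T},
\]
where $V_{M'} \in \F^{|M'| \times w}$ has entries $(V_{M'})_{\mu, i} = \mu^{i}$, $V_N \in \F^{|N| \times w}$ has entries $(V_N)_{\nu, i} = \nu^{i}$, and $D := \mbox{diag}(c_i \alpha^{-i})_{i \in T}$ is invertible.

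Next I would exploit the two sources of structure separately. Writing $M' = \{\alpha \xi^{e+jn} : 0 \leq j \leq |M'|-1\}$ and using $\gcd(m,n) = 1$, the values $\{\xi^{ni} : i \in T\}$ are pairwise distinct, so $V_{M'}$ equals, up to an invertible diagonal factor on the left, a Vandermonde matrix at $|M'|$ distinct nodes; hence $\mbox{rank}(V_{M'}) = \min(|M'|, w)$. On the other side, the hypothesis $d(D_N) = d_N$ is equivalent to the statement that every $d_N - 1$ columns of the full evaluation matrix $(\nu^{i})_{\nu \in N,\, 0 \leq i \leq m-1}$ are linearly independent, which is the only handle on the rank of $V_N$ available to us.

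The heart of the proof is to combine the Vandermonde rank of $V_{M'}$ with this column-independence control on $V_N$, via Sylvester's rank inequality applied to $S = V_{M'} \cdot (D V_{N}^{T})$ together with the hypothesis $|M'| \leq |M| + d_N - 2$, to force $\mbox{rank}(S) \geq |M'| - |M| + 1$, contradicting the block-zero upper bound. The subtle point — and therefore the hardest step — is that only distance information (not a consecutive-set structure) is available on the $N$-side, so a case analysis depending on whether $w < d_N$ or $w \geq d_N$ is required; in the latter case one must pass to a judiciously chosen $w \times w$ submatrix of $V_N$ whose nonsingularity follows from the distance hypothesis, and carefully track how the $|M|$ zero rows propagate through the product to yield the required rank lower bound on $S$.
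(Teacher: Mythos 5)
The paper itself does not prove Theorem~\ref{Roos}; it simply cites \cite[Theorem~6]{RZ}. So I compare your proposal against the standard Roos argument.

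Your overall framework is the right classical one: the matrix $S$ with entries $c(\mu\nu/\alpha)$, the factorization $S = V_{M'} D V_N^{\top}$, the upper bound $\mathrm{rank}(S) \leq |M'| - |M|$ from the zero block indexed by $M$, and the Vandermonde rank of $V_{M'}$ are all correct (modulo the small slip that the invertible diagonal factor sits on the \emph{right} of $V_{M'}$, scaling columns, not on the left). But the final paragraph has two genuine gaps.

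First, the case $w < d_N$ is not in fact handled by Sylvester's inequality, and you have misidentified which case needs extra work. When $w < d_N$ one has $\mathrm{rank}(V_N) = w$, so Sylvester gives only $\mathrm{rank}(S) \geq \min(|M'|, w)$. If $|M'| > w$ this yields $w \leq |M'| - |M|$, which together with $|M'| \leq |M| + d_N - 2$ is perfectly consistent (e.g.\ $w=1$, $|M|=1$, $d_N=3$, $|M'|=2$); there is no contradiction. What actually rules out $w < d_N$ is a separate, elementary step that should open the proof: for any $\mu \in M$, the polynomial $c^{\mu}(x) := c(\mu x/\alpha) = \sum_{i \in T} c_i (\mu/\alpha)^i x^i$ vanishes at every $\nu \in N$ because $\mu\nu/\alpha \in MN$, so $c^{\mu}$ is a nonzero codeword of $D_N$ with the same support as $c$, forcing $w = \mathrm{wt}(c^{\mu}) \geq d(D_N) \geq d_N$.

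Second, your treatment of $w \geq d_N$ is wrong. There is no ``judiciously chosen $w \times w$ nonsingular submatrix of $V_N$'' in general: the distance hypothesis only says every $d_N - 1$ columns of the $|N| \times m$ evaluation matrix are independent, hence only $\mathrm{rank}(V_N) \geq d_N - 1$. When $w \geq d_N$ the $w$ columns of $V_N$ may be dependent, and if $|N| < w$ a $w \times w$ submatrix does not even exist. Fortunately none is needed. With $w \geq d_N$ established, plug $\mathrm{rank}(V_N) \geq d_N - 1$ into Sylvester to get $\mathrm{rank}(S) \geq \min(|M'|, w) + (d_N - 1) - w$; comparing with $\mathrm{rank}(S) \leq |M'| - |M|$, the subcase $|M'| \leq w$ gives $w \geq |M| + d_N - 1$ and the subcase $|M'| > w$ gives $d_N - 1 \leq |M'| - |M| \leq d_N - 2$, each contradicting the standing assumptions. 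With these two repairs the argument is complete.
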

If $N$ is consecutive like in (\ref{cons zero set}), then we obtain the following.
\begin{cor}\cite[Corollary 1]{RZ},\cite[Corollary 1]{R2} \label{Roos2}
Let $N, M$ and $M'$ be as in Theorem \ref{Roos}, with $N$ consecutive. Then $|M'| < |M| + |N|$ implies $d_{MN}\geq |M| + |N|$.
\end{cor}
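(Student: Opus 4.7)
The plan is to derive Corollary \ref{Roos2} as a very short consequence of Theorem \ref{Roos} (the Roos bound) by substituting in the BCH bound applied to the consecutive set $N$. The key observation is that when $N$ is consecutive, the quantity $d_N$ that appears in Theorem \ref{Roos} can be replaced with an explicit lower bound in terms of $|N|$ alone, which turns the Roos hypothesis $|M'|\leq |M|+d_N-2$ into a hypothesis purely about cardinalities.

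First I would note that since $N$ is consecutive in the sense of (\ref{cons zero set}), it is of the form $\{\alpha\xi^{e+zn} : 0\le z\le \delta-2\}$ with $|N|=\delta-1$. The associated $\lambda$-constacyclic code $D_N\subseteq \F^m$ thus has $N$ as its zero set, and because $N\in\mathcal{B}_2(D_N)$, the BCH bound gives $d_N=d(D_N)\geq |N|+1$.

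Second, I would verify that the hypothesis of Theorem \ref{Roos} is automatically fulfilled under the assumption $|M'|<|M|+|N|$. Indeed, from $|M'|\leq |M|+|N|-1$ and $d_N\geq |N|+1$ we get
\[
|M'|\leq |M|+|N|-1\leq |M|+d_N-2,
\]
so Theorem \ref{Roos} applies and yields $d_{MN}\geq |M|+d_N-1\geq |M|+|N|$, which is the desired conclusion.

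Because the corollary is essentially just a specialization of Theorem \ref{Roos} via BCH, there is no genuine obstacle; the only thing to be careful about is bookkeeping the inequalities $|M'|<|M|+|N|\,\Longleftrightarrow\,|M'|\leq |M|+|N|-1$ and correctly invoking the BCH estimate for the constacyclic code $D_N$ rather than for $C$ itself.
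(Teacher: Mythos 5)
Your proof is correct and follows exactly the route the paper intends (the paper itself gives no explicit proof, only the lead-in ``If $N$ is consecutive\dots, then we obtain the following''): for consecutive $N$ the BCH bound gives $d_N = d(D_N)\geq |N|+1$, which turns the hypothesis $|M'|<|M|+|N|$ into the Roos hypothesis $|M'|\leq |M|+d_N-2$, and Theorem~\ref{Roos} then yields $d_{MN}\geq |M|+d_N-1\geq |M|+|N|$.
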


\begin{rem}\label{Roos remark}
In particular, the case $M=\{\alpha\}$ yields the BCH bound for the associated constacyclic code (see \cite[Corollary 2]{RZ}). The original BCH bound for cyclic codes can be found in \cite{BC} and \cite{H}. By taking $M'=M$, we obtain the HT bound (see \cite[Corollary 3]{RZ}) and the HT bound for cyclic codes is given in \cite[Theorem 2]{HT}. 
\end{rem}

Another improvement to the HT bound for cyclic codes was provided by van Lint and Wilson in \cite{LW}, which is known as the shift bound. We proceed by formulating the shift bound for constacyclic codes. To do this, we need the notion of an {\it independent set}, which can be constructed over any field in a recursive way, as given below.

Let $S$ be a subset of some field $\K$ of any characteristic. One inductively defines a family of finite subsets of $\K$, called independent with respect to $S$, as follows.
\begin{enumerate}
\item $\emptyset$ is independent with respect to $S$.
\item If $A\subseteq S$ is independent with respect to $S$, then $A\cup\{b\}$ is independent with respect to $S$, for any $b\in \K \setminus S$.
\item If $A$ is independent with respect to $S$ and $c$ is any nonzero element in $\K$ such that $cA\subseteq S$, then $cA$ is independent with respect to $S$.
\end{enumerate}

The recursive construction starts with the smallest independent set, say $A_0 = \emptyset$. Then $A_1 = \emptyset \cup\{b_0\} = \{b_0\}$, for some $b_0 \in  \K\setminus S$. One can continue with $A_2 = c_1 A_1 \cup \{b_1\}$ provided that $c_1 A_1 \subseteq S$ for some $c_1\in \K\setminus \{0\}$ and $b_1\in \K\setminus S $. This ensures that $A_2$ is again independent with respect to $S$. The process stops at the $t^{\rm th}$ step when there is no constant $c\in \K\setminus \{0\}$ such that $c A_t \subseteq S$. Observe that $|A_0|=0$ and $|A_{i+1}|=|A_i|+1$ for all $ i \in \{0,\ldots,t-1\}$.

\begin{thm}[Shift bound] \cite[Theorem 11]{LW}\label{shift bound}
Let $0\neq f(x)\in \K[x]$ and $S=\{\theta\in \K\  : \ f(\theta)=0\}$. Then \rm{wt}$(f)\geq |A|,$ for every subset $A$ of $\K$ that is independent with respect to $S$.
\end{thm}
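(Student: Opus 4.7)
My plan is to recast $\mathrm{wt}(f)$ as the dimension of an ambient space of ``evaluation vectors'' and to show that the three operations producing an independent set are exactly the ones that build up a linearly independent family of such vectors. Induction on $|A|$ following the recursive construction then closes the argument.

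Concretely, I would write $f(x) = \sum_{j \in J} a_j x^j$ with $a_j \neq 0$ for every $j \in J$, so that $\mathrm{wt}(f) = |J|$. For each $\theta \in \K$ set $v(\theta) := (\theta^j)_{j \in J} \in \K^{|J|}$ and let $\mathbf{a} := (a_j)_{j \in J}$. Then $f(\theta) = \mathbf{a} \cdot v(\theta)$, so the zero set $S$ coincides with the set of $\theta \in \K$ for which $v(\theta)$ lies in the hyperplane $H := \mathbf{a}^\perp$ of codimension one in $\K^{|J|}$. I would prove the strengthening: for every independent set $A$ with respect to $S$, the family $\{v(\theta) : \theta \in A\}$ is linearly independent in $\K^{|J|}$. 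This immediately forces $|A| \leq |J| = \mathrm{wt}(f)$.

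The induction on $|A|$ mirrors the three rules of the construction. The base case $A = \emptyset$ is vacuous. For the adjunction rule $A \mapsto A \cup \{b\}$ with $A \subseteq S$ and $b \in \K \setminus S$, every $v(\theta)$ with $\theta \in A$ lies in $H$, while $v(b) \notin H$ since $\mathbf{a} \cdot v(b) = f(b) \neq 0$, so appending a vector outside the hyperplane preserves linear independence. For the scaling rule $A \mapsto cA$ with $cA \subseteq S$ and $c \neq 0$, I would observe that $v(c\theta) = D\, v(\theta)$, where $D := \mathrm{diag}((c^j)_{j \in J})$ is invertible; hence an invertible linear map carries the already-independent family $\{v(\theta) : \theta \in A\}$ to $\{v(c\theta) : \theta \in A\}$, which still consists of linearly independent vectors, and they now lie inside $H$ by the assumption $cA \subseteq S$.

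The only delicate point I expect to track is the \emph{lives in $H$} invariant: after any application of the scaling rule the whole family sits in $H$, which is what enables the next adjunction step to strictly enlarge the independent family by a vector outside $H$. Once this invariant is maintained, the argument reduces to the two elementary facts that linear independence is preserved by invertible linear maps and that one may extend any independent family lying in a hyperplane by a vector outside it. Combining the three rules according to the recursive construction of $A$ then yields $|A| \leq |J| = \mathrm{wt}(f)$.
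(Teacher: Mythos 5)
Your proof is correct. The paper itself does not prove this theorem---it cites \cite[Theorem~11]{LW} without argument---so there is no in-paper proof to compare against; your argument is in fact the original van Lint--Wilson argument. One encodes the support $J$ of $f$ via the evaluation vectors $v(\theta)=(\theta^j)_{j\in J}$, identifies $S$ with the pullback of the hyperplane $H=\mathbf{a}^{\perp}$ under $\theta\mapsto v(\theta)$, and shows by induction along the recursive construction of $A$ that $\{v(\theta):\theta\in A\}$ is linearly independent, so $|A|\leq|J|=\mathrm{wt}(f)$. Two small points are worth making precise. First, the induction should be read as structural induction on the derivation of $A$ (equivalently, on the number of applications of the two generating rules), not on $|A|$ alone, since the scaling rule leaves $|A|$ unchanged; your parenthetical ``following the recursive construction'' indicates you intend exactly this. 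Second, the ``lives in $H$'' invariant you flag need not be tracked separately: the adjunction rule already has $A\subseteq S$ as an explicit hypothesis, so the whole family lying in $H$ is automatic at the moment you need it, and the scaling step only has to supply linear independence (via the invertible diagonal matrix $D=\mathrm{diag}((c^j)_{j\in J})$), not a placement claim.
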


The shift bound for a given $\lambda$-constacyclic code follows by considering the weights of its codewords $c(x)\in C$ and the independent sets with respect to subsets of its zero set $L$. Observe that, in this case, the universe of the independent sets is $\Omega$, not the extension field $\F$, because all of the possible roots of the codewords are contained in $\Omega$. Moreover, we choose $b$ from $\Omega \setminus P$ in Condition (2) above, where $P\subseteq L$, and $c$ in Condition (3) is of the form $\xi^k\in\F\setminus \{0\}$, for some $0\leq k\leq m-1$. 

\begin{cor}\label{shift remark}
The BCH and the HT bounds for a given $\lambda$-constacyclic code $C$ can be obtained from the shift bound as follows:
	\begin{enumerate}
	\item[i.] The set $B_{\delta} :=\{\alpha\xi^{e+zn} : 0\leq z \leq\delta-1\}$ is independent with respect to the consecutive set $E$ in (\ref{cons zero set}) and $d(C_E)\geq |B_{\delta}|=\delta$.\vspace{5pt}
	\item[ii.] Let $D = \{\alpha\xi^{e + z  n_1 + y  n_2} : 0\leq z \leq \delta-2, 0\leq y\leq s\}$, for integers $ e \geq 0$, $\delta \geq 2$ and positive integers $s, n_1$ and $n_2$ such that $\gcd(m, n_1) = 1$ and $\gcd(m, n_2) < \delta$. Then, for any fixed $\zeta\in \{0,\ldots,\delta-2\}$, the set 
\begin{multline*}
A_{\zeta}:=\{\alpha\xi^{e + z  n_1} :	0\leq z \leq\delta-2\}\cup \\ \{\alpha\xi^{e + {\zeta}  n_1 + y  n_2} : 0\leq y \leq s+1\}
\end{multline*} 
is independent with respect to $D$ and $d(C_D)\geq\delta+s$.
	\end{enumerate}
\end{cor}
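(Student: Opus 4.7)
The plan is to invoke Theorem~\ref{shift bound} after exhibiting, in each of (i) and (ii), an explicit derivation of the asserted set as an independent set with respect to the prescribed subset of $L$. In both cases the anchors belong to $\Omega$ and the multipliers are powers of $\xi$, matching the constacyclic specialization of the shift bound described just above the statement.

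For part (i) I would mimic the classical BCH-from-shift-bound argument. Start with $A_0=\emptyset\subseteq E$, and at every subsequent step take $b_{k-1}:=\alpha\xi^{e+(\delta-1)n}\in\Omega\setminus E$ and $c_{k-1}:=\xi^{-n}$. A short induction on $k$ gives $A_k=\{\alpha\xi^{e+zn}:\delta-k\le z\le\delta-1\}$ for $1\le k\le\delta$, so $\xi^{-n}A_k=\{\alpha\xi^{e+zn}:\delta-k-1\le z\le\delta-2\}\subseteq E$ whenever $k\le\delta-1$. Hence Rule~3 applies at every stage, Rule~2 applies with the anchor $\alpha\xi^{e+(\delta-1)n}\notin E$, and after $\delta$ iterations we reach $A_\delta=B_\delta$. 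Theorem~\ref{shift bound} then yields $d(C_E)\ge|B_\delta|=\delta$.

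For part (ii) the plan is to use the same framework with a longer and more delicate schedule of multipliers. Let $V_{s+1}:=\alpha\xi^{e+\zeta n_1+(s+1)n_2}$ denote the unique element of $A_\zeta$ that lies outside $D$; the hypotheses $\gcd(m,n_1)=1$ and $\gcd(m,n_2)<\delta$ ensure both that $V_{s+1}\in\Omega\setminus D$ and that the elements of $A_\zeta$ are pairwise distinct. I would use $V_{s+1}$ as the sole anchor ($b_k=V_{s+1}$ for every $k$) and choose multipliers $c_j\in\langle\xi\rangle$ so that, in the final set $A_{\delta+s}$, the cumulative partial products $c_{k+1}\cdots c_{\delta+s-1}$ applied to $V_{s+1}$ enumerate exactly the $\delta+s$ elements of $A_\zeta$, while every intermediate $c_jA_j$ stays inside $D$. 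A natural first phase sets $c_j=\xi^{-n_2}$ for $j=1,\ldots,s+1$, producing $A_{s+1}=\{V_1,V_2,\ldots,V_{s+1}\}$ by the same column-descent argument as in Part~(i). The second phase then uses shifts in $\langle\xi^{n_1},\xi^{n_2}\rangle$ to sweep fresh copies of $V_{s+1}$ across the base row of $D$, producing $V_0=H_\zeta$ together with the remaining horizontal elements $H_z$ for $z\in\{0,\ldots,\delta-2\}\setminus\{\zeta\}$.

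The main obstacle will be the bookkeeping in this second phase: at every step I must guarantee simultaneously that (a) $c_jA_j\subseteq D$, so that every current element is carried into the HT rectangle, and (b) the cumulative shift image of each previously inserted copy of $V_{s+1}$ lands on a not-yet-enumerated target of $A_\zeta$. The hypotheses $\gcd(m,n_1)=1$ (multiplication by $\xi^{n_1}$ acts bijectively on the rows of $\Omega$) and $\gcd(m,n_2)<\delta$ (columns of $D$ remain distinct modulo $m$) are used precisely to ensure that such a schedule exists in $\langle\xi\rangle$. Once the construction has been checked, Theorem~\ref{shift bound} gives $d(C_D)\ge|A_\zeta|=\delta+s$.
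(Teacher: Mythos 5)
Part~(i) of your proposal is exactly the paper's argument: same anchor $\alpha\xi^{e+(\delta-1)n}$, same multiplier $\xi^{-n}$, same invariant $A_k=\{\alpha\xi^{e+zn}:\delta-k\le z\le\delta-1\}$. That part is fine.

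Part~(ii) has a genuine gap, and it is not merely ``bookkeeping.'' Your plan commits to the \emph{single} anchor $V_{s+1}=\alpha\xi^{e+\zeta n_1+(s+1)n_2}$ and opens with the vertical descent $c_j=\xi^{-n_2}$, which produces $A_{s+1}=\{V_1,\ldots,V_{s+1}\}$: an arithmetic progression of $s+1$ points spaced by $\xi^{n_2}$, exactly one of which ($V_{s+1}$) lies outside $D$. At the very next step you must find a nonzero $c$ with $cA_{s+1}\subseteq D$ and then re-append $V_{s+1}$, so $A_{s+2}$ consists of a full column of $s+1$ consecutive $\xi^{n_2}$-spaced points inside $D$ together with $V_{s+1}$, and the ``vertical span'' of $A_{s+2}$ is $s+2$. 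But every column of the HT rectangle $D$ contains only $s+1$ consecutive $\xi^{n_2}$-translates of a corner point. Consequently, for \emph{no} nonzero $c$ (whether in $\langle\xi^{n_1},\xi^{n_2}\rangle$ or just in $\langle\xi\rangle$) can $cA_{s+2}\subseteq D$ in general: writing $c=\xi^k$, forcing the $s+1$-point column into $D$ pins $k$ modulo $n_2$-translation, and then the image of $V_{s+1}$ necessarily lands at height $s+1$, outside $D$. The hypotheses $\gcd(m,n_1)=1$ and $\gcd(m,n_2)<\delta$ control distinctness of grid points; they do not create the extra wrap-around needed to rescue a column that is one point too tall. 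So the ``schedule'' you hope exists in phase~2 does not.

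The missing idea is that the construction needs \emph{two} anchors. The paper's proof first introduces $a:=\alpha\xi^{e+(\delta-1)n_1+sn_2}\in\Omega\setminus D$ and uses $\delta-1$ horizontal steps with multiplier $\xi^{-n_1}$ to build a full \emph{row} of $D$ at height $y=s$; only then does it switch to the anchor $b_\zeta=V_{s+1}$ and descend vertically with $\xi^{-n_2}$ for $s$ more steps. The invariant maintained is an L-shape: a row of $\delta-1$ points at some height $y\le s$, plus a partial column at $z=\zeta$ rising from just above that row to $y=s+1$. Shifting by $\xi^{-n_2}$ moves the entire L down by one, keeping the row at $y\ge 0$ and the column within $[0,s]$, so membership in $D$ is preserved at every step; re-appending $b_\zeta$ grows the column by one. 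That is the mechanism that your single-anchor plan cannot reproduce, and it is what makes the bound $\delta+s$ attainable.
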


\begin{proof}
\begin{enumerate}
\item[i.] We construct a sequence $B_0, B_1, \ldots, B_{\delta}\subseteq\Omega$ of independent sets with respect to $E$ as follows. 
\begin{eqnarray*}
B_0&=&\emptyset,\\
B_1&=&B_0 \cup \{\alpha\xi^{e+(\delta-1)n}\}=\{\alpha\xi^{e+(\delta-1)n}\},\\
B_2&=&\xi^{-n} B_1 \cup \{\alpha\xi^{e+(\delta-1)n}\}\\
&=&\{\alpha\xi^{e+(\delta-2)n},\alpha\xi^{e+(\delta-1)n}\},\\
&\vdots&\\
B_{\delta}&=&\xi^{-n} B_{\delta-1} \cup \{\alpha\xi^{e+(\delta-1)n}\}\\
&=&\{\alpha\xi^{e}, \alpha\xi^{e+n}, \ldots, \alpha\xi^{e+(\delta-1)n}\}.
\end{eqnarray*}
Since there is no element $\xi^{k}\in\F\setminus \{0\}$ such that $\xi^k B_{\delta}\subseteq E$, for all $k\in \{0,\ldots,m-1\}$, the process stops. By Theorem \ref{shift bound}, $|B_{\delta}|=\delta$ implies $d(C_E)\geq \delta$.\vspace{10pt}

\item[ii.] Let $a:=\alpha\xi^{e+(\delta-1)n_1+sn_2}$ and $b_{\zeta}:=\alpha\xi^{e+{\zeta}n_1+(s+1)n_2}$, for some fixed $\zeta\in \{0,\ldots,\delta-2\}$. Note that $a,b_{\zeta}\in\Omega\setminus D$ and consider the following sequence of independent sets with respect to $D$.
\begin{eqnarray*}
A_0&=&\emptyset,\\
A_1&=&A_0 \cup \{a\}=\{\alpha\xi^{e+(\delta-1)n_1+sn_2}\},\\
A_2&=&\xi^{-n_1}A_1 \cup \{a\}\\
&=&\{\alpha\xi^{e+(\delta-2)n_1+sn_2}, \alpha\xi^{e+(\delta-1)n_1+sn_2}\},\\
&\vdots&\\
A_{\delta-1}&=&\xi^{-n_1}A_{\delta-2} \cup \{a\}\\
&=&\{\alpha\xi^{e+n_1+sn_2}, \alpha\xi^{e+2n_1+sn_2}, \ldots,\\ &&\hspace{105pt} \alpha\xi^{e+(\delta-1)n_1+sn_2}\},\\
A_{\delta}&=&\xi^{-n_1}A_{\delta-1} \cup \{b_{\zeta}\}\\
&=&\{\alpha\xi^{e+sn_2}, \alpha\xi^{e+n_1+sn_2}, \ldots, \\ &&\hspace{30pt} \alpha\xi^{e+(\delta-2)n_1+sn_2}, \alpha\xi^{e+\zeta n_1+(s+1)n_2}\},\\
A_{\delta+1}&=&\xi^{-n_2}A_{\delta} \cup \{b_{\zeta}\}\\
&=&\{\alpha\xi^{e+(s-1)n_2},\ldots, \alpha\xi^{e+(\delta-2)n_1+(s-1)n_2},\\ &&\hspace{45pt}  \alpha\xi^{e+{\zeta} n_1+sn_2}, \alpha\xi^{e+{\zeta} n_1+(s+1)n_2}\},\\
&\vdots&\\		
A_{\delta+s}&=&\xi^{-n_2}A_{\delta+s-1} \cup \{b_{\zeta}\}=A_{\zeta}.
\end{eqnarray*}
\normalsize
We have no $\xi^{k}\in\F\setminus \{0\}$ such that $\xi^kA_z\subseteq D$, for all $k\in \{0,\ldots,m-1\}$. Thus, we obtain $d(C)\geq |A_{\zeta}|=\delta+s$.\qedhere\hspace{20pt}
\end{enumerate}	
\end{proof}

\begin{rem}\label{Roos-shift rem}
Let $C$ be a nontrivial $\lambda$-constacyclic code of length $m$ over some subfield of $\F$ with zero set $L\subseteq \Omega$. Then, the Roos bound corresponds to the choice \begin{multline*}\mathcal{B}_4(C):=\{(MN,|M|+d_N-1) :  \mbox{there\ exists\ a\ consecutive}\\ \mbox{set\ } M'\subseteq\Omega\  \mbox{such\ that\ }  M'\supseteq M  \mbox{\ with\ } |M'|\leq |M|+d_N-2\},
\end{multline*} for any $\emptyset\neq MN\subseteq L$ with $MN=\frac{1}{\alpha}\bigcup_{\varepsilon\in M} \varepsilon N$. On the other hand, if we pick \begin{multline*}\mathcal{B}_5(C):=\{(T_A,|A|) : A\subseteq\Omega \mbox{\ is\ independent}\\ \mbox{ with\ respect\ to\ } L,\ T_A=A\cap L\},
\end{multline*} then we obtain the shift bound.
\end{rem}

\subsection{Quasi-twisted codes and their concatenated structure} \label{QT sect}

We assume the notation above and let $\ell$ be a positive integer. A linear code $C\subseteq\Fq^{m\ell}$ is called a $\lambda$-quasi-twisted ($\lambda$-QT) code of index $\ell$ and co-index $m$ if it is invariant under the $\lambda$-constashift of codewords by $\ell$ positions and $\ell$ is the least positive integer with this property. In particular, if $\ell=1$, then $C$ is a $\lambda$-constacyclic code, and if $\lambda = 1$ or $q=2$, then $C$ is a QC code of index $\ell$. If we view a codeword $\mathbf{c}\in C$ as an $m \times \ell$ array
\begin{equation}\label{array}
\mathbf{c}=\left(
  \begin{array}{ccc}
    c_{00} & \ldots & c_{0,\ell-1} \\
    \vdots &  & \vdots \\
    c_{m-1,0} & \ldots & c_{m-1,\ell-1} \\
  \end{array}
\right),\end{equation} then being invariant under $\lambda$-constashift by $\ell$ positions in $\Fq^{m\ell}$ corresponds to being closed under row $\lambda$-constashift in $\Fq^{m\times\ell}$.

If $T_{\lambda}$ denotes the $\lambda$-constashift operator on $\Fq^{m\ell}$, we denote its action on $\mathbf{v} \in \F_q^{m\ell}$ by $T_{\lambda}\cdot \mathbf{v}$. Then $\Fq^{m\ell}$ has an $\Fq[x]$-module structure given by the multiplication
\vspace{-5pt}
 \begin{eqnarray*}
 \Fq[x] \times \F_{q}^{m\ell} & \longrightarrow & \ \F_{q}^{m\ell}\\
 (a(x),\mathbf{v})\ & \longmapsto & a(T_{\lambda}^\ell)\cdot \mathbf{v} .
\end{eqnarray*}
Note that, for $a(x)=x^m-\lambda$, we have $a(T_{\lambda}^\ell)\cdot \mathbf{v} =
 (T_{\lambda}^{m\ell})\cdot \mathbf{v} - \lambda \mathbf{v}=0.$ Hence, a multiplication by elements of $R$ is induced on $\F_q^{m\ell}$ and it can be viewed as an $R$-module. Therefore, a $\lambda$-QT code $C\subseteq \F_q^{m\ell}$ of index $\ell$ is an $R$-submodule of $\F_q^{m\ell}$. 

For an element $\mathbf{c}\in \Fq^{m\times \ell} \simeq \Fq^{m\ell}$, which is represented as in (\ref{array}), we associate an element of $R^\ell$ (cf. (\ref{identification-1}))
\begin{equation} \label{associate-1}
\mathbf{c}(x):=(c_0(x),c_1(x),\ldots ,c_{\ell-1}(x)) \in R^\ell ,
\end{equation}
where, for each $0\leq j \leq \ell-1$, 
\begin{equation}\label{columns} 
c_j(x):= c_{0,j}+c_{1,j}x+c_{2,j}x^2+\cdots + c_{m-1,j}x^{m-1} \in R .
\end{equation} 
The isomorphism $\phi$ in (\ref{identification-1}) extends naturally to
\begin{equation}
\begin{array}{lll} \label{identification-2}
\Phi: \hspace{2cm} \F_q^{m\ell} & \longrightarrow & R^\ell  \\
\mathbf{c}=\left(
  \begin{array}{ccc}
    c_{00} & \ldots & c_{0,\ell-1} \\
    \vdots &  & \vdots \\
    c_{m-1,0} & \ldots & c_{m-1,\ell-1} \\
  \end{array}
\right) & \longmapsto &  \begin{array}{lcl}\vspace{14pt} \\ \hspace{-2pt}\mathbf{c}(x)\\
      \hspace{4pt}\shortparallel \\
    \hspace{-30pt} (c_0(x), \ldots, c_{\ell-1}(x))\\
  \end{array}\\
\hspace{1.2cm}\begin{array}{ccccc}
      \downarrow &  & \downarrow & & \\
     c_0(x) & \ldots & c_{\ell-1}(x) & &\\
  \end{array} & &
\end{array}\\
\end{equation}
Observe that the row $\lambda$-constashift invariance in $\Fq^{m\times\ell}$ corresponds to being closed under componentwise multiplication by $x$ in $R^\ell$. Therefore, the map $\Phi$ above yields an $R$-module isomorphism and any $\lambda$-QT code $C\subseteq \F_q^{m\ell}\simeq \Fq^{m\times\ell}$ of index $\ell$ can be viewed as an $R$-submodule of $R^\ell$.

We now describe the decomposition of a $\lambda$-QT code over $\F_q$ into shorter codes over (field) extensions of $\F_q$. We refer the reader to \cite{Y} for the respective proofs of the following assertions and for the treatment that includes the general repeated-root case ({\it i.e.}, when $\gcd(m,q)\geq 1$). We assume that $x^m-\lambda$ factors into irreducible polynomials in $\F_q[x]$ as
\begin{equation}\label{irreducibles}
x^m-\lambda=f_1(x)f_2(x)\cdots f_s(x).
\end{equation}
Since $m$ is relatively prime to $q$, there are no repeating factors in (\ref{irreducibles}). By the Chinese Remainder Theorem (CRT), we have the following ring isomorphism
\begin{equation} \label{CRT-1}
R\cong \bigoplus_{i=1}^{s} \F_q[x]/\langle f_i(x)\rangle .
\end{equation}
For each $i\in\{1,2,\ldots,s\}$, let $u_i$ be the smallest nonnegative integer such that $f_i(\alpha\xi^{u_i})=0$.
The $\F_{q}$-conjugacy class (or the $q$-cyclotomic class) containing $\alpha\xi^{u_i}$ in $\Omega$ is defined as
\begin{align}\label{cyc coset}
\big[\alpha\xi^{u_i}\big] &= \left\{ \alpha\xi^{u_i},\alpha^q\xi^{qu_i},\alpha^{q^2}\xi^{q^2u_i},\ldots,\alpha^{q^{e_i-1}}\xi^{q^{e_i-1}u_i}\right\} \notag \\ 
&\subseteq \Omega,
\end{align}
where $e_i=\mbox{deg}(f_i)$ and therefore $\big[\alpha\xi^{u_i}\big]$ contains all roots of the irreducible polynomial $f_i$, for each $i$. Note that $\Omega$ is a disjoint union of such $\Fq$-conjugacy classes.

Since the $f_i(x)$'s are irreducible, the direct summands in (\ref{CRT-1}) can be viewed as field extensions of $\F_q$, obtained by adjoining the element $\alpha\xi^{u_i}$. If we set $\E_i:=\Fq(\alpha\xi^{u_i})\cong \Fq[x]/\langle f_i(x) \rangle$, for each $1\leq i \leq s$, then $\E_i$ is an intermediate field between $\F$ and $\Fq$ such that $\big[\E_i : \F_q\big]=e_i$ and we have (cf. (\ref{CRT-1}))
\begin{eqnarray} \label{CRT-2}
R & \simeq & \E_1 \oplus \cdots \oplus \E_{s}
 \nonumber \\
a(x) & \mapsto & \left(a(\alpha\xi^{u_1}),\ldots ,a(\alpha\xi^{u_s}) \right).
\end{eqnarray}
This implies that
\begin{eqnarray} \label{CRT-3}
R^{\ell}&\simeq & \E_1^{\ell} \oplus \cdots  \oplus \E_{s}^{\ell}
\nonumber \\ 
\mathbf{a}(x) & \mapsto & \left(\mathbf{a}(\alpha\xi^{u_1}),\ldots ,\mathbf{a}(\alpha\xi^{u_s}) \right),
\end{eqnarray}
where $\mathbf{a}(\delta)$ denotes the componentwise evaluation at $\delta\in\F$, for any $\mathbf{a}(x)=\bigl(a_{0}(x),\ldots ,a_{\ell-1}(x)\bigr)\in R^{\ell}$. Hence, a $\lambda$-QT code $C\subseteq R^\ell$ can be viewed as an $(\E_1 \oplus \cdots \oplus \E_{s})$-submodule of $\E_1^{\ell} \oplus
\cdots  \oplus \E_{s}^{\ell}$ and it decomposes as
\begin{equation} \label{constituents}
C \simeq C_1\oplus \cdots  \oplus C_{s},
\end{equation}
where $C_i$ is a linear code in $\E_i^{\ell}$, for each $i$. These linear codes over various extensions of $\F_q$ are called the {\it constituents} of $C$ (see \cite[\S 7]{Y} for explicit examples).

Let $C\subseteq R^\ell$ be generated as an $R$-module by
\[
\left\{\bigl(a_{1,0}(x),\ldots ,a_{1,\ell-1}(x)\bigr),\ldots,\bigl(a_{r,0}(x),\ldots ,a_{r,\ell-1}(x)\bigr)\right\}.
\]
Then, for $1\leq i \leq s$, we have
\begin{equation}\label{explicit constituents}
 C_i  = \Span_{\E_i}\bigl\{\bigl(a_{b,0}(\alpha\xi^{u_i}),\ldots, a_{b,\ell-1}(\alpha\xi^{u_i})\bigr): 1\leq b \leq r \bigr\}.   
\end{equation}

Note that each field $\E_i$ is isomorphic to a minimal $\lambda$-constacyclic code of length $m$ over $\F_q$; namely, the $\lambda$-constacyclic code in $\F_q^m$ with the irreducible check polynomial $f_i(x)$. If we denote by $\theta_i$ the generating primitive idempotent (see \cite[Theorem 1]{LG}) for the minimal $\lambda$-constacyclic code $\langle \theta_i \rangle$ in consideration, then the isomorphism is given by the maps
\begin{eqnarray} \label{isoms}\hspace{-10pt}
\begin{array}{ccl} \varphi_i:\langle \theta_i \rangle
& \longrightarrow & \E_i \\ \hspace{0.5cm} a(x)& \longmapsto &
a(\alpha\xi^{u_i}) \end{array}
&\hspace{-16pt} & \begin{array}{ccl} \psi_i: \E_i & \longrightarrow & \langle \theta_i \rangle \\
\hspace{0.5cm} \delta & \longmapsto & \sum\limits_{k=0}^{m-1} a_kx^k,
\end{array}
\end{eqnarray}
where
$$a_k=\frac{1}{m} \Tr_{\E_i/\F_q}(\delta\alpha^{-k}\xi^{-ku_i}).\vspace{8pt}$$
Observe that, for each $i\in\{1,\ldots,s\}$, the maps $\varphi_i$ and $\psi_i$ are inverses of each other, regardless of the choice of the representative in the $\Fq$-conjugacy class $\big[\alpha\xi^{u_i}\big]$, since $\Tr_{\E_i/\F_q}(\epsilon^q)=\Tr_{\E_i/\F_q}(\epsilon)$, for any $\epsilon\in\E_i$.

If $\mathfrak{C}_i$ is a linear code in $\E_i^{\ell}$, for each $i$, then we denote its concatenation with $\langle \theta_i \rangle$ by $\langle \theta_i \rangle \Box \mathfrak{C}_i$ and the concatenation is carried out by the map $\psi_i$. Here, $\langle \theta_i \rangle$ and $\mathfrak{C}_i$ are called the inner and outer codes of the concatenation, respectively.

\begin{thm} \cite[Theorem 2]{LG} \label{Jensen's thm}\hfill
\begin{enumerate}
\item[i.] Let $C$ be an $R$-submodule of $R^\ell$ ({\it i.e.}, a $q$-ary $\lambda$-QT code). Then, for some subset $\mathcal{I}$ of $\{1,\ldots ,s\}$, there exist linear codes $\mathfrak{C}_i\subseteq\E_i^{\ell}$ such that $$C=\displaystyle\bigoplus_{i\in \mathcal{I}} \langle \theta_i \rangle \Box \mathfrak{C}_i.$$
\item[ii.] Conversely, let $\mathfrak{C}_i$ be a linear code over $\E_i$ of length $\ell$, for each $i\in \mathcal{I} \subseteq \{1,\ldots ,s\}$. Then, $$C=\displaystyle\bigoplus_{i\in \mathcal{I}} \langle \theta_i \rangle \Box \mathfrak{C}_i$$ is a $q$-ary $\lambda$-QT code of length $m\ell$ and index $\ell$.
\end{enumerate}
\end{thm}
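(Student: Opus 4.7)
The plan is to exploit the CRT decomposition (\ref{CRT-3}) of $R^\ell$ as an $R$-module, combined with the identification of each field $\E_i$ with the minimal constacyclic code $\langle \theta_i \rangle$ through the mutually inverse maps $\varphi_i,\psi_i$ recorded in (\ref{isoms}). The orthogonality relations $\theta_i\theta_j=0$ for $i\neq j$, $\theta_i^2=\theta_i$, and $\sum_{i=1}^s \theta_i=1$ in $R$ are crucial: under (\ref{CRT-3}), multiplication by $\theta_i$ on $R^\ell$ becomes projection onto the $i$-th summand $\E_i^\ell$, so any $R$-submodule $M\subseteq R^\ell$ automatically splits as $M=\bigoplus_{i=1}^s \theta_i M$, with $\theta_i M$ a $\langle\theta_i\rangle$-module sitting inside $\langle\theta_i\rangle^\ell$.

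For part (i), I would transport $C$ to $\widetilde{C}\subseteq\bigoplus_{i=1}^s \E_i^\ell$ through (\ref{CRT-3}). Because each $\E_i$ is a field and $\bigoplus_i \E_i$ acts componentwise on $\bigoplus_i \E_i^\ell$, the submodule $\widetilde{C}$ is forced to split as $\widetilde{C}=\bigoplus_{i=1}^s C_i$, where each $C_i$ is an $\E_i$-subspace of $\E_i^\ell$. Setting $\mathcal{I}:=\{i : C_i\neq 0\}$ and $\mathfrak{C}_i:=C_i$ for $i\in\mathcal{I}$, what remains is to show that pulling a single summand $C_i$ back to $R^\ell$ via (\ref{CRT-3}) produces exactly the concatenation $\langle\theta_i\rangle\Box\mathfrak{C}_i$. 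This reduces to the componentwise statement that, for every $\delta\in\E_i$, the preimage of $(0,\ldots,0,\delta,0,\ldots,0)$ under (\ref{CRT-2}) equals $\psi_i(\delta)\in\langle\theta_i\rangle$, which is exactly what the trace-based formula for $\psi_i$ in (\ref{isoms}) is engineered to guarantee, together with the fact that $\varphi_i$ and $\psi_i$ are mutual inverses.

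For part (ii), I would run the correspondence above in reverse. The image of $\mathfrak{C}_i$ under componentwise $\psi_i$, which by definition is $\langle\theta_i\rangle\Box\mathfrak{C}_i$, lies inside $\langle\theta_i\rangle^\ell\subseteq R^\ell$, and the sum $C:=\bigoplus_{i\in\mathcal{I}}\langle\theta_i\rangle\Box\mathfrak{C}_i$ is direct because the $\langle\theta_i\rangle$'s are pairwise annihilating ideals whose idempotents are orthogonal. To check that $C$ is a $\lambda$-QT code of length $m\ell$ and index $\ell$, it suffices to verify that $C$ is closed under multiplication by $x$ on each coordinate of $R^\ell$; this holds because $\langle\theta_i\rangle$ is an ideal of $R$, and multiplication by $x$ in $R$ corresponds, under $\varphi_i$, to multiplication by $\alpha\xi^{u_i}\in\E_i$, which preserves the $\E_i$-subspace $\mathfrak{C}_i$. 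Transporting back through $\psi_i$ componentwise, the $\lambda$-constashift by $\ell$ positions preserves each $\langle\theta_i\rangle\Box\mathfrak{C}_i$ and hence $C$.

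The main technical obstacle is pinning down the commutativity of the diagram relating the CRT isomorphism (\ref{CRT-3}) to the concatenation operation built from the $\psi_i$'s; once one checks at the level of a single coordinate that the inverse of the $i$-th component of (\ref{CRT-2}), restricted to $\E_i$, agrees with $\psi_i$, both parts reduce to routine bookkeeping using the orthogonality of the $\theta_i$'s and the $\E_i$-linearity inside each summand. Everything else, including the observation that the choice of representative $\alpha\xi^{u_i}$ in the $\F_q$-conjugacy class $[\alpha\xi^{u_i}]$ does not affect $\psi_i$ (via the trace invariance $\Tr_{\E_i/\Fq}(\epsilon^q)=\Tr_{\E_i/\Fq}(\epsilon)$), is already guaranteed by the framework set up before the theorem.
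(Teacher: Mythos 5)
The paper does not prove this statement; it is cited verbatim from \cite[Theorem 2]{LG}, and the only commentary the paper gives is the surrounding development of (\ref{CRT-2}), (\ref{CRT-3}), and (\ref{isoms}), which are exactly the tools your sketch uses. So there is no in-paper argument to compare against; judged on its own, your reconstruction is correct and is the standard CRT proof.

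Two small points worth flagging. First, for part (i) you reduce to the claim that the CRT-preimage of $(0,\ldots,0,\delta,0,\ldots,0)$ is $\psi_i(\delta)$; you should note explicitly that this uses both that $\varphi_i\circ\psi_i=\mathrm{id}$ (giving the $i$-th coordinate) \emph{and} that $\psi_i(\delta)\in\langle\theta_i\rangle$ is divisible by every $f_j$ with $j\neq i$ (killing all other coordinates). The second half is what the trace formula actually buys, and it is what makes the diagram commute. Second, for part (ii) your argument establishes that $C$ is an $R$-submodule of $R^\ell$, hence invariant under the $\lambda$-constashift by $\ell$ positions; strictly speaking this gives that the index \emph{divides} $\ell$ rather than that it \emph{equals} $\ell$. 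The cited Theorem 2 of \cite{LG} (and the present paper) uses the relaxed convention, so this is not a defect of your sketch, but it is worth being aware that minimality of the index is not and cannot be guaranteed from the hypotheses alone (take every $\mathfrak{C}_i$ to be a repetition-type code, for instance).
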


Moreover, each constituent $C_i$ in (\ref{constituents}) is equal to the outer code $\mathfrak{C}_i$ in the concatenated structure, for each $i$ (see \cite[Theorem 3]{LG}).

By (\ref{isoms}) and Theorem \ref{Jensen's thm}, an arbitrary codeword $\mathbf{c}\in C$ can be written as an $m\times \ell$ array in the form (see \cite{LG})

\begin{equation}\label{QT tr}\setlength\arraycolsep{0.1pt}
\mathbf{c}=\frac{1}{m}\left(\begin{array}{c}
\left(\sum\limits_{i=1}^{s}\Tr_{\E_{i}/\F_q}\left(\kappa_{i,t}\alpha^{-0}\xi^{-0u_{i}} \right) \right)_{0\leq t \leq \ell-1} \\
\left(\sum\limits_{i=1}^{s}\Tr_{\E_{i}/\F_q}\left(\kappa_{i,t}\alpha^{-1}\xi^{-u_{i}} \right) \right)_{0\leq t \leq \ell-1} \\
 \vdots \\
\left(\sum\limits_{i=1}^{s}\Tr_{\E_{i}/\F_q}\left(\kappa_{i,t}\alpha^{-(m-1)}\xi^{-(m-1)u_{i}}
\right) \right)_{0\leq t \leq \ell-1}
\end{array} \right),\\
\end{equation} 
where $\mathbf{\kappa}_i=(\kappa_{i,0},\ldots ,\kappa_{i,\ell-1}) \in C_i$, for all $i$. Since $mC=C$, every codeword in $C$ can still be written in the form of (\ref{QT tr}) with the constant $\frac{1}{m}$ removed. 

Note that the trace representation in (\ref{QT tr}) involves traces down to $\Fq$ from various extensions. Recall that $\F$ is the splitting field of $x^m-\lambda$ and each $\E_i$ is an intermediate field extension between $\F$ and $\Fq$, for $1\leq i\leq s$. The next lemma enables us to rewrite the traces in (\ref{QT tr}) from $\F$ to $\Fq$, instead of treating them over different extensions.

\begin{lem}\cite[Lemma 4.1]{GO}\label{trace lem}
Let $\Fq\subset\K\subset\LL$ be field extensions. If $b\in\LL$ is an element with $\Tr_{\LL/\K}(b)=1$, then we have $\Tr_{\LL/\F_q}(b\mu)=\Tr_{\K/\F_q}(\mu)$, for any $\mu\in\K$.
\end{lem}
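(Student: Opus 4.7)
The plan is to exploit the transitivity of the trace together with its $\K$-linearity, both of which are standard facts about finite field extensions that I will use as black boxes.

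First, since $\Fq \subset \K \subset \LL$ is a tower of finite extensions, the transitivity of trace gives
\[
\Tr_{\LL/\F_q}(b\mu) \;=\; \Tr_{\K/\F_q}\bigl(\Tr_{\LL/\K}(b\mu)\bigr).
\]
So the entire statement reduces to analysing the inner trace $\Tr_{\LL/\K}(b\mu)$.

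Next, I would use the hypothesis $\mu \in \K$: because $\Tr_{\LL/\K}$ is $\K$-linear, any factor lying in $\K$ can be pulled outside, which yields
\[
\Tr_{\LL/\K}(b\mu) \;=\; \mu \cdot \Tr_{\LL/\K}(b).
\]
Invoking the normalisation assumption $\Tr_{\LL/\K}(b)=1$ then collapses the right-hand side to $\mu$. Substituting this back into the transitivity identity gives $\Tr_{\LL/\F_q}(b\mu) = \Tr_{\K/\F_q}(\mu)$, which is exactly the desired conclusion.

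There is no real obstacle here; the argument is a two-line chain. The only point that one must be careful about is the scope of $\K$-linearity: it only allows us to extract the factor $\mu$ from the inner trace because $\mu\in\K$, and would fail if $\mu$ were merely an element of $\LL$. Consequently, the hypothesis that $\mu$ lies in the intermediate field $\K$ is essential and is where both hypotheses of the lemma ($\Tr_{\LL/\K}(b)=1$ and $\mu\in\K$) interact.
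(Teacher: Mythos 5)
Your proof is correct and is the standard argument: transitivity of the trace followed by $\K$-linearity of $\Tr_{\LL/\K}$ and the normalisation $\Tr_{\LL/\K}(b)=1$. The paper itself does not give a proof of this lemma (it is simply cited from \cite[Lemma 4.1]{GO}), but your two-step chain is exactly the expected argument, and your remark about where the hypothesis $\mu\in\K$ enters is the right thing to flag.
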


Using Lemma \ref{trace lem}, we can rewrite (\ref{QT tr}), without the constant $\frac{1}{m}$, as (cf. (4.3) in \cite{GO})
\begin{equation}\label{QT tr-2}\setlength\arraycolsep{0.1pt}
\mathbf{c}=\left(\begin{array}{c}
\left(\Tr_{\F/\F_q}\left(\sum\limits_{i=1}^{s}b_i\kappa_{i,t}\alpha^{-0}\xi^{-0u_{i}} \right) \right)_{0\leq t \leq \ell-1} \\
\left(\Tr_{\F/\F_q}\left(\sum\limits_{i=1}^{s}b_i\kappa_{i,t}\alpha^{-1}\xi^{-u_{i}} \right) \right)_{0\leq t \leq \ell-1} \\
 \vdots \\
\left(\Tr_{\F/\F_q}\left(\sum\limits_{i=1}^{s}b_i\kappa_{i,t}\alpha^{-(m-1)}\xi^{-(m-1)u_{i}}
\right) \right)_{0\leq t \leq \ell-1}
\end{array} \right),
\end{equation} 
where $b_1,\ldots,b_s\in\F$ are such that $\Tr_{\F/\E_i}(b_i)=1$, for each $i\in\{1,\ldots,s\}$. Such $b_i$'s exist since the trace map is onto.

Jensen derived a minimum distance bound in \cite[Theorem 4]{J}, which is valid for all concatenated codes ({\it i.e.}, the inner and outer codes can be any linear code). Therefore, it applies to QT codes as well. We formulate the Jensen bound for QT codes as follows.
\begin{thm}  \label{Jensen bound}
Let $C\subseteq R^\ell$ be a $\lambda$-QT code with the concatenated structure $C=\bigoplus_{i\in \mathcal{I}} \langle \theta_i \rangle \Box C_i$, for some $\mathcal{I}\subseteq\{1,\ldots ,s\}$. Assume that $C_{i_1}, \ldots, C_{i_t}$ are the nonzero outer codes (constituents) of $C$, for $\{i_1,\ldots,i_t\} \subseteq \mathcal{I}$, such that $d(C_{i_1}) \leq d(C_{i_2})\leq \cdots \leq d(C_{i_t})$. Then we have
\begin{equation}\label{Jensen}
d(C) \geq \displaystyle \min_{1\leq r \leq t} \left\{ d(C_{i_r}) d(\langle\theta_{i_1}\rangle \oplus \cdots \oplus \langle\theta_{i_r}\rangle) \right\}.
\end{equation}
\end{thm}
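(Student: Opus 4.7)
The plan is to carry out the classical Jensen-type weight estimate on the concatenated decomposition $C = \bigoplus_{r=1}^{t} \langle \theta_{i_r} \rangle \Box C_{i_r}$ obtained from Theorem \ref{Jensen's thm} after discarding the indices in $\mathcal{I}$ with zero outer code. Fix a nonzero codeword $\mathbf{c} \in C$. The uniqueness of the direct-sum decomposition gives vectors $\kappa_{i_r} = (\kappa_{i_r,0}, \ldots, \kappa_{i_r,\ell-1}) \in C_{i_r}$, not all zero, such that the $j$-th column of $\mathbf{c}$ reads $c_j = \sum_{r=1}^{t} \psi_{i_r}(\kappa_{i_r,j}) \in \F_q^m$, where the $\psi_{i_r}$ are the inverse isomorphisms from (\ref{isoms}). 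Set $r^{*} := \max\{r : \kappa_{i_r} \neq \mathbf{0}_{\ell}\}$; for $r > r^{*}$ every contribution $\psi_{i_r}(\kappa_{i_r,j})$ vanishes, so each column $c_j$ lies in $D := \langle \theta_{i_1} \rangle \oplus \cdots \oplus \langle \theta_{i_{r^{*}}} \rangle$.

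The central structural input is that the idempotents $\theta_{i_1}, \ldots, \theta_{i_{r^{*}}}$ are pairwise orthogonal, since they correspond to distinct irreducible factors $f_{i_r}(x)$ of $x^m - \lambda$ through the CRT decomposition in (\ref{CRT-1})--(\ref{CRT-2}). Hence the sum $\langle \theta_{i_1} \rangle + \cdots + \langle \theta_{i_{r^{*}}} \rangle$ is honestly direct inside $\F_q^m$, and a vector in $D$ vanishes if and only if each of its idempotent components does. Combining this with the injectivity of $\psi_{i_{r^{*}}}$ shows that whenever $\kappa_{i_{r^{*}},j} \neq 0$, the $\langle \theta_{i_{r^{*}}} \rangle$-component $\psi_{i_{r^{*}}}(\kappa_{i_{r^{*}},j})$ of $c_j$ is nonzero, forcing $c_j \neq 0$.

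Now the counting step. Because $\kappa_{i_{r^{*}}}$ is a nonzero codeword of $C_{i_{r^{*}}}$, it has at least $d(C_{i_{r^{*}}})$ nonzero coordinates, which by the previous paragraph produce at least $d(C_{i_{r^{*}}})$ nonzero columns in $\mathbf{c}$; each such column lies in $D$ and so has weight at least $d(D) = d(\langle \theta_{i_1} \rangle \oplus \cdots \oplus \langle \theta_{i_{r^{*}}} \rangle)$. Summing column-by-column gives $\mbox{wt}(\mathbf{c}) \geq d(C_{i_{r^{*}}}) \cdot d(\langle \theta_{i_1} \rangle \oplus \cdots \oplus \langle \theta_{i_{r^{*}}} \rangle)$, which is one of the terms on the right-hand side of (\ref{Jensen}) and hence at least the minimum.

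The main subtlety, which the argument above is designed around, is the joint role played by the ordering $d(C_{i_1}) \leq \cdots \leq d(C_{i_t})$ and the choice of the maximum active index $r^{*}$: the distance of the inner code shrinks (or stays the same) as further summands $\langle \theta_{i_r} \rangle$ are added, so this must be evaluated over the correct partial direct sum $\bigoplus_{r \leq r^{*}}\langle \theta_{i_r} \rangle$, while the distance $d(C_{i_{r^{*}}})$ controls the number of nonzero columns. Once directness of the sum of minimal constacyclic codes and injectivity of the $\psi_{i_r}$ are invoked, the remaining column-weight accounting is routine.
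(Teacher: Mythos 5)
The paper does not supply a proof of Theorem~\ref{Jensen bound}; it simply restates Jensen's concatenation bound \cite[Theorem 4]{J} in the $\lambda$-QT setting and points to the reference. Your reconstruction is correct and is essentially Jensen's original argument: you isolate the maximal active summand $r^{*}$, use injectivity of $\psi_{i_{r^{*}}}$ together with the orthogonality of the primitive idempotents (equivalently, the directness of the CRT decomposition) to produce at least $d(C_{i_{r^{*}}})$ nonzero columns of $\mathbf{c}$, note that every column lies in $\langle\theta_{i_1}\rangle\oplus\cdots\oplus\langle\theta_{i_{r^{*}}}\rangle$, and sum column weights to land on the $r^{*}$-th term of the minimum in (\ref{Jensen}). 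The subtle point you flag — that the inner code in the bound must be the full partial direct sum up to $r^{*}$ rather than the span of only the active summands — is handled correctly, since the active indices form a subset of $\{1,\ldots,r^{*}\}$ and passing to the larger direct sum can only decrease the inner minimum distance, so the stated lower bound is still valid.
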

\vspace{7pt}

\subsection{Spectral theory for quasi-twisted codes}    \label{spec sect}

Lally and Fitzpatrick proved in \cite{LF} that every QC code has a polynomial generating set in the form of a reduced Gr{\"o}bner basis. We provide an easy adaptation of their findings for QT codes. 

Consider the ring homomorphism:
\begin{eqnarray}\label{embedding}
\Psi \ :\ \Fq[x]^{\ell} &\longrightarrow& R^{\ell} \\\nonumber
(\widetilde{f}_0(x), \ldots ,\widetilde{f}_{\ell-1}(x))  &\longmapsto & 
(f_0(x), \ldots ,f_{\ell-1}(x)),
\end{eqnarray}
which projects elements in $\Fq[x]^{\ell}$ onto $R^{\ell}$ in the obvious way. Given a $\lambda$-QT code $C\subseteq R^{\ell}$, it follows that the preimage $\widetilde{C}$ of $C$ in $\Fq[x]^{\ell}$ is an $\Fq[x]$-submodule containing $\widetilde{K} =\{(x^m-\lambda)\mathbf{e}_j : 0\leq j \leq \ell-1\}$, where each $\mathbf{e}_j$ denotes the standard basis vector of length $\ell$ with $1$ at the $j^{\rm th}$ coordinate and $0$ elsewhere. The tilde will represent throughout structures over $\Fq[x]$.

Since $\widetilde{C}$ is a submodule of the finitely generated free module $\F_q[x]^{\ell}$ over the principal ideal domain $\Fq[x]$ and contains $\widetilde{K}$, it has a generating set of the form $$\{\mathbf{u}_1,\ldots,\mathbf{u}_p, (x^m-\lambda)\mathbf{e}_0,\ldots,(x^m-\lambda)\mathbf{e}_{\ell-1}\},$$  where $p$ is a nonnegative integer and when $p>0$, $\mathbf{u}_b = (u_{b,0}(x),\ldots,u_{b,\ell-1}(x))\in \Fq[x]^{\ell}$, for each $b \in \{1,\ldots,p\}$. 
Hence, the rows of

$$\mathcal{G}=\left(\begin{array}{ccc}
    u_{1,0}(x) & \ldots & u_{1,\ell-1}(x) \\
    \vdots &  & \vdots \\
    u_{p,0}(x) & \ldots & u_{p,\ell-1}(x) \\
     x^m-\lambda & \ldots & 0 \\
    \vdots & \ddots & \vdots \\
    0 & \ldots & x^m-\lambda \\
  \end{array}
\right)$$
generate $\widetilde{C}$. By using elementary row operations, we triangularise $\mathcal{G}$ so that another equivalent generating set is obtained from the rows of an upper-triangular $\ell \times \ell$ matrix over $\Fq[x]$ as:
\begin{equation}\label{PGM}
\widetilde{G}(x)=\left(\begin{array}{cccc}
    g_{0,0}(x) & g_{0,1}(x) & \ldots & g_{0,\ell-1}(x) \\
    0 & g_{1,1}(x) & \ldots & g_{1,\ell-1}(x) \\
    \vdots & \vdots & \ddots & \vdots \\
    0 & 0 &\ldots & g_{\ell-1,\ell-1}(x)\\
  \end{array}
\right),
\end{equation}
where $\widetilde{G}(x)$ satisfies the following conditions (see \cite[Theorem 2.1]{LF}):
\begin{enumerate}
    \item $g_{i,j}(x)=0$, for all $0\leq j < i \leq \ell-1$.
    \item deg$(g_{i,j}(x)) < $ deg$(g_{j,j}(x))$, for all $i <j$.
    \item $g_{j,j}(x) \mid (x^m-\lambda)$, for all $0\leq j \leq \ell-1$.
    \item If  $g_{j,j}(x) = (x^m-\lambda)$, then $g_{i,j}(x) =0$, for all $i\neq j$.
\end{enumerate}
Note that the rows of $\widetilde{G}(x)$ are nonzero and each nonzero element of $\widetilde{C}$ can be expressed in the form \begin{center}$(0,\ldots,0,c_j(x),\ldots,c_{\ell-1}(x)),$ where $j\geq 0$, $ c_j(x)\neq 0$ and $g_{j,j}(x)\mid c_j(x)$. \end{center} This implies that the rows of $\widetilde{G}(x)$ form a Gr\"obner basis of $\widetilde{C}$ with respect to the position-over-term (POT) order in $\Fq[x]$, where the standard basis vectors $\{\mathbf{e}_0,\ldots,\mathbf{e}_{\ell-1}\}$ and the monomials $x^n$ are ordered naturally in each component. Moreover, the second condition above implies that the rows of $\widetilde{G}(x)$ provide a reduced Gr\"obner basis for $\widetilde{C}$, which is uniquely defined, up to multiplication by constants, with monic diagonal elements.

Let $G(x)$ be the matrix with the rows of $\widetilde{G}(x)$ under the image of the homomorphism $\Psi$ in (\ref{embedding}). Clearly, the rows of $G(x):=\widetilde{G}(x) \mod I$ is an $R$-generating set for $C$. When $C$ is the trivial zero code of length $m\ell$, we have $p=0$, which gives $G(x)=\mathbf{0}_{\ell}$. Otherwise, we say that $C$ is an $r$-generator $\lambda$-QT code, generated as an $R$-submodule, if $G(x)$ has $r$ (nonzero) rows. The $\Fq$-dimension of $C$ is given by (see \cite[Corollary 2.4]{LF} for the proof) 
\begin{equation}\label{dimension}
m\ell-\sum_{j=0}^{\ell-1}\mbox{deg}(g_{j,j}(x))=\sum_{j=0}^{\ell-1}\left[m -\mbox{deg}(g_{j,j}(x))\right].
\end{equation}

In \cite{ST}, Semenov and Trifonov used the polynomial matrix $\widetilde{G}(x)$ in (\ref{PGM}) to develop a spectral theory for QC codes, which gives rise to a BCH-like minimum distance bound. Their bound was improved by Zeh and Ling in \cite{LZ} by using the HT bound (\cite{HT}), which generalizes the BCH bound for cyclic codes. We now translate their results from QC to QT codes. 

Given a $\lambda$-QT code $C\subseteq R^{\ell}$, let the associated $\ell \times \ell$ upper-triangular matrix $\widetilde{G}(x)$ be as in (\ref{PGM}) with entries in $\Fq[x]$. The {\it determinant} of $\widetilde{G}(x)$ is defined as $$\mbox{det}(\widetilde{G}(x)):=\prod_{j=0}^{\ell-1}g_{j,j}(x)$$ and an {\it eigenvalue} $\beta$ of $C$ is a root of det$(\widetilde{G}(x))$. Note that all eigenvalues are elements of $\Omega$ ({\it i.e.}, $\beta=\alpha\xi^k$, for some $k\in\{0,\ldots,m-1\}$), since $g_{j,j}(x)\mid (x^m-\lambda)$, for each $0\leq j\leq \ell-1$. The {\it algebraic multiplicity} of $\beta$ is the largest integer $a$ such that $(x-\beta)^a\mid\mbox{det}(\widetilde{G}(x))$. The {\it geometric multiplicity} of $\beta$ is defined as the dimension of the null space of $\widetilde{G}(\beta)$, where this null space is called the {\it eigenspace} of $\beta$ and it is denoted by $\mathcal{V}_{\beta}$. In other words, we have $$\mathcal{V}_{\beta}:=\{\mathbf{v}\in\F^{\ell} : \widetilde{G}(\beta)\mathbf{v}^{\top}=\mathbf{0}_{\ell}^{\top}\},$$ where $\F$ is the splitting field of $x^m-\lambda$ as before. Semenov and Trifonov showed in \cite{ST} that, for a given QC code and the associated $\widetilde{G}(x) \in \F_q[x]^{\ell\times\ell}$, the algebraic multiplicity $a$ of an eigenvalue $\beta$ is equal to its geometric multiplicity $\mbox{dim}_{\F}(\mathcal{V}_{\beta})$. We state the QT analogue of this result below, which can be shown in the same way and therefore the proof is omitted.
\begin{lem}\cite[Lemma 1]{ST}\label{multiplicity lemma}
The algebraic multiplicity of any eigenvalue of a $\lambda$-QT code $C$ is equal to its geometric multiplicity.
\end{lem}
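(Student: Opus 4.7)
The plan is to reduce the claim to the Smith Normal Form (SNF) of $\widetilde{G}(x)$ over the principal ideal domain $\Fq[x]$, and then exploit the separability of $x^m-\lambda$. First I would invoke the SNF: there exist unimodular matrices $P(x), Q(x) \in \Fq[x]^{\ell\times\ell}$, whose determinants lie in $\Fq^*$, together with a diagonal matrix $D(x) = \mbox{diag}(d_1(x),\ldots,d_\ell(x))$ satisfying $d_j(x)\mid d_{j+1}(x)$, such that $\widetilde{G}(x) = P(x)\,D(x)\,Q(x)$.

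Next I would show that each $d_j(x)$ divides $x^m-\lambda$. The key input is the inclusion $\widetilde{K}\subseteq \widetilde{C}$ built into the construction of $\widetilde{G}(x)$: it forces the quotient module $\Fq[x]^{\ell}/\widetilde{C}$ to be annihilated by $x^m-\lambda$. Combined with the SNF-based isomorphism $\Fq[x]^{\ell}/\widetilde{C} \simeq \bigoplus_j \Fq[x]/\langle d_j(x)\rangle$, this forces $d_j(x)\mid (x^m-\lambda)$ for every $j$. Since $\gcd(m,q)=1$ makes $x^m-\lambda$ separable over $\F$, each $d_j(x)$ is separable as well.

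Finally, both multiplicities can be read off from $D(x)$. Evaluating at the eigenvalue $\beta\in\Omega$ yields $\widetilde{G}(\beta)=P(\beta)\,D(\beta)\,Q(\beta)$, where $P(\beta)$ and $Q(\beta)$ are invertible over $\F$ because their determinants are nonzero constants. Hence $\mbox{dim}_{\F}(\mathcal{V}_{\beta}) = \mbox{dim}_{\F}\ker D(\beta) = |\{j : d_j(\beta)=0\}|$. On the other hand, $\mbox{det}(\widetilde{G}(x)) = c\prod_j d_j(x)$ for some $c\in\Fq^*$, and the separability of each $d_j$ implies that the multiplicity of $\beta$ as a root of $\mbox{det}(\widetilde{G}(x))$ is also $|\{j : d_j(\beta)=0\}|$, which is precisely the algebraic multiplicity of $\beta$. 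The two counts agree.

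The only nontrivial ingredient I anticipate is the divisibility $d_j(x)\mid(x^m-\lambda)$; once that is in hand, separability of $x^m-\lambda$ does the rest, and no feature of the reduced Gr\"obner form of $\widetilde{G}(x)$ beyond the ambient inclusion $\widetilde{K}\subseteq\widetilde{C}$ is needed. This keeps the argument parallel to Semenov--Trifonov's QC version, with only the root of unity replaced by a root of $\lambda$.
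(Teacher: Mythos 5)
The paper does not give its own proof here; it defers to \cite[Lemma 1]{ST} and notes the QT case goes through unchanged. Your Smith-normal-form argument is a correct and complete proof along the same lines. The only ingredient that needs care is exactly the one you flag: condition (3) of the Lally--Fitzpatrick form gives $g_{j,j}(x)\mid(x^m-\lambda)$ for the \emph{triangular} diagonal entries, but this does not automatically transfer to the SNF invariant factors $d_j(x)$, so your module-theoretic derivation of $d_j(x)\mid(x^m-\lambda)$ from the fact that $\widetilde{K}\subseteq\widetilde{C}$ forces $x^m-\lambda$ to annihilate $\Fq[x]^{\ell}/\widetilde{C}\simeq\bigoplus_j\Fq[x]/\langle d_j(x)\rangle$ is the right way to obtain separability of each $d_j$. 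Once that is in hand, the rest is routine: $P(\beta),Q(\beta)$ are invertible over $\F$ since their determinants are nonzero constants, so $\dim_{\F}\mathcal{V}_{\beta}=\dim_{\F}\ker D(\beta)=|\{j:d_j(\beta)=0\}|$, and $\det\widetilde{G}(x)=c\prod_j d_j(x)$ with each factor separable gives the same count for the algebraic multiplicity. As you observe, the argument needs nothing about the reduced Gr\"obner shape of $\widetilde{G}(x)$ beyond its row span containing $(x^m-\lambda)\Fq[x]^{\ell}$, which is a pleasant byproduct of taking the SNF route.
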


Throughout, we let $\overline{\Omega}\subseteq \Omega$ denote the set of all eigenvalues of $C$. Notice that $\overline{\Omega}=\emptyset$ if and only if the diagonal elements $g_{j,j}(x)$ in $\widetilde{G}(x)$ are constant polynomials and $C$ is the trivial full space code. From this point on, we exclude the full space code and we assume that $|\overline{\Omega}|=t>0$. Choose an arbitrary eigenvalue $\beta_i\in\overline{\Omega}$ with multiplicity $n_i$, for some $i \in\{1,\ldots,t\}$. Let $\{\mathbf{v}_{i,0},\ldots,\mathbf{v}_{i,n_i-1}\}$ be a basis for the corresponding eigenspace $\mathcal{V}_i$. Consider the matrix  
\begin{equation}\label{Eigenspace} 
V_i:=\begin{pmatrix}
\mathbf{v}_{i,0} \\
\vdots\\
\mathbf{v}_{i,n_i-1} 
 \end{pmatrix}=\begin{pmatrix}
v_{i,0,0}&\ldots&v_{i,0,\ell-1} \\
\vdots &\vdots & \vdots\\
v_{i,n_i-1,0}&\ldots&v_{i,n_i-1,\ell-1}
 \end{pmatrix},
\end{equation} 
having the basis elements as its rows. We let
\[ H_i:=(1, \beta_i,\ldots,\beta_i^{m-1})\otimes V_i\ \]
and define
\begin{equation}\label{parity check matrix} 
H:=\begin{pmatrix}
H_1 \\
\vdots\\
H_t 
\end{pmatrix}=\begin{pmatrix}
V_1&\beta_1V_1&\ldots&\beta_1^{m-1}V_1 \\
\vdots &\vdots & & \vdots\\
V_t&\beta_tV_t&\ldots&\beta_t^{m-1}V_t
\end{pmatrix}.
\end{equation} 
Observe that $H$ has $n:=\sum_{i=1}^t n_i$ rows. By Lemma \ref{multiplicity lemma}, we have $n=\sum_{j=0}^{\ell-1}\mbox{deg}(g_{j,j}(x))$. To prove the following lemma, it remains to show that all these $n$ rows are linearly independent, which was already shown in \cite[Lemma 2]{ST}.
\begin{lem}\label{rank lemma}
The rank of the matrix $H$ in (\ref{parity check matrix}) is equal to $m\ell -\dim_{\Fq}(C)$.
\end{lem}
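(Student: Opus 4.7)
The plan is to reduce the rank computation to a linear independence statement and then dispatch the latter by a Vandermonde argument applied blockwise.

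First, I would collect the two ingredients that pin down the row count of $H$. By Lemma~\ref{multiplicity lemma}, $n_i$ is simultaneously the geometric and algebraic multiplicity of $\beta_i$, so
$$n \;:=\; \sum_{i=1}^{t} n_i \;=\; \deg\!\Bigl(\prod_{j=0}^{\ell-1} g_{j,j}(x)\Bigr) \;=\; \sum_{j=0}^{\ell-1}\deg(g_{j,j}(x)).$$
Combining this with the dimension formula (\ref{dimension}) gives $n = m\ell - \dim_{\Fq}(C)$. Since $H$ has exactly $n$ rows by construction (see (\ref{Eigenspace}) and (\ref{parity check matrix})), the lemma reduces to proving that these $n$ rows are $\F$-linearly independent.

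To establish the independence, I would consider an arbitrary relation $\sum_{i,k} \lambda_{i,k}\,\mathbf{h}_{i,k} = \mathbf{0}_{m\ell}$, where $\mathbf{h}_{i,k} = (\mathbf{v}_{i,k},\,\beta_i \mathbf{v}_{i,k},\,\ldots,\,\beta_i^{m-1}\mathbf{v}_{i,k})$ is the $k$-th row of $H_i$. Reading this block by block in the $m\ell$-coordinate system, where the $j$-th block of length $\ell$ picks up the factor $\beta_i^j$, the relation becomes
$$\sum_{i=1}^{t} \beta_i^{j}\,\mathbf{w}_i \;=\; \mathbf{0}_{\ell}\qquad (j=0,1,\ldots,m-1),$$
where $\mathbf{w}_i := \sum_{k=0}^{n_i-1} \lambda_{i,k}\,\mathbf{v}_{i,k} \in \mathcal{V}_i \subseteq \F^{\ell}$.

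Next I would invoke a Vandermonde argument componentwise in $\F^{\ell}$. The eigenvalues $\beta_1,\ldots,\beta_t$ are distinct elements of $\Omega$ and $t \leq |\Omega| = m$, so the $t \times t$ submatrix $(\beta_i^{j})_{0\leq j\leq t-1,\;1\leq i\leq t}$ is a nonsingular Vandermonde matrix. Restricting the $m$ equations above to the first $t$ values of $j$ and applying this invertibility to each of the $\ell$ coordinates forces $\mathbf{w}_i = \mathbf{0}_{\ell}$ for every $i$. Finally, since $\{\mathbf{v}_{i,0},\ldots,\mathbf{v}_{i,n_i-1}\}$ was chosen as an $\F$-basis of the eigenspace $\mathcal{V}_i$, the vanishing $\sum_{k} \lambda_{i,k}\,\mathbf{v}_{i,k} = \mathbf{0}_{\ell}$ forces $\lambda_{i,k} = 0$ for all $k$. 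This yields $\mathrm{rank}(H) = n = m\ell - \dim_{\Fq}(C)$.

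There is no serious obstacle here; the only care point is making sure the block decomposition of $H$ is interpreted correctly so that the Kronecker structure $H_i = (1,\beta_i,\ldots,\beta_i^{m-1}) \otimes V_i$ translates the row dependence into the system $\sum_i \beta_i^{j}\mathbf{w}_i = \mathbf{0}_{\ell}$ to which Vandermonde applies coordinatewise. The argument is essentially the QC case from \cite[Lemma 2]{ST}, and no property specific to $\lambda = 1$ enters the computation because the $\beta_i$ already live in the splitting field $\F$ of $x^m - \lambda$.
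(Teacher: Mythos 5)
Your proposal is correct and follows exactly the route the paper takes: reduce to showing $H$ has $n=\sum n_i=\sum_j\deg(g_{j,j})=m\ell-\dim_{\Fq}(C)$ rows via Lemma~\ref{multiplicity lemma} and the dimension formula (\ref{dimension}), then establish that these $n$ rows are $\F$-linearly independent. The only difference is that the paper delegates the independence step to a citation of \cite[Lemma~2]{ST}, whereas you spell out the underlying blockwise Vandermonde argument (distinct $\beta_i$'s forcing each $\mathbf{w}_i\in\mathcal{V}_i$ to vanish, then basis independence within each eigenspace forcing the coefficients to vanish); this is precisely the content of the cited lemma, so your write-up is a self-contained version of the same proof.
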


We observe that $H \mathbf{c}^{\top}=\mathbf{0}_n^{\top}$, for any codeword $\mathbf{c}\in C$. Together with Lemma \ref{rank lemma}, we easily obtain the following result (see \cite[Theorem 1]{ST} for the QC analogue of the result).

\begin{prop}
The $n\times m\ell$ matrix $H$ in (\ref{parity check matrix}) is a parity-check matrix for $C$.
\end{prop}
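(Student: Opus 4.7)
The plan is to combine Lemma \ref{rank lemma} with a direct verification that every codeword lies in the kernel of $H$. Since Lemma \ref{rank lemma} already tells us $\operatorname{rank}(H)=m\ell-\dim_{\Fq}(C)$, which equals the dimension of the dual code, it will be enough to prove the containment $C\subseteq\ker(H)$; equality then follows by a dimension count.

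To verify $H\mathbf{c}^\top=\mathbf{0}_n^\top$ for an arbitrary $\mathbf{c}\in C$, I would first unpack what the product means. Index a row of $H$ by a pair $(\beta_i,\mathbf{v}_{i,k})$, where $\beta_i\in\overline{\Omega}$ and $\mathbf{v}_{i,k}$ is one of the chosen basis vectors of $\mathcal{V}_i$. Writing the codeword in its $m\times\ell$ array form (\ref{array}) and flattening it row by row to an element of $\Fq^{m\ell}$, the inner product of this row of $H$ with $\mathbf{c}$ simplifies, by swapping the order of summation, to
\begin{equation*}
\sum_{s=0}^{m-1}\sum_{j=0}^{\ell-1}\beta_i^{s}(\mathbf{v}_{i,k})_j c_{s,j}
=\sum_{j=0}^{\ell-1}(\mathbf{v}_{i,k})_j\,c_j(\beta_i)
=\mathbf{v}_{i,k}\cdot\mathbf{c}(\beta_i)^\top,
\end{equation*}
where $c_j(x)$ is the column polynomial in (\ref{columns}) and $\mathbf{c}(\beta_i):=(c_0(\beta_i),\ldots,c_{\ell-1}(\beta_i))\in\F^\ell$. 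Thus $H\mathbf{c}^\top=\mathbf{0}_n^\top$ reduces to the geometric statement that $\mathbf{c}(\beta_i)$ is orthogonal to every vector of $\mathcal{V}_i$, for each eigenvalue $\beta_i$.

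To establish this orthogonality, I would pull $\mathbf{c}(x)$ back to a preimage $\widetilde{\mathbf{c}}(x)\in\Fq[x]^\ell$ under the homomorphism $\Psi$ in (\ref{embedding}). Since $\widetilde{C}$ is generated by the rows of $\widetilde{G}(x)$ together with the vectors $(x^m-\lambda)\mathbf{e}_j$, we can write
\begin{equation*}
\widetilde{\mathbf{c}}(x)=\sum_{j=0}^{\ell-1}\widetilde{a}_j(x)\,\widetilde{\mathbf{g}}_j(x)+\sum_{j=0}^{\ell-1}b_j(x)(x^m-\lambda)\mathbf{e}_j,
\end{equation*}
where $\widetilde{\mathbf{g}}_j(x)$ is the $j^{\rm th}$ row of $\widetilde{G}(x)$. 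Evaluating at any $\beta_i\in\overline{\Omega}\subseteq\Omega$ kills the second sum because $\beta_i^{m}=\lambda$, and the degree-$(m-1)$ representative of the coset $c_j(x)$ agrees with $\widetilde{c}_j(x)$ at $\beta_i$ for the same reason, so $\mathbf{c}(\beta_i)=\widetilde{\mathbf{c}}(\beta_i)$ lies in the $\F$-row-space of $\widetilde{G}(\beta_i)$. For any $\mathbf{v}\in\mathcal{V}_i$ we have $\widetilde{G}(\beta_i)\mathbf{v}^\top=\mathbf{0}_\ell^\top$, i.e.\ every row of $\widetilde{G}(\beta_i)$ is orthogonal to $\mathbf{v}$, so $\mathbf{c}(\beta_i)$ is orthogonal to $\mathbf{v}$ as well. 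Combining this with Lemma \ref{rank lemma} yields the proposition.

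The main obstacle is the bookkeeping between three different representations of a codeword — as a vector in $\Fq^{m\ell}$, as an $m\times\ell$ array, and as an $\ell$-tuple of column polynomials in $R^\ell$ or in $\Fq[x]^\ell$ — together with the careful choice of a polynomial lift that turns the ``evaluation at $\beta_i$'' into a statement about the row space of $\widetilde{G}(\beta_i)$. Once the lift is in place, the identity $\beta_i^m=\lambda$ does all the work, and the rest is a one-line inner-product calculation.
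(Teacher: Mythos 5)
Your proof is correct and follows the same route the paper takes (which in turn mirrors Semenov--Trifonov): observe that $H\mathbf{c}^{\top}=\mathbf{0}_n^{\top}$ for every $\mathbf{c}\in C$ and combine this with Lemma \ref{rank lemma} to upgrade the inclusion $C\subseteq\ker(H)$ to equality by a dimension count. The paper states these two facts without elaboration, whereas you have spelled out the intermediate steps --- the Kronecker-product bookkeeping that turns a row of $H$ into the pairing $\mathbf{v}_{i,k}\cdot\mathbf{c}(\beta_i)^{\top}$, and the polynomial lift showing $\mathbf{c}(\beta_i)$ lies in the row space of $\widetilde{G}(\beta_i)$, hence is annihilated by $\mathcal{V}_i$ --- all of which are accurate.
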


\begin{rem}\label{analogy rem}
The eigenvalues are the QT analogues of the zeros of constacyclic codes. Recall that a constacyclic code has an empty zero set if and only if it is equal to the full space. Similarly, $\overline{\Omega}=\emptyset$ if and only if $C=\Fq^{m\ell}$. In this case, the construction of the parity-check matrix $H$ in (\ref{parity check matrix}) is impossible, hence, we have assumed $\overline{\Omega}\neq\emptyset$. The other extreme case is when the zero set of a given constacyclic code is $\Omega$, which implies that we have the trivial zero code. However, we emphasize that a $\lambda$-QT code with $\overline{\Omega}=\Omega$ is not necessarily the zero code. By using Lemma \ref{rank lemma} above, one can easily deduce that a given $\lambda$-QT code $C$ is the zero code $\{\mathbf{0}_{m\ell}\}$ if and only if $\overline{\Omega}=\Omega$, each $\mathcal{V}_i=\F^{\ell}$ (equivalently, each $V_i=I_{\ell}$, where $I_{\ell}$ denotes the $\ell\times\ell$ identity matrix), and $n=m\ell$ so that we obtain $H=I_{m\ell}$. On the other hand, $\overline{\Omega}=\Omega$ whenever $(x^m-\lambda) \mid \mbox{det}(\widetilde{G}(x))$ but $C$ is nontrivial unless each $m^{\rm th}$ root of $\lambda$ in $\Omega$ has multiplicity $\ell$, which happens only if $\widetilde{G}(x)=(x^m-\lambda) I_{\ell}$.
\end{rem}

\begin{defn}\label{eigencode}
We define the {\it eigencode} corresponding to an eigenspace $\mathcal{V}\subseteq \F^\ell$ by
\[
\mathbb{C}(\mathcal{V})=\mathbb{C}:=\left\{\mathbf{u}\in \Fq^\ell\ : \ \sum_{j=0}^{\ell-1}{v_ju_j}=0, \forall \mathbf{v} \in \mathcal{V}\right\}.
\]
In case we have $\mathbb{C}=\{\mathbf{0}_{\ell}\}$, then it is assumed that $d(\mathbb{C})=\infty$.
\end{defn}

Semenov and Trifonov proved a BCH-like minimum distance bound for a given QC code (see \cite[Theorem 2]{ST}), which is expressed in terms of the size of a consecutive subset of eigenvalues in $\overline{\Omega}$ and the minimum distance of the common eigencode related to this consecutive subset. Zeh and Ling generalized their approach and derived an HT-like bound in \cite[Theorem 1]{LZ} without using the parity-check matrix in their proof. The eigencode, however, is still needed. In \cite{ELOT}, a general spectral bound is proven for any QT code with a nonempty set of eigenvalues that is different from $\Omega$. The QT analogues of Semenov-Trifonov and Zeh-Ling bounds were proven in terms of the Roos-like and shift-like bounds as a corollary. The observations in the next section are crucial for extending this bound to the general case of any nonempty set of eigenvalues.

\section{Eigencodes and Constituents} \label{comparison section}

Recall the factorization into irreducibles $x^m-\lambda=f_1(x)\cdots f_s(x)$ given in (\ref{irreducibles}). If we fix a root $\alpha\xi^{u_i}$ of $f_i(x)$, for each $i\in\{1,\ldots,s\}$, then we know that $\Omega$ is the disjoint union of the $\Fq$-conjugacy classes $\big[\alpha\xi^{u_i}\big]$, each of the form as in (\ref{cyc coset}) and of size $e_i=$ deg($f_i$). By Theorem \ref{Jensen's thm}, any $\lambda$-QT code $C$, viewed as an $R$-submodule of $R^\ell$, decomposes as 
\begin{equation}\label{decomp}
C=\displaystyle\bigoplus_{i=1}^s \langle \theta_i\rangle \Box C_i,
\end{equation}
where each inner code $\left\langle \theta_i \right\rangle $ is a minimal $\lambda$-constacyclic code of length $m$ satisfying $\left\langle \theta_i \right\rangle \cong \E_i \cong \Fq[x]/\langle f_i(x)\rangle$ (see (\ref{isoms})) and each outer code (constituent) $C_i$ is a linear code in $\E_i^{\ell}$, for $i\in\{1,\ldots,s\}$. By using (\ref{CRT-3}) and (\ref{explicit constituents}), we can write each constituent $C_i$ explicitly as
\begin{equation}\label{explicit}
C_i  =  \bigl\{\mathbf{c}(\alpha\xi^{u_i}) : \mathbf{c}(x)\in C\}.   
\end{equation}

On the other hand, any codeword $\mathbf{c}(x)\in C$ is of the form $\mathbf{c}(x)=\mathbf{a}(x)\widetilde{G}(x) \mod I$, for some $\mathbf{a}(x)\in \Fq[x]^{\ell}$. Hence, the $\E_i$-span of the rows of $\widetilde{G}(\alpha\xi^{u_i})$ is the constituent $C_i$, for each $i$. If all the diagonal elements in $\widetilde{G}(\alpha\xi^{u_i})$ are nonzero ({\it i.e.}, $\widetilde{G}(\alpha\xi^{u_i})$ has full rank $\ell$), then $C_i=\E_i^{\ell}$ and $\alpha\xi^{u_i}$ is \emph{not} an eigenvalue of $C$. Otherwise, $\alpha\xi^{u_i}$ is an eigenvalue of $C$ and the corresponding nonzero eigenspace $\mathcal{V}_i$ is described as $\mathcal{V}_i=\{\mathbf{v}\in\F^{\ell} : \widetilde{G}(\alpha\xi^{u_i}) \mathbf{v}^{\top}=\mathbf{0}_{\ell}^{\top}\}$. Now let $\overline{C}_i$ denote the $\F$-span of the rows of $\widetilde{G}(\alpha\xi^{u_i})$, for each $i$, which immediately implies $\mathcal{V}_i=\overline{C}_i^{\perp}$. Clearly, each constituent $C_i$ is the $\E_i$-subcode of $\overline{C}_i$, for all $i$. Note that the $\F$-span of the rows of $\widetilde{G}(\alpha^q\xi^{qu_i})$ is
\begin{eqnarray*}
\overline{C}_i^q &=& \{(d_0^q,\ldots,d_{\ell-1}^q) : (d_0,\ldots,d_{\ell-1}) \in \overline{C}_i\}\\ &\supseteq& C_i^q=\{\mathbf{c}(\alpha^q\xi^{qu_i}) : \mathbf{c}(x)\in C\},
\end{eqnarray*}
for any $i\in\{1,\ldots,s\}$. Since each entry of $\widetilde{G}(\alpha^q\xi^{qu_i})$ is the $q^{\rm th}$ power of the corresponding entry in $\widetilde{G}(\alpha\xi^{u_i})$, for each $i$, the dual of $\overline{C}_i^q$ consists of elements $\mathbf{v}^q=(v_0^q,\ldots,v_{\ell-1}^q)$, where $\mathbf{v}\in\mathcal{V}_i$,  which implies the following.
\begin{lem} \label{eigenspace lemma}
If $\alpha\xi^{u_i}$ is an eigenvalue of $C$, for some $i\in\{1,\ldots,s\}$, then all elements in its $\Fq$-conjugacy class $\big[\alpha\xi^{u_i}\big]$ are also eigenvalues with respective eigenspaces $[\mathcal{V}_i]:=\left\{\mathcal{V}_i,\mathcal{V}_i^q,\ldots,\mathcal{V}_i^{q^{e_i-1}}\right\}$. 
\end{lem}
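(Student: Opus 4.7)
The plan is to exploit the Frobenius automorphism $\sigma : x \mapsto x^q$ of $\F/\Fq$ together with the fact that every entry of $\widetilde{G}(x)$ lies in $\Fq[x]$. For a matrix $A$ with entries in $\F$, write $A^{(q)}$ for the matrix obtained by raising each entry to the $q$-th power, and similarly for vectors. Because the coefficients of each entry $g_{j,k}(x)$ are fixed by $\sigma$, we have the key identity
\[
\widetilde{G}(\beta)^{(q)}=\widetilde{G}(\beta^q)\qquad \text{for every } \beta\in\F,
\]
and consequently $\det(\widetilde{G}(x))\in\Fq[x]$, so its root set (with multiplicities) is closed under $\sigma$. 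This immediately handles the eigenvalue part: if $\alpha\xi^{u_i}\in\overline{\Omega}$, then all $q$-power conjugates $\alpha^{q^k}\xi^{q^k u_i}$ are also roots of $\det(\widetilde{G}(x))$, hence lie in $\overline{\Omega}$, which gives the whole conjugacy class $[\alpha\xi^{u_i}]\subseteq\overline{\Omega}$.

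For the eigenspaces, I would take any $\mathbf{v}\in\mathcal{V}_i$, so that $\widetilde{G}(\alpha\xi^{u_i})\,\mathbf{v}^\top=\mathbf{0}_\ell^\top$. Raising this equation entrywise to the $q$-th power and using that $x\mapsto x^q$ is a ring homomorphism on $\F$, I obtain $\widetilde{G}(\alpha\xi^{u_i})^{(q)}\,(\mathbf{v}^{(q)})^\top=\mathbf{0}_\ell^\top$, and the key identity rewrites the left factor as $\widetilde{G}(\alpha^q\xi^{qu_i})$. Hence $\mathbf{v}^{(q)}$ lies in the null space of $\widetilde{G}(\alpha^q\xi^{qu_i})$, i.e., in the eigenspace at $\alpha^q\xi^{qu_i}$. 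Iterating $k$ times shows $\mathcal{V}_i^{q^k}:=\{\mathbf{v}^{(q^k)} : \mathbf{v}\in\mathcal{V}_i\}$ is contained in the eigenspace at $\alpha^{q^k}\xi^{q^k u_i}$ for $0\le k\le e_i-1$.

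To upgrade these inclusions to equalities I would count $\F$-dimensions. First note that $\mathcal{V}_i^{q^k}$ is an $\F$-subspace of $\F^\ell$: given $\lambda,\mu\in\F$, surjectivity of the Frobenius on $\F$ lets us write $\lambda=\lambda'^{q^k}$, $\mu=\mu'^{q^k}$, and then $\lambda\mathbf{v}^{(q^k)}+\mu\mathbf{w}^{(q^k)}=(\lambda'\mathbf{v}+\mu'\mathbf{w})^{(q^k)}\in\mathcal{V}_i^{q^k}$. The same computation shows the entrywise Frobenius is a $\sigma^k$-semilinear bijection $\mathcal{V}_i\to\mathcal{V}_i^{q^k}$, so $\dim_\F\mathcal{V}_i^{q^k}=\dim_\F\mathcal{V}_i$. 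On the other hand, Lemma~\ref{multiplicity lemma} equates the $\F$-dimension of each eigenspace with the algebraic multiplicity of the corresponding eigenvalue as a root of $\det(\widetilde{G}(x))\in\Fq[x]$; since conjugate roots of a polynomial in $\Fq[x]$ have equal multiplicities, the eigenspace at $\alpha^{q^k}\xi^{q^k u_i}$ has the same $\F$-dimension as $\mathcal{V}_i$. Two $\F$-subspaces of $\F^\ell$ of the same finite dimension, one contained in the other, must coincide, yielding the desired equality $\mathcal{V}_{\alpha^{q^k}\xi^{q^k u_i}}=\mathcal{V}_i^{q^k}$.

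The main subtlety to handle carefully is that $\mathbf{v}\mapsto\mathbf{v}^{(q)}$ is not $\F$-linear, only Frobenius-semilinear; hence I would not speak of a linear isomorphism between $\mathcal{V}_i$ and $\mathcal{V}_i^{q^k}$ but instead use semilinearity to ensure $\mathcal{V}_i^{q^k}$ is a genuine $\F$-subspace and to equate dimensions. Apart from this bookkeeping, the whole argument is essentially a direct application of Frobenius to the null-space condition $\widetilde{G}(\alpha\xi^{u_i})\mathbf{v}^\top=\mathbf{0}_\ell^\top$.
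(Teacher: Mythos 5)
Your proposal is correct and takes essentially the same route as the paper: both hinge on the fact that $\widetilde{G}(x)$ has entries in $\Fq[x]$, so entrywise Frobenius sends $\widetilde{G}(\beta)$ to $\widetilde{G}(\beta^q)$, forcing conjugates of eigenvalues to be eigenvalues and the Frobenius image of $\mathcal{V}_i$ to be the eigenspace at the conjugate. The paper phrases this through the row-span $\overline{C}_i$ and its dual and obtains the equality implicitly from Frobenius being an automorphism, whereas you work directly with the null-space equation and then upgrade the inclusion to an equality by the dimension count via Lemma~\ref{multiplicity lemma}; this is a harmless variation, not a different argument.
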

Recall that, for each $1\leq i\leq s$, the eigencode $\mathbb{C}_i$ corresponding to the eigenspace $\mathcal{V}_i$ is given as (cf. Definition \ref{eigencode}) $$\mathbb{C}_i=\left\{\mathbf{u}\in \Fq^\ell\ : \ \sum_{j=0}^{\ell-1}{v_ju_j}=0, \forall \mathbf{v} \in \mathcal{V}_i\right\}.$$ The fact that $\mathcal{V}_i=\overline{C}_i^{\perp}$ implies $\mathbb{C}_i=\overline{C}_i \big\vert_{\Fq}$. Obviously, $\overline{C}_i \big\vert_{\Fq}=\overline{C}_i^q \big\vert_{\Fq}=\cdots=\overline{C}_i^{q^{e_i-1}} \big\vert_{\Fq}$ and, therefore, $\mathbb{C}_i$ is the common eigencode of all the eigenspaces in $[\mathcal{V}_i]$.
\begin{lem} \label{eigencode lemma}
For each $1\leq i\leq s$, the eigencode $\mathbb{C}_i$ corresponding to the eigenspaces in $[\mathcal{V}_i]$ is the $\Fq$-subcode of $\overline{C}_i$.
\end{lem}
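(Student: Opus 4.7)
The plan is to read off the identification $\mathbb{C}_i = \overline{C}_i\big\vert_{\F_q}$ directly from Definition~\ref{eigencode} combined with the equality $\mathcal{V}_i = \overline{C}_i^{\perp}$ that was already established in the discussion preceding Lemma~\ref{eigenspace lemma}. The conjugation part (that $\mathbb{C}_i$ is the common eigencode of all eigenspaces in $[\mathcal{V}_i]$) then falls out of Frobenius-invariance of $\F_q^\ell$.

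First I would unpack the defining condition. By Definition~\ref{eigencode}, $\mathbb{C}_i$ consists of those $\mathbf{u}\in\F_q^\ell$ such that $\langle \mathbf{v},\mathbf{u}\rangle = 0$ for every $\mathbf{v}\in\mathcal{V}_i$; equivalently, $\mathbb{C}_i = \mathcal{V}_i^{\perp}\cap\F_q^\ell$, where the orthogonal complement is taken inside $\F^\ell$. The key input is that $\overline{C}_i$, being the $\F$-row span of $\widetilde{G}(\alpha\xi^{u_i})$, is a genuine $\F$-linear subspace of $\F^\ell$, so the usual double-dual identity $(\overline{C}_i^{\perp})^{\perp} = \overline{C}_i$ holds. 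Combining with $\mathcal{V}_i = \overline{C}_i^{\perp}$ from the discussion before Lemma~\ref{eigenspace lemma} gives $\mathcal{V}_i^{\perp} = \overline{C}_i$, and therefore
\[
\mathbb{C}_i \;=\; \mathcal{V}_i^{\perp}\cap\F_q^\ell \;=\; \overline{C}_i\cap\F_q^\ell \;=\; \overline{C}_i\big\vert_{\F_q},
\]
which is exactly the claim.

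For the conjugate eigenspaces in $[\mathcal{V}_i] = \{\mathcal{V}_i,\mathcal{V}_i^q,\ldots,\mathcal{V}_i^{q^{e_i-1}}\}$ produced by Lemma~\ref{eigenspace lemma}, I would repeat the same double-dual argument to get $(\mathcal{V}_i^{q^k})^{\perp} = \overline{C}_i^{q^k}$ for each $k$, and then note that the $\F_q$-restriction collapses all these conjugate spaces to the same code: since any $\mathbf{u}\in\F_q^\ell$ satisfies $\mathbf{u}^q = \mathbf{u}$ componentwise, one has $\overline{C}_i^{q^k}\big\vert_{\F_q} = \overline{C}_i\big\vert_{\F_q}$ for every $k\in\{0,\ldots,e_i-1\}$. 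Hence $\mathbb{C}_i$ is indeed the common eigencode of all eigenspaces in $[\mathcal{V}_i]$ and coincides with the $\F_q$-subcode of $\overline{C}_i$.

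There is essentially no obstacle here, since the work was done in the paragraph above the lemma statement; the only point that deserves care is confirming that $\overline{C}_i$ is an $\F$-linear subspace (so that double-dualization is legitimate) and that the restriction $\cdot\big\vert_{\F_q}$ commutes with Frobenius on elements of $\F_q^\ell$. Both are immediate, so the proof is short and the lemma is really a reformulation of the identifications already in hand.
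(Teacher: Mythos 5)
Your proof is correct and follows essentially the same route as the paper: it uses $\mathcal{V}_i = \overline{C}_i^{\perp}$ together with double duality (which the paper leaves implicit) to get $\mathbb{C}_i = \overline{C}_i\big\vert_{\F_q}$, then observes that the $\F_q$-restriction is the same for every Frobenius conjugate $\overline{C}_i^{q^k}$. The only difference is that you make the double-dual step and the Frobenius-invariance explicit, whereas the paper compresses them into one line before the lemma.
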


Now we reorder the constituents. Let $C_1=\cdots=C_r=\{ \mathbf{0}_{\ell} \}$ be the zero constituents. Let  $C_{r+1}=\cdots=C_t$ be the full space constituents, where $\widetilde{G}(\alpha\xi^{u_i})$ has full rank $\ell$, for $r+1\leq i \leq t$. Note that $\widetilde{G}(\alpha\xi^{u_i})$ having full rank $\ell$ is equivalent to $\alpha\xi^{u_i}$ not being an eigenvalue and, therefore, $\big[\alpha\xi^{u_i}\big]\notin \overline{\Omega}$, for $r+1\leq i \leq t$. Finally, let $C_{t+1},\ldots,C_s$ be the nontrivial constituents, {\it i.e.}, they are neither the full space codes nor the zero codes. We have the disjoint unions 
\begin{equation}\label{sets}
\Omega =\bigcup_{i=1}^s\big[\alpha\xi^{u_i}\big] \mbox{\quad and\quad} \Omega \setminus\overline{\Omega} =\bigcup_{i=r+1}^t\big[\alpha\xi^{u_i}\big].
\end{equation}

Let $\Gamma$ denote the set of indices 
\begin{equation} \label{notation gamma}
\Gamma:=\{1,\ldots,r,t+1,\ldots,s\}=\{1,\ldots, s\}\setminus \{r+1,\ldots,t\}
\end{equation}
such that $\overline{\Omega} = \bigcup_{i\in\Gamma}\big[\alpha\xi^{u_i}\big]$, by (\ref{sets}). The common eigenspace $\mathcal{V}_{\overline{\Omega}}$ satisfies 
$$\mathcal{V}_{\overline{\Omega}}=\bigcap_{\beta\in\overline{\Omega}} \mathcal{V}_{\beta}= \bigcap_{i\in\Gamma}\bigcap_{j=0}^{e_i-1} \mathcal{V}_i^{q^j}$$
and the associated eigencode is 
\begin{eqnarray} \label{common eigencode}
\mathbb{C}_{\overline{\Omega}}=(\mathcal{V}_{\overline{\Omega}})^{\perp} \biggr\rvert_{\Fq}&=&\left(\sum_{i\in\Gamma} \sum_{j=0}^{e_i-1}(\mathcal{V}_i^{q^j})^{\perp}\right) \biggr\rvert_{\Fq}\\ \notag
&=&\left(\sum_{i\in\Gamma} \sum_{j=0}^{e_i-1} \overline{C}_i^{q^j}\right) \biggr\rvert_{\Fq}.
\end{eqnarray} 

A linear code $\mathcal{C}$ of length $n$ over a field $\K \supseteq \Fq$ is called \emph{Galois closed} if $\mathcal{C}=\mathcal{C}^q$, where $$\mathcal{C}^q := \left\{(c_0^q,c_1^q,\ldots,c_{n-1}^q)\, :\, (c_0,c_1,\ldots,c_{n-1})\in \mathcal{C}\right\}.$$ With this definition, it is easy to observe that $(\mathcal{V}_{\overline{\Omega}})^{\perp}$ is Galois closed. Theorem 12.17 in \cite{B} says $d(\mathcal{C})=d(\mathcal{C}|_{\Fq})$ if $\mathcal{C}$ is Galois closed. Therefore, we conclude that $d((\mathcal{V}_{\overline{\Omega}})^{\perp})=d(\mathbb{C}_{\overline{\Omega}})$. Note that each constituent code $C_i$, for $i\in\Gamma$, is the subfield subcode of a summand $\overline{C}_i$ in $(\mathcal{V}_{\overline{\Omega}})^{\perp}$ and it is a well-known fact that $d(A)\geq d(A+B)$, for any pair of linear codes $A$ and $B$. Hence, we obtain the following.

\begin{lem}\label{cons-eig lemma}
Let $C$ be a $\lambda$-QT code with a nonempty eigenvalue set $\overline{\Omega}\subseteq \Omega$ which decomposes as in (\ref{decomp}). Then \begin{center}$d(C_i)\geq d(\overline{C}_i) \geq  d\left(\left(\mathcal{V}_{\overline{\Omega}}\right)^{\perp}\right) = d\left(\left(\mathcal{V}_{\overline{\Omega}}\right)^{\perp} \big\rvert_{\Fq}\right) =  d(\mathbb{C}_{\overline{\Omega}})$,\end{center} 
for each $i \in \Gamma$.
\end{lem}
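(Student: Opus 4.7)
The plan is to assemble the chain of (in)equalities directly from the observations already collected in the paragraph preceding the statement, so essentially no new machinery is required. First, for $d(C_i)\geq d(\overline{C}_i)$, I would recall the remark made just after (\ref{explicit}) that $C_i$ is the $\mathbb{E}_i$-span of the rows of $\widetilde{G}(\alpha\xi^{u_i})$ while $\overline{C}_i$ is the $\F$-span of the same rows. Hence $C_i\subseteq \overline{C}_i$ (viewing both inside $\F^{\ell}$), so every nonzero codeword of $C_i$ is a nonzero codeword of $\overline{C}_i$ of the same Hamming weight, which gives the inequality.

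Next, for $d(\overline{C}_i)\geq d\bigl((\mathcal{V}_{\overline{\Omega}})^{\perp}\bigr)$, I would appeal to (\ref{common eigencode}), which exhibits
\[
\left(\mathcal{V}_{\overline{\Omega}}\right)^{\perp}=\sum_{j\in\Gamma}\sum_{k=0}^{e_j-1}\overline{C}_j^{q^k},
\]
so for any $i\in \Gamma$ the code $\overline{C}_i$ appears as one of the summands on the right. The elementary fact $d(A)\geq d(A+B)$ for any two linear codes $A,B$ defined over a common field — already recalled in the text just above the lemma — then finishes this step.

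The central equality $d\bigl((\mathcal{V}_{\overline{\Omega}})^{\perp}\bigr)=d\bigl((\mathcal{V}_{\overline{\Omega}})^{\perp}\big\rvert_{\Fq}\bigr)$ is the part I expect to need the most care. I would invoke Theorem~12.17 of \cite{B}, which requires that $(\mathcal{V}_{\overline{\Omega}})^{\perp}$ be Galois closed, i.e.\ fixed under the entrywise $q$-th power. Using the decomposition from (\ref{common eigencode}) one sees that taking the $q$-th power sends the summand $\overline{C}_j^{q^k}$ to $\overline{C}_j^{q^{k+1}}$, and because $\widetilde{G}(x)$ has coefficients in $\Fq[x]$ and $\alpha\xi^{u_j}$ has degree exactly $e_j$ over $\Fq$, one has $\overline{C}_j^{q^{e_j}}=\overline{C}_j$. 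Thus the $q$-power only cyclically permutes the inner summands for each fixed $j\in\Gamma$, leaving the total sum invariant; this is the Galois closure needed, and the theorem then yields the asserted equality.

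Finally, $d\bigl((\mathcal{V}_{\overline{\Omega}})^{\perp}\big\rvert_{\Fq}\bigr)=d(\mathbb{C}_{\overline{\Omega}})$ is immediate, since the identification $\mathbb{C}_{\overline{\Omega}}=(\mathcal{V}_{\overline{\Omega}})^{\perp}\big\rvert_{\Fq}$ is precisely the content of (\ref{common eigencode}). Chaining the two inequalities and the two equalities then gives the lemma.
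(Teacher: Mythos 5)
Your proposal is correct and follows essentially the same route as the paper: the inclusion $C_i\subseteq\overline{C}_i$, the sum decomposition in (\ref{common eigencode}) combined with $d(A)\geq d(A+B)$, and the Galois-closure argument via Theorem~12.17 of \cite{B}. The only difference is cosmetic — you spell out why $(\mathcal{V}_{\overline{\Omega}})^{\perp}$ is Galois closed (the $q$-power cyclically permuting the summands $\overline{C}_j^{q^k}$), a fact the paper asserts without elaboration.
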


Equation (\ref{common eigencode}) will play a key role in the next two sections. It is part of the proof of the general spectral bound in Section \ref{bound sect}. Together with Lemma \ref{cons-eig lemma}, the equation allows us to compare the Jensen and spectral bounds in Section \ref{res sect}, where an analogous relation between the Lally and spectral bounds follows as a result of similar observations.

\section{Spectral bound for quasi-twisted codes}\label{bound sect}

The general spectral bound proven in \cite[Theorem 11]{ELOT} was shown for QT codes with nonempty eigenvalue set $\overline{\Omega}\varsubsetneqq\Omega$. In fact, the bound remains valid when $\overline{\Omega}=\Omega$. To show this we need the following Lemma.

\begin{lem}\label{freaking lemma}
Let $C$ be a $\lambda$-QT code of length $m\ell$ and index $\ell$ over $\F_q$ with $\overline{\Omega}=\Omega$. Then $C=\{\mathbf{0}_{m\ell}\}$  if and only if $\mathbb{C}_{\Omega}=\{\mathbf{0}_{\ell}\}$.
\end{lem}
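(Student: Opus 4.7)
I would split the equivalence into two directions, with almost all the work going into the converse.

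For the forward direction ($C=\{\mathbf{0}_{m\ell}\}\Rightarrow \mathbb{C}_{\Omega}=\{\mathbf{0}_{\ell}\}$), I would invoke Remark \ref{analogy rem} directly: if $C$ is the zero code, then each eigenspace $\mathcal{V}_i$ equals $\F^{\ell}$, so the common eigenspace is $\mathcal{V}_{\Omega}=\bigcap_{\beta\in\Omega}\mathcal{V}_{\beta}=\F^{\ell}$, hence $(\mathcal{V}_{\Omega})^{\perp}=\{\mathbf{0}_{\ell}\}$, and therefore $\mathbb{C}_{\Omega}=(\mathcal{V}_{\Omega})^{\perp}\big\rvert_{\F_q}=\{\mathbf{0}_{\ell}\}$.

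For the converse ($\mathbb{C}_{\Omega}=\{\mathbf{0}_{\ell}\}\Rightarrow C=\{\mathbf{0}_{m\ell}\}$), the plan is to pass from the trivial subfield subcode back to the full code over $\F$ via Galois-closedness, and then translate this into a rigidity statement about $\widetilde{G}(x)$. As already observed in Section \ref{comparison section}, $(\mathcal{V}_{\Omega})^{\perp}$ is Galois closed because $\Omega$ is a union of $\F_q$-conjugacy classes and $\mathcal{V}_i^{q^j}\in [\mathcal{V}_i]$ appears in the intersection defining $\mathcal{V}_{\Omega}$. A standard fact (\emph{e.g.} \cite[Ch.\ 12]{B}) about Galois closed codes states that $\dim_{\F}\mathcal{C}=\dim_{\F_q}\mathcal{C}\big\rvert_{\F_q}$; applying this to $\mathcal{C}=(\mathcal{V}_{\Omega})^{\perp}$ lets me conclude from $\mathbb{C}_{\Omega}=\{\mathbf{0}_{\ell}\}$ that $(\mathcal{V}_{\Omega})^{\perp}=\{\mathbf{0}_{\ell}\}$, hence $\mathcal{V}_{\Omega}=\F^{\ell}$. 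Since $\mathcal{V}_{\Omega}=\bigcap_{\beta\in\Omega}\mathcal{V}_{\beta}\subseteq \mathcal{V}_{\beta}\subseteq \F^{\ell}$, this forces $\mathcal{V}_{\beta}=\F^{\ell}$ for \emph{every} $\beta\in\Omega$.

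The last step is the structural one: $\mathcal{V}_{\beta}=\F^{\ell}$ for $\beta=\alpha\xi^{u_i}$ means $\widetilde{G}(\alpha\xi^{u_i})$ annihilates the whole of $\F^{\ell}$, so $\widetilde{G}(\alpha\xi^{u_i})$ is the zero matrix over $\F$. Running $i$ through $\{1,\ldots,s\}$ and using Lemma \ref{eigenspace lemma} to cover all conjugates, every entry $g_{i,j}(x)$ of $\widetilde{G}(x)$ has every element of $\Omega$ as a root, so $(x^m-\lambda)\mid g_{i,j}(x)$. Combining this with property (3) of the reduced Gröbner basis in (\ref{PGM}) gives $g_{j,j}(x)=x^m-\lambda$ for every $j$, after which property (4) kills all off-diagonal entries. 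Therefore $\widetilde{G}(x)=(x^m-\lambda)I_{\ell}$, which reduces modulo $I$ to $G(x)=\mathbf{0}_{\ell\times\ell}$, and hence $C=\{\mathbf{0}_{m\ell}\}$ as required.

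The main obstacle I anticipate is the first move of the converse: one must be sure that the vanishing of $\mathbb{C}_{\Omega}=(\mathcal{V}_{\Omega})^{\perp}\big\rvert_{\F_q}$ really forces $(\mathcal{V}_{\Omega})^{\perp}$ itself to be trivial. This is precisely where Galois-closedness is indispensable; without it, a nontrivial code over $\F$ could easily have a trivial $\F_q$-subfield subcode. Once this dimensional bookkeeping is in place, the rest of the argument is a clean unwinding of the Gröbner basis normal form already collected in Section \ref{spec sect}.
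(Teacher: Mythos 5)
Your proof is correct, but it takes a genuinely different route from the paper's. For the converse, the paper works through the trace representation of codewords in the concatenated structure: it defines the $q$-ary code $B$ of length $\ell$ spanned by all rows of all codewords of $C$ (as $m\times\ell$ arrays), uses (\ref{QT tr-2}), (\ref{common eigencode}) and the Galois-closedness identity $\mathcal{C}\big\rvert_{\Fq}=\Tr_{\F/\Fq}(\mathcal{C})$ to prove the containment $B\subseteq\mathbb{C}_{\Omega}$, and then concludes that $\mathbb{C}_{\Omega}=\{\mathbf{0}_{\ell}\}$ forces $B=\{\mathbf{0}_{\ell}\}$ and hence $C=\{\mathbf{0}_{m\ell}\}$. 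You instead argue purely linear-algebraically on the Gr\"obner side: Galois-closedness in the form $\dim_{\F}\mathcal{C}=\dim_{\Fq}\mathcal{C}\big\rvert_{\Fq}$ forces $\mathcal{V}_{\Omega}=\F^{\ell}$, the intersection structure then forces every $\mathcal{V}_{\beta}=\F^{\ell}$, so $\widetilde{G}(\beta)=0$ for every $\beta\in\Omega$; since $x^m-\lambda=\prod_{\beta\in\Omega}(x-\beta)$ is separable this gives $(x^m-\lambda)\mid g_{i,j}(x)$ for every entry, and properties (3), (2), (4) of the reduced form (\ref{PGM}) then force $\widetilde{G}(x)=(x^m-\lambda)I_{\ell}$, i.e.\ $G(x)=\mathbf{0}$ and $C=\{\mathbf{0}_{m\ell}\}$. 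Your argument is cleaner and more self-contained — it essentially makes rigorous the assertion already sketched in Remark~\ref{analogy rem} and avoids the trace machinery altogether. What the paper's route buys, and what yours does not, is the stronger intermediate fact $B\subseteq\mathbb{C}_{\Omega}$: the paper leans on precisely this containment in the proof of Corollary~\ref{Lally win}, where it writes ``By Lemma~\ref{freaking lemma}, we already know that $B$ is a subcode of $\mathbb{C}_{\Omega}$.'' So if one adopted your proof verbatim, that corollary would need the containment re-established separately (the argument is the same trace computation the paper embeds here).
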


\begin{proof}
Let $B$ be the $q$-ary linear code of length $\ell$ that is generated by the rows of the codewords in $C$, which are represented as the $m\times \ell$ arrays in (\ref{array}) or, equivalently, as in (\ref{QT tr-2}). That is, the codewords of $B$ are generated by the elements $$\left(\Tr_{\F/\F_q}\left(\sum\limits_{i=1}^{s}b_i\kappa_{i,t} \, \alpha^{-k} \, \xi^{-ku_{i}}\right) \right)_{0\leq t \leq \ell-1},$$
where $k\in \{0,\ldots,m-1\}$, $b_i\in\F$ such that $\Tr_{\F/\E_i}(b_i)=1$, and each $\kappa_i=(\kappa_{i,0},\ldots ,\kappa_{i,\ell-1}) $ is a codeword in the constituent $C_i$, for all $i$.

Since we have assumed that the $\lambda$-QT code $C$ has the eigenvalue set $\overline{\Omega}=\Omega$, we cannot have full space constituents. In order to have a constituent code that has full rank, there must be at least one $m^{\rm th}$ root of $\lambda$ which is not a root of any diagonal element in the upper-triangular matrix $\widetilde{G}(x)$ that corresponds to $C$. Together with the notation provided in (\ref{notation gamma}), we rewrite the generators of $B$ as 
\begin{equation}\label{subtrace}
\left(\Tr_{\F/\F_q}\left(\sum\limits_{i\in\Gamma}b_i\kappa_{i,t} \, \alpha^{-k} \, \xi^{-ku_{i}}\right) \right)_{0\leq t \leq \ell-1}.
\end{equation}

On the other hand, recall that $\mathbb{C}_{\overline{\Omega}}=(\mathcal{V}_{\overline{\Omega}})^{\perp} |_{\Fq}$, where $(\mathcal{V}_{\overline{\Omega}})^{\perp}$ is Galois closed. Theorem 12.17 in \cite{B} tells us that, if a linear code $\mathcal{C}$ over a field $\K \supseteq \Fq$ is Galois closed, then $\mathcal{C}|_{\Fq}=\Tr_{\K/\F_q}(\mathcal{C})$. Using this property, we rewrite (\ref{common eigencode}) as
\begin{align}\label{common eigencode-2}
\mathbb{C}_{\Omega} &= \mathbb{C}_{\overline{\Omega}}=(\mathcal{V}_{\overline{\Omega}})^{\perp} \biggr\rvert_{\Fq} 
= \left(\sum_{i\in\Gamma} \sum_{j=0}^{e_i-1}(\mathcal{V}_i^{q^j})^{\perp}\right) \biggr\rvert_{\Fq} \notag \\
&=\Tr_{\F/\Fq}\left(\sum_{i\in\Gamma} \sum_{j=0}^{e_i-1} (\mathcal{V}_i^{q^j})^{\perp}\right) \notag \\
&=\Tr_{\F/\Fq}\left(\sum_{i\in\Gamma} \sum_{j=0}^{e_i-1} \overline{C}_i^{q^j}\right).
\end{align}

It is clear that if $C=\{\mathbf{0}_{m\ell}\}$, then $\mathcal{V}_{\Omega}=\F^{\ell}$ and therefore $\mathbb{C}_{\Omega}=\{\mathbf{0}_{\ell}\}$. For the converse, note that, for each $i\in\Gamma$, we have $b_i\kappa_i\in\overline{C}_i$ since $\kappa_i\in C_i$ and $b_i\in\F$. Using (\ref{subtrace}) and (\ref{common eigencode-2}), one can easily see that $B$ is a subcode of $\mathbb{C}_{\Omega}$. Hence, if $\mathbb{C}_{\Omega}=\{\mathbf{0}_{\ell}\}$, then $B=\{\mathbf{0}_{\ell}\}$, which immediately implies $C=\{\mathbf{0}_{m\ell}\}$.
\end{proof}

Now we are ready to state and prove a general spectral bound for a given $\lambda$-QT code.
 
\begin{thm}\label{main thm new}
Let $C\subseteq R^\ell$ be a $\lambda$-QT code of index $\ell$ with nonempty eigenvalue set $\overline{\Omega}\subseteq\Omega$, let $D_{\overline{\Omega}}$ be the $\lambda$-constacyclic code of length $m$ over $\F$ with zero set $\overline{\Omega}$ and let $\mathcal{B}(D_{\overline{\Omega}}) \subseteq \mathcal{P}(\Omega) \times (\N \cup \{\infty\})$ be an arbitrary family of defining set bounds for $D_{\overline{\Omega}}$. For any nonempty $P\subseteq \overline{\Omega}$ such that $(P,d_P)\in \mathcal{B}(D_{\overline{\Omega}})$, we define $\mathcal{V}_{P}:=\bigcap_{\beta\in P}\mathcal{V}_{\beta}$ as the common eigenspace of the eigenvalues in $P$ and let $\mathbb{C}_{P}=(\mathcal{V}_{P})^{\perp} \bigr\rvert_{\Fq}$ denote the corresponding eigencode. Then, 
\[
d(C) \geq \min \left\{ d_P, d(\mathbb{C}_P) \right\}.
\]
\end{thm}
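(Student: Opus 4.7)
The plan is to fix a nonzero codeword $\mathbf{c}\in C$ and show $\mathrm{wt}(\mathbf{c})\geq \min\{d_P,d(\mathbb{C}_P)\}$. I would view $\mathbf{c}$ simultaneously as the $m\times\ell$ array with rows $\mathbf{c}_0,\ldots,\mathbf{c}_{m-1}\in\Fq^\ell$ and, through the isomorphism $\Phi$ in (\ref{identification-2}), as $\mathbf{c}(x)=(c_0(x),\ldots,c_{\ell-1}(x))\in R^\ell$. Since $\mathbf{c}(x)$ lies in the $R$-span of the rows of $\widetilde{G}(x)\bmod I$, we can lift it and write $\mathbf{c}(x)\equiv\mathbf{a}(x)\widetilde{G}(x)\pmod{x^m-\lambda}$ for some $\mathbf{a}(x)\in\Fq[x]^\ell$. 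For any eigenvalue $\beta\in P$ and any $\mathbf{v}\in\mathcal{V}_\beta$, evaluating at $\beta\in\F$ and pairing with $\mathbf{v}$ yields $\mathbf{c}(\beta)\mathbf{v}^{\top}=\mathbf{a}(\beta)\widetilde{G}(\beta)\mathbf{v}^{\top}=0$, so $\mathbf{c}(\beta)\in\mathcal{V}_\beta^{\perp}$ for every $\beta\in P$.

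Next, for each $\mathbf{w}=(w_0,\ldots,w_{\ell-1})\in\mathcal{V}_P=\bigcap_{\beta\in P}\mathcal{V}_\beta$ I would introduce the auxiliary polynomial
\[
f_\mathbf{w}(x):=\sum_{j=0}^{\ell-1}w_jc_j(x)=\sum_{k=0}^{m-1}\left(\sum_{j=0}^{\ell-1}w_jc_{k,j}\right)x^k
\]
in $\F[x]/\langle x^m-\lambda\rangle$. Since $\mathbf{w}\in\mathcal{V}_\beta$ for every $\beta\in P$, the previous step forces $f_\mathbf{w}(\beta)=\mathbf{c}(\beta)\mathbf{w}^{\top}=0$ for all $\beta\in P$, so $f_\mathbf{w}$ is a codeword of the $\lambda$-constacyclic code $D_P\subseteq\F^m$ whose zero set contains $P$. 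The argument then splits according to whether $f_\mathbf{w}$ is nonzero for some $\mathbf{w}\in\mathcal{V}_P$ or identically zero for every $\mathbf{w}\in\mathcal{V}_P$.

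If some $f_\mathbf{w}\neq 0$, then the defining set bound assumption gives $\mathrm{wt}(f_\mathbf{w})\geq d(D_P)\geq d_P$, and each nonzero coefficient $\sum_jw_jc_{k,j}$ of $f_\mathbf{w}$ forces $\mathbf{c}_k\neq\mathbf{0}_\ell$, so counting nonzero rows of the array yields $\mathrm{wt}(\mathbf{c})\geq\mathrm{wt}(f_\mathbf{w})\geq d_P$. Otherwise every row satisfies $\mathbf{c}_k\mathbf{w}^{\top}=0$ for all $\mathbf{w}\in\mathcal{V}_P$, i.e.\ $\mathbf{c}_k\in\mathcal{V}_P^{\perp}\cap\Fq^\ell=\mathbb{C}_P$; since $\mathbf{c}$ is nonzero, at least one row $\mathbf{c}_k$ is a nonzero codeword of $\mathbb{C}_P$ and $\mathrm{wt}(\mathbf{c})\geq\mathrm{wt}(\mathbf{c}_k)\geq d(\mathbb{C}_P)$. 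Combining both cases yields the claimed bound.

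The delicate point, and the reason the version in \cite[Theorem 11]{ELOT} was stated only for $\overline{\Omega}\subsetneq\Omega$, is the extremal regime $\overline{\Omega}=\Omega$ with the choice $P=\Omega$ and $d_P=\infty$. There every $f_\mathbf{w}$ has degree less than $m$ yet vanishes on all $m$ elements of $\Omega$, so the argument is forced into the second case and reduces to $d(C)\geq d(\mathbb{C}_\Omega)$; this would be vacuous if $\mathbb{C}_\Omega=\{\mathbf{0}_\ell\}$ could occur with $C$ nontrivial. Lemma \ref{freaking lemma} rules exactly this out, and it is precisely the new ingredient that lifts the hypothesis from proper subsets of $\Omega$ to an arbitrary nonempty $P\subseteq\overline{\Omega}$.
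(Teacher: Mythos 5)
Your proof is correct and follows essentially the same route as the paper's: you form the inner products of eigenspace vectors with the rows of a codeword, observe that they produce $D_P$-codewords, and split on whether these all vanish, whereas the paper assumes a low-weight codeword and derives a contradiction via the auxiliary matrix $S := V_P\,\mathbf{c}^{\top}$, whose rows are precisely the coefficient vectors of your $f_{\mathbf{w}}$ for basis vectors $\mathbf{w}$ of $\mathcal{V}_P$. Both arguments use Lemma~\ref{freaking lemma} to settle the boundary case $P=\overline{\Omega}=\Omega$, and your case dichotomy neatly absorbs the edge cases that the paper handles separately.
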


\begin{proof}
Given the $\lambda$-QT code $C$ with nonempty eigenvalue set $\overline{\Omega}\subseteq\Omega$, we first consider the case when $P=\overline{\Omega}=\Omega$ and $d_{\Omega}=d(D_{\Omega})$, where the $\lambda$-constacyclic code $D_{\Omega}\subseteq \F^m$ has $\Omega$ as its zero set. Recall that we always have $d_{\Omega}=\infty$ in this case since $D_{\Omega}=\{\mathbf{0}_{m}\}$. Moreover, by Lemma \ref{freaking lemma},  $C$ is the zero code if and only if $\mathbb{C}_{\Omega}=\{\mathbf{0}_{\ell}\}$ if and only if $d(\mathbb{C}_{\Omega})=\infty$. Hence, we have shown that both $d_{\Omega}$ and $d(\mathbb{C}_{\Omega})$ become $\infty$ if and only if $C=\{\mathbf{0}_{m\ell}\}$. If $C$ is not the zero code, then the rows of any codeword $\mathbf{c}\in C$, represented as in (\ref{array}), lie in $\mathbb{C}_{\Omega}$, as shown in the proof of Lemma \ref{freaking lemma}. This immediately implies that $\mbox{wt} (\mathbf{c})\geq d(\mathbb{C}_{\Omega})$, for any nonzero $\mathbf{c}\in C$, and we get $d(C) \geq \min \left\{ d_{\Omega}, d(\mathbb{C}_{\Omega}) \right\}=d(\mathbb{C}_{\Omega})$.

Thus, it remains to prove that $d(C) \geq \min \left\{d_P, d(\mathbb{C}_P) \right\}$, for any fixed $\emptyset \neq P\subseteq \overline{\Omega} \subseteq \Omega$ such that $(P, d_P)\in \mathcal{B}(D_{\overline{\Omega}})$ and $d_P$ is finite. To do this, we assume that $P =\{\alpha\xi^{u_1}, \alpha\xi^{u_2},\ldots,\alpha\xi^{u_r}\}\subseteq \overline{\Omega}$, where $0<r< m$.  We define 
\begin{equation}\label{pmatrix}
\widetilde{H}_P:=\begin{pmatrix}
1&\alpha\xi^{u_1}&(\alpha\xi^{u_1})^2&\ldots&(\alpha\xi^{u_1})^{m-1}\\
\vdots & \vdots & \vdots & \vdots & \vdots \\
1&\alpha\xi^{u_r}&(\alpha\xi^{u_r})^2&\ldots&(\alpha\xi^{u_r})^{m-1}
\end{pmatrix}.
\end{equation}
Recall that $P$ is the zero set of some $D_P\subseteq\F^m$, which contains $D_{\overline{\Omega}}$ as a subcode, and $\widetilde{H}_P$ is a parity-check matrix of this $D_P$. Note that $d(D_{\overline{\Omega}})\geq d(D_P) \geq d_P$, by definition.

In the rest of the proof we focus on the quantity $\min\{d_P, d(\mathbb{C}_P)\}$. We have assumed $P \neq \emptyset$ to ensure that $\widetilde{H}_P$ is well-defined. For any nonzero $\lambda$-QT code $C$, we have $d(C) \ge 1$. In particular, when $\mathcal{V}_P=\{\mathbf{0}_{\ell}\}$, which implies $\mathbb{C}_P= \Fq^\ell$ and $d(\mathbb{C}_P) =1$, we have $d(C) \ge 1 = \min\{d_P,d(\mathbb{C}_P)\}$ since $d_P\geq 1$.

Now assume that $\mathcal{V}_P\neq\{\mathbf{0}_{\ell}\}$ and let $V_P$ be the matrix, say of size $t\times\ell$, whose rows form a basis for the common eigenspace $\mathcal{V}_P$ (cf. (\ref{Eigenspace})). If we set $\widehat{H}_P :=\widetilde{H}_P \otimes V_P$, then $\widehat{H}_P \, \mathbf{c}^{\top}=\mathbf{0}_{rt}^{\top}$, for all $\mathbf{c}\in C$. In other words, $\widehat{H}_P$ is a submatrix of some matrix $H$ of the form in (\ref{parity check matrix}). 

Recall that $d_P$ is assumed to be finite. If $d_P = 1$, then $\min\{d_P, d(\mathbb{C}_P)\}=1$, so $d(C) \ge 1 = \min\{d_P,d(\mathbb{C}_P)\}$ as above. 

When $d_P \ge 2$, we have $\min\{d_P,d(\mathbb{C}_P)\} =1$ if and only if $d(\mathbb{C}_P)=1$, in which case we have $d(C) \ge \min\{d_P,d(\mathbb{C}_P)\}=1$ automatically.

We now let $d_P \ge 2$ and $d(\mathbb{C}_P) \geq 2$ (hence, $\widehat{H}_P$ is well-defined). We assume the existence of a codeword $\mathbf{c}\in C$ of weight $\omega$ such that $0 < \omega < \min\{d_P, d(\mathbb{C}_P)\}$. For each $0 \leq k \leq m-1$, let $\mathbf{c}_k = (c_{k,0}, \ldots, c_{k,\ell-1})$ be the $k^{\rm th}$ row of the codeword $\mathbf{c}$ given as in (\ref{array}) and we consider the column vector $\mathbf{s}_k := V_P \, \mathbf{c}_k^{\top}$. Since $d(\mathbb{C}_P) > \omega$ and $\mbox{wt}(\mathbf{c}_k) \leq \omega$, we have $\mathbf{c}_k \notin\mathbb{C}_P$ and therefore $ \mathbf{s}_k=V_P \, \mathbf{c}_k^{\top}\neq \mathbf{0}_t^{\top}$, for all $\mathbf{c}_k \neq \mathbf{0}_{\ell}$, $k\in\{0,\ldots,m-1\}$. Hence, $0 < \lvert\{\mathbf{s}_k : \mathbf{s}_k \neq \mathbf{0}_t^{\top}\} \rvert \leq \omega < \min\{d_P, d(\mathbb{C}_P)\}$. Let $S := [\mathbf{s}_0\ \mathbf{s}_1 \cdots \mathbf{s}_{m-1}]$. Then $\widetilde{H}_P \, S^{\top} = \mathbf{0}_{r\times t}^{\top}$, which implies that the rows of the matrix $S$ lie in the right kernel of $\widetilde{H}_P$. In other words, any row of $S$ lies in $D_P\subseteq\F^m$ and there is at least one nonzero row in $S$, so by definition of $d_P$, the weight of this nonzero row of $S$ must be at least $d_P$. But this is a contradiction since any row of $S$ has weight at most $\omega < d_P$.
\end{proof}

We emphasize that Theorem \ref{main thm new} allows us to use \emph{any} defining set bound derived for constacyclic codes. The following special cases are immediate after the preparation in Section \ref{basics} (cf. Theorems \ref{Roos} and \ref{shift bound}, and Remark \ref{Roos-shift rem}). The proof is omitted since it is identical to that of \cite[Corollary 12]{ELOT}.

\begin{cor}\label{Cor-Roos-Shift new}
Let $C\subseteq R^\ell$ be a $\lambda$-QT code of index $\ell$ with $\overline{\Omega}\subseteq\Omega$ as its nonempty set of eigenvalues.
\begin{enumerate}
\item[i.] Let $N$ and $M$ be two nonempty subsets of $\Omega$ such that $MN\subseteq\overline{\Omega}$, where $MN:=\frac{1}{\alpha}\bigcup_{\varepsilon\in M} \varepsilon N$. If there exists a consecutive set $M'\supseteq M$ with $|M'|\leq |M|+d_N-2$, then $d(C)\geq \min(|M|+d_N-1,d(\mathbb{C}_{MN}))$.\vspace{3pt}

\item[ii.] For every $A\subseteq\Omega$ that is independent with respect to $\overline{\Omega}$, we have $d(C)\geq \min(|A|,d(\mathbb{C}_{T_A}))$, where $T_A:=A\cap \overline{\Omega}$.
\end{enumerate}
\end{cor}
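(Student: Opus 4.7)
The plan is to deduce both parts directly from Theorem \ref{main thm new} by selecting, in each case, an appropriate family of defining set bounds $\mathcal{B}(D_{\overline{\Omega}})$ for the auxiliary $\lambda$-constacyclic code $D_{\overline{\Omega}}\subseteq\F^m$ (whose zero set is $\overline{\Omega}$). The key observation is that Theorem \ref{main thm new} is stated for \emph{any} defining set bound, so it suffices to verify that the hypotheses of each classical bound for constacyclic codes produce a valid pair $(P,d_P)\in\mathcal{B}(D_{\overline{\Omega}})$ with $P\subseteq\overline{\Omega}$, and then to read off the resulting spectral inequality.

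For part (i), I would apply the Roos bound (Theorem \ref{Roos}) to $D_{\overline{\Omega}}$: the assumed existence of a consecutive set $M'\supseteq M$ with $|M'|\leq|M|+d_N-2$ yields the defining-set-bound pair $(MN,\ |M|+d_N-1)\in\mathcal{B}_4(D_{\overline{\Omega}})$ in the sense of Remark \ref{Roos-shift rem}. The hypothesis $MN\subseteq\overline{\Omega}$ ensures $P:=MN$ is a nonempty subset of $\overline{\Omega}$, so Theorem \ref{main thm new} applies with $d_P=|M|+d_N-1$ and gives
\[
d(C)\ \geq\ \min\bigl(|M|+d_N-1,\ d(\mathbb{C}_{MN})\bigr),
\]
as required.

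For part (ii), I would invoke the shift bound (Theorem \ref{shift bound}) in the form described in Remark \ref{Roos-shift rem}: for any $A\subseteq\Omega$ that is independent with respect to $\overline{\Omega}$ (playing the role of the zero set), the pair $(T_A,|A|)$ belongs to $\mathcal{B}_5(D_{\overline{\Omega}})$, where $T_A=A\cap\overline{\Omega}\subseteq\overline{\Omega}$. When $T_A\neq\emptyset$, Theorem \ref{main thm new} applies immediately with $P=T_A$ and $d_P=|A|$, yielding $d(C)\geq\min(|A|,d(\mathbb{C}_{T_A}))$. The boundary case $T_A=\emptyset$ corresponds only to the first steps of the recursive construction, where $|A|\leq 1$, and the inequality $d(C)\geq 1$ is vacuously true for any nonzero code (with the convention $d(\mathbb{C}_{\emptyset})=\infty$ inherited from the trivial eigencode).

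I expect no genuine obstacle here: all the heavy lifting is packaged inside Theorem \ref{main thm new}, and the corollary is essentially a matter of recognizing that the Roos and shift bounds for constacyclic codes are themselves defining set bounds in the sense required. The only point requiring mild care is the verification that the relevant sets $MN$ and $T_A$ are genuinely subsets of the eigenvalue set $\overline{\Omega}$ (true by hypothesis in (i), and true by construction in (ii)), which is what makes the spectral machinery applicable in the first place.
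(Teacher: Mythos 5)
Your proof is correct and follows exactly the paper's intended route: the paper omits the proof, pointing to Corollary~12 of \cite{ELOT}, which likewise reads off the two parts from the general spectral bound (Theorem~\ref{main thm new}) by choosing the Roos family $\mathcal{B}_4(D_{\overline{\Omega}})$ for part~(i) and the shift family $\mathcal{B}_5(D_{\overline{\Omega}})$ for part~(ii), exactly as you do. Your explicit handling of the degenerate case $T_A=\emptyset$ (which forces $|A|\le 1$ and makes the inequality trivial for any nonzero code) is a welcome point of care that the theorem's nonemptiness hypothesis would otherwise leave implicit.
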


\begin{rem}\label{ext remark}
By using Remark \ref{Roos remark} and Corollary \ref{shift remark}, we can obtain the QT analogues of the BCH-like bound given in \cite[Theorem 2]{ST} and the HT-like bound in \cite[Theorem 1]{LZ}. 
\end{rem}

Let the $\lambda$-QT code $C$ with nonempty eigenvalue set $\overline{\Omega}\subseteq\Omega$ and the associated $\lambda$-constacyclic code $D_{\overline{\Omega}}\subseteq \F^m$ with the selected collection of defining set bounds $\mathcal{B}(D_{\overline{\Omega}})$ be given as in Theorem \ref{main thm new}. From this point on, we denote the estimate of the spectral bound by $$d_{Spec}(\mathcal{B}(D_{\overline{\Omega}}) ; P,d_P):=\min\{d_P, d(\mathbb{C}_P)\},$$ where $\emptyset\neq P \subseteq \overline{\Omega}$ such that $(P,d_P)\in\mathcal{B}(D_{\overline{\Omega}})$, and we set 
\[
d_{Spec}(\mathcal{B}(D_{\overline{\Omega}})):=\displaystyle\max_{\substack{(P, d_P)\in \mathcal{B}(D_{\overline{\Omega}}) \\ \emptyset\neq P \subseteq \overline{\Omega}}}\left\{d_{Spec}(\mathcal{B}(D_{\overline{\Omega}}) ; P,d_P)\right\}.
\]
Observe that $d_{Spec}(\mathcal{B}(D_{\overline{\Omega}}))$ and $d_{Spec}(\mathcal{B}(D_{\overline{\Omega}}) ; P,d_P)$, for any $\emptyset\neq P \subseteq \overline{\Omega}\subseteq\Omega$ with $(P, d_P)\in \mathcal{B}(D_{\overline{\Omega}})$, are well-defined for any nontrivial $\lambda$-QT code $C$. Namely, $d_{Spec}(\mathcal{B}(D_{\overline{\Omega}}))=\infty$ if and only if $C$ is the zero code, which is the only case when $\overline{\Omega}=\Omega$ and $d_{Spec}(\mathcal{B}(D_{\Omega}) ; \Omega,d_{\Omega})=\infty$, by Lemma \ref{freaking lemma}. 

\section{Comparison results} \label{res sect}

In \cite[Section IV]{ST}, Semenov and Trifonov gave a comparison of their BCH-like spectral bound with the Lally bound in \cite{L2}, the Barbier-Chabot-Quintin bound in \cite{Barbier}, and the Tanner bound in \cite{Tanner}, all for QC codes. Here, we consider the performance of the generalized spectral bound given in Theorem \ref{main thm new} against the Jensen bound in Theorem \ref{Jensen bound}. A similar performance comparison against the Lally bound for QT codes will be provided right after.

\subsection{Jensen against Spectral} \label{JS sect}

Given $C$ with the concatenated structure as in (\ref{decomp}), recall the ordering of the constituents in Section \ref{comparison section}, where $C_1=\cdots=C_r=\{ \mathbf{0}_{\ell} \}$ are the zero constituents, $C_{r+1}=\cdots=C_t$ are the full space constituents, and $C_{t+1},\ldots,C_s$ are the nontrivial constituents. Without loss of generality, we assume that $1 \leq d(C_{t+1}) \leq \cdots \leq d(C_s)$.

With this grouping of the constituents, we obtain the following ordering of their distances 
$$1 = d(C_{r+1})=\cdots=d(C_t) \leq d(C_{t+1})\leq\cdots\leq d(C_s),$$
which allows us to rewrite the Jensen bound (given in (\ref{Jensen})) in an organized way as
\begin{equation}\label{Jensen2}
d(C) \geq \displaystyle\min_{r+1\leq i \leq s} \left\{ d(C_{i}) d(\langle\theta_{r+1}\rangle \oplus \cdots \oplus \langle\theta_{i}\rangle) \right\}.
\end{equation}

In other words, the Jensen bound is the minimum of the following distances
$$\setlength\arraycolsep{1pt}
\begin{array}{rl}
d(C_{r+1}) & d(\langle \theta_{r+1}\rangle), \\
 &\vdots\\
d(C_{t}) & d(\langle \theta_{r+1} \rangle \oplus \cdots \oplus \langle \theta_{t} \rangle),\\
d(C_{t+1}) & d(\langle \theta_{r+1} \rangle \oplus \cdots \oplus \langle \theta_{t} \rangle \oplus \langle \theta_{t+1} \rangle), \\
 & \vdots \\
d(C_{s}) & d(\langle \theta_{r+1} \rangle \oplus \cdots \oplus \langle \theta_{t} \rangle \oplus \cdots \oplus\langle \theta_{s}\rangle).
\end{array}$$
Since $d(C_{r+1})=\cdots=d(C_t)=1$ and $d(\langle \theta_{r+1}\rangle)\geq d(\langle \theta_{r+1}\rangle \oplus\langle \theta_{r+2}\rangle)\geq\cdots \geq d(\langle \theta_{r+1} \rangle \oplus \cdots \oplus \langle \theta_{t} \rangle)$, we can shorten the above list as 
\begin{equation}\label{Jensen list}
\setlength\arraycolsep{1pt}
\begin{array}{rl}
& d(\langle \theta_{r+1} \rangle \oplus \cdots \oplus \langle \theta_{t} \rangle),\\
d(C_{t+1}) & d(\langle \theta_{r+1} \rangle \oplus \cdots \oplus \langle \theta_{t} \rangle \oplus \langle \theta_{t+1} \rangle), \\
 &\ \vdots \\
d(C_{s}) & d(\langle \theta_{r+1} \rangle \oplus \cdots \oplus \langle \theta_{t} \rangle \oplus \cdots \oplus\langle \theta_{s}\rangle).
\end{array}
\end{equation}

Now we look into the spectral bound in Theorem \ref{main thm new} in the particular case when $P = \overline{\Omega}$. The related $\lambda$-constacyclic code over $\Fq$ with zero set $\overline{\Omega}$, say $C_{\overline{\Omega}}$, has check polynomial $f(x)=f_{r+1}(x)\cdots f_t(x)$ and by using CRT we get
\begin{equation} \label{largest possible consta}
C_{\overline{\Omega}} = \left\langle\frac{x^m-\lambda}{f(x)}\right\rangle\cong \hspace{-3pt}\bigoplus_{i=r+1}^{t} \F_q[x]/\langle f_i(x)\rangle =\langle \theta_{r+1} \rangle \oplus \ldots \oplus \langle \theta_{t} \rangle.
\end{equation}
Hence, $C_{\overline{\Omega}}=\langle \theta_{r+1} \rangle \oplus \ldots \oplus \langle \theta_{t} \rangle$ is the $\lambda$-constacyclic code over $\Fq$ with the largest possible set $P = \overline{\Omega}$ associated to all eigenvalues. Now we fix $d_{\overline{\Omega}}=d(D_{\overline{\Omega}})$, where $D_{\overline{\Omega}}$ is the $\lambda$-constacyclic code over $\F$ with zero set $\overline{\Omega}$. Using (\ref{common eigencode}) and (\ref{largest possible consta}), we get
\begin{eqnarray}\label{SpecBounds2}
d(C) &\geq&  \min \left\{ d_{\overline{\Omega}}, d(\mathbb{C}_{\overline{\Omega}}) \right\}  \notag\\
&=& \min \left\{ d_{\overline{\Omega}}, d\left(\left(\sum_{i\in\Gamma} \sum_{j=0}^{e_i-1} \overline{C}_i^{q^j}\right) \biggr\rvert_{\Fq}\right) \right\}.
\end{eqnarray} 
Comparing (\ref{SpecBounds2}) with (\ref{Jensen list}) leads to the following cases.

{\bf\noindent	Case 1:}
When $d_{\overline{\Omega}}\leq d(\mathbb{C}_{\overline{\Omega}})$. In this case, the fixed spectral bound (\ref{SpecBounds2}) yields $d(C)\geq d_{\overline{\Omega}}$. Note that $d(C_{\overline{\Omega}})$ is also a member of the list (\ref{Jensen list}), as the topmost element. For an easier representation, let 
$$A_i := d(\langle \theta_{r+1} \rangle \oplus \cdots \oplus \langle \theta_{i} \rangle), \mbox{ for } t+1\leq i \leq s,$$	
so that the collection (\ref{Jensen list}) becomes 
\begin{equation}\label{Jensen list2}
d(C_{\overline{\Omega}}), d(C_{t+1})A_{t+1},\ldots,d(C_{s})A_{s}.
\end{equation}

By definition of $d_{\overline{\Omega}}$, we have $d_{\overline{\Omega}}\leq d(C_{\overline{\Omega}})$. Using Lemma \ref{cons-eig lemma}, we get $d(C_i) \ge d(\overline{C}_i) \geq d(\Bbb{C}_{\overline{\Omega}}) \ge d_{\overline{\Omega}}$, for $t+1\leq i \leq s$. Hence, the estimate of the Jensen bound is greater than or equal to the estimate of the fixed spectral bound in this case. 

{\bf\noindent	Case 2:} 
When $d(\mathbb{C}_{\overline{\Omega}})\leq d_{\overline{\Omega}}$. In this case, the fixed spectral bound (\ref{SpecBounds2}) yields $d(C)\geq d(\mathbb{C}_{\overline{\Omega}})$. Since Lemma \ref{cons-eig lemma} tells us $d(C_i) \ge d(\overline{C}_i) \geq d(\Bbb{C}_{\overline{\Omega}})$, for any $i\in\Gamma$, we also have $d(C_i) A_i \ge d(\Bbb{C}_{\overline{\Omega}})$, for all $t+1\leq i \leq s$. Therefore, the Jensen bound is at least as good as the fixed spectral bound in this case as well.

We have shown the following.
\begin{prop}\label{Jensen win}
Let $C\subseteq R^\ell$ be a $\lambda$-QT code with a nonempty eigenvalue set $\overline{\Omega}\subseteq\Omega$ and let $D_{\overline{\Omega}}$ be the $\lambda$-constacyclic code over $\F$ with zero set $\overline{\Omega}$. Let $d_J$ denote the estimate on $d(C)$ of the Jensen bound. Then, $d(C)\geq d_J \geq \min \left\{d(D_{\overline{\Omega}}), d(\mathbb{C}_{\overline{\Omega}}) \right\}$.
\end{prop}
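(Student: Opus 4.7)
The first inequality $d(C)\geq d_J$ is simply the Jensen bound (Theorem \ref{Jensen bound}), so the content lies entirely in the second inequality $d_J\geq \min\{d(D_{\overline{\Omega}}),d(\mathbb{C}_{\overline{\Omega}})\}$. My plan is to use the repackaged form of the Jensen bound in (\ref{Jensen list2}): after absorbing the zero constituents and collapsing the full-space constituents into a single leading entry, $d_J$ is the minimum of $d(C_{\overline{\Omega}})$ and the products $d(C_i)\,A_i$ for $t+1\leq i \leq s$, where $A_i := d(\langle \theta_{r+1}\rangle\oplus\cdots\oplus\langle \theta_i\rangle)$. It therefore suffices to show that \emph{every} entry in this list is bounded below by $\min\{d(D_{\overline{\Omega}}),d(\mathbb{C}_{\overline{\Omega}})\}$, and I would split this check into the two natural cases suggested by the min.

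In the first case, $d(D_{\overline{\Omega}})\leq d(\mathbb{C}_{\overline{\Omega}})$, so the target lower bound is $d(D_{\overline{\Omega}})$. For the leading entry I would argue that $C_{\overline{\Omega}}$, described via (\ref{largest possible consta}), and $D_{\overline{\Omega}}$ share the same generator polynomial $\prod_{i\in\Gamma}f_i(x)$ (well-defined over $\F_q$ because $\overline{\Omega}$ is stable under $\F_q$-conjugacy by Lemma \ref{eigenspace lemma}), so $C_{\overline{\Omega}}$ is the subfield subcode of $D_{\overline{\Omega}}$ and hence $d(C_{\overline{\Omega}})\geq d(D_{\overline{\Omega}})$. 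For the remaining entries, Lemma \ref{cons-eig lemma} gives $d(C_i)\geq d(\mathbb{C}_{\overline{\Omega}})\geq d(D_{\overline{\Omega}})$ for each nontrivial $i\in\Gamma$, and since $A_i\geq 1$ the product is still at least $d(D_{\overline{\Omega}})$.

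In the second case, $d(\mathbb{C}_{\overline{\Omega}})\leq d(D_{\overline{\Omega}})$, so the target is $d(\mathbb{C}_{\overline{\Omega}})$. For the leading entry, combining the subfield subcode observation from the first case with the hypothesis of this case yields $d(C_{\overline{\Omega}})\geq d(D_{\overline{\Omega}})\geq d(\mathbb{C}_{\overline{\Omega}})$. For the remaining entries, Lemma \ref{cons-eig lemma} again supplies $d(C_i)\,A_i\geq d(C_i)\geq d(\mathbb{C}_{\overline{\Omega}})$ for $t+1\leq i\leq s$. The two cases together give the claimed inequality. The technical heart of the argument is really Lemma \ref{cons-eig lemma}, which uniformly controls every nontrivial constituent from below by the common eigencode; the only mild obstacle I anticipate is handling the leading entry $d(C_{\overline{\Omega}})$ in (\ref{Jensen list2}) cleanly, which is why I single out the subfield subcode identification of $C_{\overline{\Omega}}$ inside $D_{\overline{\Omega}}$ as a preliminary step before invoking Lemma \ref{cons-eig lemma} for the rest of the list.
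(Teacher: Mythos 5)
Your proof is correct and follows essentially the same route as the paper: you reorganize the Jensen list exactly as in (\ref{Jensen list2}), split on which of $d(D_{\overline{\Omega}})$ and $d(\mathbb{C}_{\overline{\Omega}})$ is smaller, control the leading entry via the subfield-subcode inclusion $C_{\overline{\Omega}}\subseteq D_{\overline{\Omega}}$, and control the remaining entries with Lemma~\ref{cons-eig lemma}. The only cosmetic difference is that you spell out the subfield-subcode argument for $d(C_{\overline{\Omega}})\geq d(D_{\overline{\Omega}})$, which the paper states briefly as a consequence of the definition of $d_{\overline{\Omega}}$.
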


However, for some choices of $\mathcal{B}(D_{\overline{\Omega}})$, if we consider the associated $d_{Spec}(\mathcal{B}(D_{\overline{\Omega}}))$ that maximizes $\min \left\{ d_P, d(\mathbb{C}_P) \right\}$ over all choices of $(P,d_P)\in\mathcal{B}(D_{\overline{\Omega}})$ with $\emptyset\neq P\subseteq\overline{\Omega}$, then Proposition \ref{Jensen win} above does not hold with $\min \left\{d(D_{\overline{\Omega}}), d(\mathbb{C}_{\overline{\Omega}}) \right\}$ replaced by $d_{Spec}(\mathcal{B}(D_{\overline{\Omega}}))$, as presented in the next example. First, we fix some further notation. Let 
\begin{multline*}
\mathcal{B}_1(D_{\overline{\Omega}}):=\{(P,d(D_P)) : \emptyset \neq P\subseteq \overline{\Omega}, \\
D_P\subseteq\F^m \mbox{\ has\ zero\ set\ } P\},
\end{multline*} 
and let $\mathcal{B}_2(D_{\overline{\Omega}}), \mathcal{B}_3(D_{\overline{\Omega}}), \mathcal{B}_4(D_{\overline{\Omega}})$ denote the BCH, HT and Roos bounds for $D_{\overline{\Omega}}$, respectively, as done in Section \ref{consta sect}. Note that $\mathcal{B}_2(D_{\overline{\Omega}})\subseteq \mathcal{B}_3(D_{\overline{\Omega}})\subseteq \mathcal{B}_4(D_{\overline{\Omega}})$. We also consider their union denoted by 
\begin{eqnarray}\label{big B}
\mathcal{B}_u(D_{\overline{\Omega}})&:=& \mathcal{B}_1(D_{\overline{\Omega}})\cup\mathcal{B}_2(D_{\overline{\Omega}})\cup \mathcal{B}_3(D_{\overline{\Omega}})\cup \mathcal{B}_4(D_{\overline{\Omega}}) \notag\\ &=& \mathcal{B}_1(D_{\overline{\Omega}})\cup \mathcal{B}_4(D_{\overline{\Omega}})
\end{eqnarray}
as an additional choice.

\begin{ex}\label{ex2}
Let $\F_3(\alpha):=\F_{3^6}$ be the splitting field of $x^7+1$. We consider the $[14,7,4]_3~2$-QT code with $(m,\ell,r)=(7,2,1)$, generated by 
$2x^5 + x^3 + x^2 + 2$ and $x^5 + x^3 + x^2$. The corresponding upper-triangular matrix  $\widetilde{G}(x)$ is
\[
\begin{pmatrix}
x + 1  & 2x^5  + 2 x^4 + 2 x^3 + 2\\
0 &  x^6 + 2 x^5 + x^4 + 2 x^3 + x^2 + 2 x + 1
\end{pmatrix},
\]
where $\det(\widetilde{G}(x))=x^7+1$ and, therefore, $\overline{\Omega}=\Omega = \{\alpha^{52}, \alpha^{156}, \alpha^{260}, 2, \alpha^{468}, \alpha^{572}, \alpha^{676}\}$.
Over $\F_3$, the scalar generator matrix is of the form
\[
\setcounter{MaxMatrixCols}{20}
\begin{pmatrix}
1 & 0 & 0 & 0 & 0 & 0 & 2 & 0 & 0 & 2 & 2 & 2 & 0 & 1\\
0 & 1 & 0 & 0 & 0 & 0 & 1 & 0 & 2 & 2 & 2 & 1 & 1 & 0\\
0 & 0 & 1 & 0 & 0 & 0 & 2 & 0 & 0 & 1 & 1 & 1 & 1 & 2\\
0 & 0 & 0 & 1 & 0 & 0 & 1 & 0 & 1 & 0 & 0 & 1 & 2 & 2\\
0 & 0 & 0 & 0 & 1 & 0 & 2 & 0 & 1 & 2 & 0 & 1 & 2 & 2\\
0 & 0 & 0 & 0 & 0 & 1 & 1 & 0 & 1 & 1 & 1 & 0 & 2 & 0\\
0 & 0 & 0 & 0 & 0 & 0 & 0 & 1 & 2 & 1 & 2 & 1 & 2 & 1
\end{pmatrix}.
\]
There are exactly two choices of $P\subseteq\overline{\Omega}$ for which $d_{Spec}(\mathcal{B}_1(D_{\overline{\Omega}}) ; P,d_P)$ yields $d_{Spec}(\mathcal{B}_1(D_{\overline{\Omega}}))$, namely \begin{center}$P=\{\alpha^{52}, \alpha^{468}, \alpha^{572}\}$ and $P=\{\alpha^{156}, \alpha^{260}, \alpha^{676}\}$,\end{center} that give $d_P = d(D_P) = 4$ and $d(\mathbb{C}_P)=\infty$, where $D_P$ is the $2$-constacyclic code of length $7$ with zero set $P$ over $\F_{3^6}$. However, the BCH-like, the HT-like, and the Roos-like bounds for QT codes, namely $d_{Spec}(\mathcal{B}_2(D_{\overline{\Omega}}))$, $d_{Spec}(\mathcal{B}_3(D_{\overline{\Omega}}))$ and $d_{Spec}(\mathcal{B}_4(D_{\overline{\Omega}}))$, are not sharp, for any nonempty $P\subseteq\overline{\Omega}$, where they all yield $d_P=|P| + 1 =3$ and $d(\mathbb{C}_{P})=\infty$, for $P=\{\alpha^{52}, \alpha^{468}\}$, as their best estimate. Hence, we have $d_{Spec}(\mathcal{B}_u(D_{\overline{\Omega}}))=d_{Spec}(\mathcal{B}_1(D_{\overline{\Omega}}))=4$, whereas $d_{J}=2$.
\end{ex}

\begin{rem}
Taking a sufficiently large $P$ as well as considering different choices of defining set bounds is crucial when using the spectral bound. For instance, for any fixed choice of defining set bound $\mathcal{B}(D_{\overline{\Omega}})$, if we consider the behaviour of $d_{Spec}(\mathcal{B}(D_{\overline{\Omega}}) ; P, d_P)$ over the nonempty subsets $P\subseteq \overline{\Omega}$ with $(P,d_P)\in\mathcal{B}(D_{\overline{\Omega}})$, then $d_P$ might be nondecreasing as $P$ approaches $\overline{\Omega}$, whereas $d(\mathbb{C}_P)$ is nonincreasing, which follows from (\ref{common eigencode}) since we might add more to the sum. Hence, the optimized value $d_{Spec}(\mathcal{B}(D_{\overline{\Omega}}))$ might not be obtained by involving all eigenvalues, unlike the case of constacyclic codes, as any nonconstant $d_P$ grows proportionately to the size of $P$. 
\end{rem}

\begin{ex}\label{ex 5}
Let us consider the same $[14,7,4]_3~2$-QT code in Example \ref{ex2} above with eigenvalues $$\overline{\Omega}=\Omega = \{\alpha^{52}, \alpha^{156}, \alpha^{260}, 2, \alpha^{468}, \alpha^{572}, \alpha^{676}\}.$$ It is obvious that any subset of $\overline{\Omega}$ with 1 or 2 elements is consecutive. We first take $P=\{\alpha^{52}\}$ and use the BCH-like bound, where we obtain $d_P=2$ and $d(\mathbb{C}_{P})=\infty$, and hence $d_{Spec}(\mathcal{B}_2(D_{\overline{\Omega}}) ; P,d_P)=2$ in this case. If we add one more element and consider $P=\{2, \alpha^{52}\}$, then we get $d_P=3$ but $d(\mathbb{C}_{P})=1$, which makes $d_{Spec}(\mathcal{B}_2(D_{\overline{\Omega}}) ; P,d_P)=1$. If we replace $2$ by $\alpha^{468}$ and take $P=\{\alpha^{52}, \alpha^{468}\}$ instead, then we obtain $d_{Spec}(\mathcal{B}_2(D_{\overline{\Omega}}) ; P,d_P)=d_{Spec}(\mathcal{B}_2(D_{\overline{\Omega}}))=3$, as indicated in Example \ref{ex2} above.
\end{ex}

\subsection{Lally against Spectral} \label{LS sect}

Lally derived a lower bound on the minimum distance of a given QC code in \cite{L2}. Her findings can be adapted to QT codes easily and the same minimum distance bound can be extended to the QT codes in an analogous way.

Let $C$ be a $\lambda$-QT code of length $m\ell$ and index $\ell$ over $\Fq$. Let $\{1, \gamma, \ldots, \gamma^{\ell-1}\}$ be some fixed choice of basis of $\F_{q^\ell}$ as a vector space over $\Fq$. We view the codewords of $C$ as $m\times \ell$ arrays as in (\ref{array}) and consider the following map:
\begin{equation*}\begin{array}{lll} 
\tau : \hspace{2cm} \F_q^{m\ell} & \longrightarrow & \hspace{.6cm}\F_{q^\ell}^m  \\[8pt]
\mathbf{c}=\left(
  \begin{array}{ccc}
    c_{0,0} & \ldots & c_{0,\ell-1} \\
    \vdots &  & \vdots \\
    c_{m-1,0} & \ldots & c_{m-1,\ell-1} \\
  \end{array}
\right) & \longmapsto &   \left(\begin{array}{c}
    c_{0} \\
    \vdots \\
    c_{m-1} \\
  \end{array}\right) ,
\end{array}
\end{equation*}
where $c_i=c_{i,0} + c_{i,1} \, \gamma + \cdots + c_{i,\ell-1} \, \gamma^{\ell-1}\in \F_{q^\ell}$, for all $0\leq i \leq m-1$.

Clearly, $\tau(\mathbf{c})$ lies in some $\lambda$-constacyclic code, for any $\mathbf{c}\in C$. We now define the smallest such constacyclic code as $\widehat{C}$, which contains all of $\tau(C)$. First, we equivalently extend the map $\tau$ above to the polynomial description of codewords as (cf. (\ref{identification-1}))
\begin{eqnarray}\label{Lal}
\tau : \Fq[x]^{\ell} &\To& \F_{q^{\ell}}[x] \\
\mathbf{c}(x)=(c_0(x),\ldots,c_{\ell-1}(x)) &\longmapsto& c(x)=\displaystyle\sum_{j=0}^{\ell-1}c_j(x) \, \gamma^j .\notag
\end{eqnarray}
If $C$ has generating set $\{\mathbf{f}_1,\ldots,\mathbf{f}_r\}$, where 
\[
\mathbf{f}_k := (f^{(k)}_{0}(x),\ldots,f^{(k)}_{\ell-1}(x))\in \Fq[x]^{\ell},
\]
for each $k \in \{1,\ldots,r\}$, then $$\widehat{C}=\langle \gcd(f_1(x),\ldots, f_r(x), x^m-\lambda)\rangle$$ such that $f_k=\tau(\mathbf{f}_k)\in \F_{q^{\ell}}[x]$, for all $k$ \cite{L2}.

Next, we consider the $q$-ary linear code of length $\ell$ that is generated by the rows of the codewords in $C$, which are represented as $m\times \ell$ arrays as in (\ref{array}). Recall that this code was denoted by $B$ in the proof of Lemma \ref{freaking lemma}. Namely, $B$ is the linear block code of length $\ell$ over $\Fq$, generated by $\{\mathbf{f}_{k,i} \, : \, k \in \{1,\ldots,r\},\ i \in \{0,\ldots,m-1\}\} \subseteq \F_{q}^{\ell}$, where each $\mathbf{f}_{k,i}:=(f^{(k)}_{i,0},\ldots,f^{(k)}_{i,\ell-1}) \in\Fq^\ell$ is the vector of the $i^{\rm th}$ coefficients of the polynomials 

\[
f^{(k)}_{j} (x)=f^{(k)}_{0,j}+f^{(k)}_{1,j}x+\cdots+f^{(k)}_{m-1,j}x^{m-1},
\]
for all $k \in \{1,\ldots,r\}$ and $j \in \{0,\ldots,\ell-1\}$.

Since the image of any codeword $\mathbf{c}(x) \in C$ under the map $\tau$ is an element of the $\lambda$-constacyclic code $\widehat{C}$ over $\F_{q^\ell}$, there are at least $d(\widehat{C})$ nonzero rows in each nonzero codeword of $C$. For any $i \in \{0,\ldots,m-1\}$, the $i^{\rm th}$ row $\mathbf{c}_i=(c_{i,0},\ldots,c_{i,\ell-1})$ of any $\mathbf{c}\in C$ can be viewed as a codeword in $B$, therefore, a nonzero $\mathbf{c}_i$ has weight at least $d(B)$. Hence, we have shown the following.

\begin{thm}(cf.~\cite[Theorem 5]{L2})
Let $C$ be an $r$-generator $\lambda$-QT code of length $m\ell$ and index $\ell$ over $\F_{q}$ with generating set $\{\mathbf{f}_1,\ldots,\mathbf{f}_r\}\subseteq \Fq[x]^{\ell}$. Let the $\lambda$-constacyclic code $\widehat{C}\subseteq \F_{q^\ell}^m$ and the linear code $B\subseteq\Fq^{\ell}$ be defined as above. Then, we have \[d(C)\geq d(\widehat{C}) \, d(B).\] 
\end{thm}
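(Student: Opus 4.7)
The plan is to count the nonzero entries of an arbitrary nonzero codeword $\mathbf{c}\in C$ by grouping them row by row when $\mathbf{c}$ is displayed as the $m\times\ell$ array in (\ref{array}). The total weight $\mbox{wt}(\mathbf{c})$ is the sum of the weights of its $m$ rows, so I will obtain a lower bound of the form (number of nonzero rows) $\times$ (minimum weight of a nonzero row), and then identify each of these two factors with $d(\widehat{C})$ and $d(B)$ respectively.

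First I would pass from $\mathbf{c}\in C$ to the polynomial $c(x)=\tau(\mathbf{c}(x))=\sum_{j=0}^{\ell-1}c_j(x)\gamma^j\in\F_{q^\ell}[x]$ via the map in (\ref{Lal}). Since $\{1,\gamma,\ldots,\gamma^{\ell-1}\}$ is an $\F_q$-basis of $\F_{q^\ell}$, the $i^{\rm th}$ coefficient of $c(x)$, namely $c_{i,0}+c_{i,1}\gamma+\cdots+c_{i,\ell-1}\gamma^{\ell-1}$, is zero if and only if the $i^{\rm th}$ row $\mathbf{c}_i$ of the array in (\ref{array}) is the zero vector in $\Fq^\ell$. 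Therefore the Hamming weight of $c(x)$, viewed as a vector in $\F_{q^\ell}^m$, equals the number of nonzero rows of $\mathbf{c}$. Because $\tau$ is $\Fq$-linear and $\tau(\mathbf{f}_k)=f_k$, the image $\tau(C)$ is contained in the $\F_{q^\ell}$-linear span of $\{f_1,\ldots,f_r\}$ modulo $x^m-\lambda$; by the characterisation recalled just before the theorem, this span is contained in $\widehat{C}=\langle\gcd(f_1,\ldots,f_r,x^m-\lambda)\rangle$. Hence $c(x)\in\widehat{C}$, and if $\mathbf{c}\ne\mathbf{0}_{m\ell}$ then $c(x)\ne 0$ (any nonzero row prevents all coefficients from vanishing), so $\mathbf{c}$ has at least $d(\widehat{C})$ nonzero rows.

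Next I would bound the weight contributed by each nonzero row. For each $i\in\{0,\ldots,m-1\}$, the row $\mathbf{c}_i=(c_{i,0},\ldots,c_{i,\ell-1})\in\Fq^\ell$ is an $\Fq$-linear combination of the generating vectors $\mathbf{f}_{k,i'}$ of $B$ described in the excerpt: indeed writing $\mathbf{c}(x)=\sum_k a_k(x)\mathbf{f}_k(x)\bmod (x^m-\lambda)$ with $a_k(x)\in\Fq[x]$, every coefficient vector in $\Fq^\ell$ that appears on the left is an $\Fq$-linear combination of the coefficient vectors on the right (taking into account the $\lambda$-constashift, which only permutes indices and multiplies by $\lambda$, leaving $B$ invariant). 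Hence $\mathbf{c}_i\in B$, so every nonzero row has weight at least $d(B)$.

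Putting the two estimates together,
\[
\mbox{wt}(\mathbf{c})=\sum_{i=0}^{m-1}\mbox{wt}(\mathbf{c}_i)\geq (\mbox{number of nonzero rows})\cdot d(B)\geq d(\widehat{C})\,d(B),
\]
and taking the minimum over $\mathbf{0}\ne\mathbf{c}\in C$ yields $d(C)\geq d(\widehat{C})\,d(B)$. The main subtlety, and the step I would take most care with, is the argument that every row $\mathbf{c}_i$ actually belongs to $B$: one must verify that the $\lambda$-constashift action used in the $R$-module structure of $C$ produces only rows that are already $\Fq$-linear combinations of the coefficient vectors $\mathbf{f}_{k,i'}$ generating $B$, which is precisely where the invariance of $B$ under the cyclic/constacyclic index-shift of rows is used.
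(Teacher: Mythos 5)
Your proof is correct and takes essentially the same route as the paper: map a nonzero codeword through $\tau$ to land in $\widehat{C}$ (giving at least $d(\widehat{C})$ nonzero rows), note that each row lies in $B$ (giving weight at least $d(B)$ per nonzero row), and multiply. You supply more detail than the paper's terse proof — in particular the observation that $\{1,\gamma,\ldots,\gamma^{\ell-1}\}$ being a basis makes the Hamming weight of $\tau(\mathbf{c})$ equal the number of nonzero rows, and the constashift argument that $\mathbf{c}_i \in B$ — though you should say $\tau$ is $\F_q[x]$-linear rather than merely $\F_q$-linear, since that module compatibility is what lets $\tau(C)$ land in $\widehat{C}$.
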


We now prove that an analogue of Proposition \ref{Jensen win} holds between the Lally and spectral bounds, which is an immediate result of Lemma \ref{freaking lemma}.

\begin{cor}\label{Lally win}
Let $C\subseteq R^\ell$ be a nontrivial $\lambda$-QT code with the eigenvalue set $\overline{\Omega}=\Omega$ and let $d_L$ denote the estimate on $d(C)$ of the Lally bound. Then, $d(C)\geq d_L \geq d(\mathbb{C}_{\Omega})$. In particular, $d_L \geq \min\{d_{\Omega}, d(\mathbb{C}_{\Omega})\}$, for any $d_{\Omega}\geq 1$.
\end{cor}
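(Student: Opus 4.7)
The plan is to reduce the corollary to Lemma \ref{freaking lemma} by identifying the code $B \subseteq \F_q^\ell$ that appears in the statement of the Lally bound with the code of the same name introduced in the proof of Lemma \ref{freaking lemma}. Both are defined as the $\F_q$-linear code of length $\ell$ spanned by all rows of all codewords of $C$ when the latter are viewed as $m\times \ell$ arrays as in (\ref{array}); once this is noted, no further construction is needed.

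First, I would invoke Lemma \ref{freaking lemma} in the direction that, because $C$ is nontrivial and $\overline{\Omega}=\Omega$, one has $\mathbb{C}_{\Omega} \neq \{\mathbf{0}_{\ell}\}$, so that $d(\mathbb{C}_{\Omega})$ is a finite positive integer. Second, I would extract from the proof of that lemma the containment $B \subseteq \mathbb{C}_{\Omega}$, which was established there by writing every row of a codeword in the trace form (\ref{subtrace}) and matching it against the description (\ref{common eigencode-2}) of $\mathbb{C}_{\Omega}$ as the trace image of $\sum_{i\in\Gamma}\sum_{j=0}^{e_i-1}\overline{C}_i^{q^j}$. This containment is a linear inclusion of codes, hence $d(B) \geq d(\mathbb{C}_{\Omega})$.

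To finish, I would combine this with the trivial bound $d(\widehat{C}) \geq 1$ and the defining inequality $d_L = d(\widehat{C})\,d(B)$ from the Lally bound to get
\[
d_L \;=\; d(\widehat{C})\cdot d(B) \;\geq\; 1 \cdot d(\mathbb{C}_{\Omega}) \;=\; d(\mathbb{C}_{\Omega}),
\]
which is the main assertion. The ``in particular'' clause is then immediate, since $\min\{d_{\Omega}, d(\mathbb{C}_{\Omega})\} \leq d(\mathbb{C}_{\Omega})$ holds tautologically for every $d_{\Omega}\geq 1$, so $d_L \geq d(\mathbb{C}_{\Omega}) \geq \min\{d_{\Omega}, d(\mathbb{C}_{\Omega})\}$.

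The proof contains essentially no obstacle: the only mildly delicate point is making sure that the $B$ occurring in the Lally construction (generated by coefficient vectors $\mathbf{f}_{k,i}$ of a fixed $R$-generating set) and the $B$ of Lemma \ref{freaking lemma} (generated by all rows of all codewords) produce the same $\F_q$-linear code, which holds because the row set of any codeword is an $\F_q$-linear combination of row sets of generators and conversely. Everything else is a direct bookkeeping consequence of Lemma \ref{freaking lemma}, which is why the corollary is stated as an immediate consequence of that lemma.
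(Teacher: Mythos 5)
Your proof is correct and takes essentially the same route as the paper: it invokes the containment $B \subseteq \mathbb{C}_{\Omega}$ established in the proof of Lemma \ref{freaking lemma}, concludes $d(B) \geq d(\mathbb{C}_{\Omega})$, and combines with $d_L = d(\widehat{C})\,d(B) \geq d(B)$. The extra care you take in identifying the $B$ from the Lally construction with the $B$ from the lemma's proof is a point the paper also makes explicitly in Section \ref{LS sect} just before stating the Lally bound, so your argument matches the paper's intent precisely.
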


\begin{proof}
By Lemma \ref{freaking lemma}, we already know that $B$ is a subcode of $\mathbb{C}_{\Omega}$ and therefore we have $d(B) \geq d(\mathbb{C}_{\Omega})$. Hence, $d_L=d(\widehat{C})d(B) \geq d(\mathbb{C}_{\Omega}) \geq \min\{d_{\Omega}, d(\mathbb{C}_{\Omega})\}$.
\end{proof}

However, Corollary \ref{Lally win} does not apply to the case when $d_{Spec}(\mathcal{B}(D_{\Omega}))$ is considered, instead of some fixed choice for $\min\{d_{\Omega}, d(\mathbb{C}_{\Omega})\}$, as highlighted by the codes in the next two examples.

\begin{table*}[t!]
\caption{The outcomes on the performance comparison of the bounds. We list the number of instances, with double counting allowed, when a specified bound reached the actual minimum distance, \textit{i.e.}, it was sharp, and when it was greater than or equal to the other two bounds, \textit{i.e.}, it was best-performing.}
\label{table:plot1}
\renewcommand{\arraystretch}{1.2}
\centering
\begin{tabular}{c | c |c|c | c|c|c | c|c|c}
\hline
 & \multicolumn{3}{c|}{QC for $q=2$}  & \multicolumn{3}{c|}{QC for $q=3$} & \multicolumn{3}{c}{QT for $q=3, \lambda=2$} \\ 
\hline
bound &  $d_{L}$ & $d_{Spec}$ & $d_J$ &  $d_{L}$ & $d_{Spec}$ & $d_J$ &  $d_{L}$ & $d_{Spec}$ & $d_J$ \\ 
\hline 
sharp & $16\,563$ & $27\,438$ & $31\,368$ & $14\,704$ & $31\,711$ & $33\,799$ & $4\,037$ & $9\,257$ & $9\,780$ \\
best-performing & $20\,821$ & $61\,012$ & $92\,506$ & $19\,359$ & $83\,284$ & $143\,010$ & $5\,041$ & $35\,363$ & $65\,275$ \\
\hline
\# nontrivial $C$ & \multicolumn{3}{c|}{$93\,467$} & \multicolumn{3}{c|}{$143\,602$} & \multicolumn{3}{c}{$65\,589$} \\
\hline 
\end{tabular} 
\end{table*}

\begin{table*}[t!]
\caption{The outcomes on the comparison of the bounds in strictly decreasing patterns. The label of being sharp applies only to the bound with the largest value among the three.}
\label{table:plot2}
\renewcommand{\arraystretch}{1.2}
\centering
\begin{tabular}{c | c |c|c|c | c|c|c|c | c|c|c|c}
\hline
 & \multicolumn{4}{c|}{QC for $q=2$}  & \multicolumn{4}{c|}{QC for $q=3$} & \multicolumn{4}{c}{QT for $q=3, \lambda=2$} \\ 
\hline
decreasing order & JSL & JLS & SJL & LJS &  JSL & JLS & SJL & LJS &  JSL & JLS & SJL & LJS \\ 
\hline 

count & 
$17\,976$ & $77$ & $398$ & $11$ & 

$21\,622$ & $25$ & $102$ & $1$ &

$22\,475$ & $3$ & $165$ & $1$ \\

sharp    & 
$3\,203$ & $69$ & $94$ & $4$ & 

$965$ & $9$ & $24$ & $0$ &

$1\,030$ & $0$ & $26$ & $0$ \\

\hline

\# nontrivial $C$ & \multicolumn{4}{c|}{$56\,243$} & \multicolumn{4}{c|}{$53\,004$} & \multicolumn{4}{c}{$52\,915$} \\
	\hline 
\end{tabular} 
\end{table*}

\begin{ex}\label{ex4}
Let $\F_3(\alpha) := \F_{81}$ be the splitting field of $x^{10}+1$. Consider the $[20, 10, 4]_3$ 2-QT code with $(m,\ell,r) = (10,2,1)$, generated by $2x^8 + 2x^7 + x^2 + x + 1$ and $2x^6 + 2x^5 + x^4 + x + 2$. The associated upper-triangular matrix  $\widetilde{G}(x)$ is
\[
\begin{pmatrix}
1  & 2x^9 + 2x^7 + 2x^6 + x^5 + 2x^3 + x^2 + 1\\
0 &  x^{10} +1
\end{pmatrix},
\]
where $\det(\widetilde{G}(x))=x^{10}+1$, which implies $\overline{\Omega}=\Omega$.
Over $\F_3$, the scalar generator matrix is 
\[
\setcounter{MaxMatrixCols}{20}\setlength\arraycolsep{3.5pt}
\begin{pmatrix}
1 & 0 & 0 & 0 & 0 & 0 & 0 & 0 & 0 & 0 & 1 & 0 & 1 & 2 & 0 & 1 & 2 & 2 & 0 & 2\\
0 & 1 & 0 & 0 & 0 & 0 & 0 & 0 & 0 & 0 & 1 & 1 & 0 & 1 & 2 & 0 & 1 & 2 & 2 & 0\\
0 & 0 & 1 & 0 & 0 & 0 & 0 & 0 & 0 & 0 & 0 & 1 & 1 & 0 & 1 & 2 & 0 & 1 & 2 & 2\\
0 & 0 & 0 & 1 & 0 & 0 & 0 & 0 & 0 & 0 & 1 & 0 & 1 & 1 & 0 & 1 & 2 & 0 & 1 & 2\\
0 & 0 & 0 & 0 & 1 & 0 & 0 & 0 & 0 & 0 & 1 & 1 & 0 & 1 & 1 & 0 & 1 & 2 & 0 & 1\\
0 & 0 & 0 & 0 & 0 & 1 & 0 & 0 & 0 & 0 & 2 & 1 & 1 & 0 & 1 & 1 & 0 & 1 & 2 & 0\\
0 & 0 & 0 & 0 & 0 & 0 & 1 & 0 & 0 & 0 & 0 & 2 & 1 & 1 & 0 & 1 & 1 & 0 & 1 & 2\\
0 & 0 & 0 & 0 & 0 & 0 & 0 & 1 & 0 & 0 & 1 & 0 & 2 & 1 & 1 & 0 & 1 & 1 & 0 & 1\\
0 & 0 & 0 & 0 & 0 & 0 & 0 & 0 & 1 & 0 & 2 & 1 & 0 & 2 & 1 & 1 & 0 & 1 & 1 & 0\\
0 & 0 & 0 & 0 & 0 & 0 & 0 & 0 & 0 & 1 & 0 & 2 & 1 & 0 & 2 & 1 & 1 & 0 & 1 & 1
\end{pmatrix}.
\] 
There are two choices of $P$, namely \begin{center}$P=\{\alpha^{52}, \alpha^{60}, \alpha^{68}\}$ and $P=\{\alpha^{20}, \alpha^{44}, \alpha^{76}\}$,\end{center} that yield $d_{Spec}(\mathcal{B}_1(D_{\overline{\Omega}}) ; P,d_P)=d_{Spec}(\mathcal{B}_1(D_{\overline{\Omega}}))$ with $d_P= d(D_P)= 4$ and $d(\mathbb{C}_P)=\infty$, where $D_P$ is the $2$-constacyclic code of length $10$ with zero set $P$ over $\F_{81}$. Using the first choice of $P$, which is consecutive, the BCH-like --and therefore the HT-like and the Roos-like-- bounds for QT codes are also sharp, giving $d_P=4$ and $d(\mathbb{C}_{P})=\infty$. Hence, we obtain $d_{Spec}(\mathcal{B}_u (D_{\overline{\Omega}}))=4=d_{Spec}(\mathcal{B}_i (D_{\overline{\Omega}}))$, for any $1\leq i\leq 4$.  Using the second choice of $P$, which appears in $\mathcal{B}_3(D_{\overline{\Omega}})$ and $\mathcal{B}_4(D_{\overline{\Omega}})$ but not in $\mathcal{B}_2(D_{\overline{\Omega}})$, the HT-like and the Roos-like bounds for QT codes are sharp with $d_P=4$ and $d(\mathbb{C}_{P})=\infty$. Therefore, we have $d_{Spec}(\mathcal{B}_u (D_{\overline{\Omega}}))=4=d_{Spec}(\mathcal{B}_i (D_{\overline{\Omega}}))$, for $ i\in \{1,3,4\}$, whereas $d_L = 1$ and $d_{J}=2$.
\end{ex}

\begin{ex}\label{ex3}
Let $\F_3(\alpha) := \F_9$ be the splitting field of $x^8+2$. Consider the $[16,7,5]_3$ QC code with $(m,\ell,r) = (8,2,1)$, generated by $x^7 + x^6 + x^5 + x^4 + 2x^3 +2$ and $x^7 + x^5 + 2 x^4 + 2 x^3 + x$. The associated upper-triangular matrix  $\widetilde{G}(x)$ is
\[
\begin{pmatrix}
x + 1  & x^6 + x^5 + 2x^4 + 2 x^3 + 2 x^2 +2 x\\
0 &  x^8 +2
\end{pmatrix},
\]
where $\det(\widetilde{G}(x))=(x+1)(x^8+2)$ and therefore $\overline{\Omega}=\Omega = \F_9\setminus \{0\}$.

Over $\F_3$, the scalar generator matrix is of the form
\[
\setcounter{MaxMatrixCols}{20}
\begin{pmatrix}
1 & 0 & 0 & 0 & 0 & 0 & 0 & 1 & 2 & 2 & 2 & 2 & 1 & 1 & 0 & 0 \\
0 & 1 & 0 & 0 & 0 & 0 & 0 & 2 & 1 & 0 & 0 & 0 & 1 & 0 & 1 & 0 \\
0 & 0 & 1 & 0 & 0 & 0 & 0 & 1 & 2 & 0 & 2 & 2 & 1 & 2 & 0 & 1 \\
0 & 0 & 0 & 1 & 0 & 0 & 0 & 2 & 2 & 0 & 1 & 0 & 1 & 0 & 2 & 0 \\
0 & 0 & 0 & 0 & 1 & 0 & 0 & 1 & 2 & 1 & 2 & 0 & 1 & 2 & 0 & 2 \\
0 & 0 & 0 & 0 & 0 & 1 & 0 & 2 & 0 & 0 & 2 & 0 & 2 & 0 & 2 & 0 \\
0 & 0 & 0 & 0 & 0 & 0 & 1 & 1 & 2 & 2 & 2 & 1 & 1 & 0 & 0 & 2
\end{pmatrix}.
\] 
There are again two choices of $P$, namely \begin{center}$P=\{\alpha^2, \alpha^3, 2, \alpha^5\}$ and $P=\{\alpha, 2, \alpha^6, \alpha^7\}$,\end{center} that yield $d_{Spec}(\mathcal{B}_1(D_{\overline{\Omega}}) ; P,d_P)=d_{Spec}(\mathcal{B}_1(D_{\overline{\Omega}}))$ with $d_P=	d(D_P)= 5$ and $d(\mathbb{C}_P)=\infty$, where $D_P$ is the cyclic code of length $8$ with zero set $P$ over $\F_{9}$. Using the first choice of $P$, which is clearly consecutive, the BCH-like, the HT-like and the Roos-like bounds for QC codes are also sharp with $d_P=5$ and $d(\mathbb{C}_{P})=\infty$. Hence, we get $d_{Spec}(\mathcal{B}_u (D_{\overline{\Omega}}))=5=d_{Spec}(\mathcal{B}_i (D_{\overline{\Omega}}))$, for $1\leq i\leq 4$. The second choice of $P$ is not consecutive but it appears in $\mathcal{B}_3(D_{\overline{\Omega}})$ and $\mathcal{B}_4(D_{\overline{\Omega}})$, making the HT-like and the Roos-like bounds again sharp with $d_P=5$ and $d(\mathbb{C}_{P})=\infty$. Therefore, we obtain $d_{Spec}(\mathcal{B}_u (D_{\overline{\Omega}}))=5=d_{Spec}(\mathcal{B}_i (D_{\overline{\Omega}}))$, for $ i\in \{1,3,4\}$, whereas $d_L = 2$ and $d_{J}=4$.
\end{ex}

\begin{figure*}[ht!]
\centering
\caption{Visualization, in percentage, of the outcomes listed in Tables~\ref{table:plot1} and~\ref{table:plot2}.}
\begin{tabular}{cc}
\includegraphics[width=0.47\linewidth]{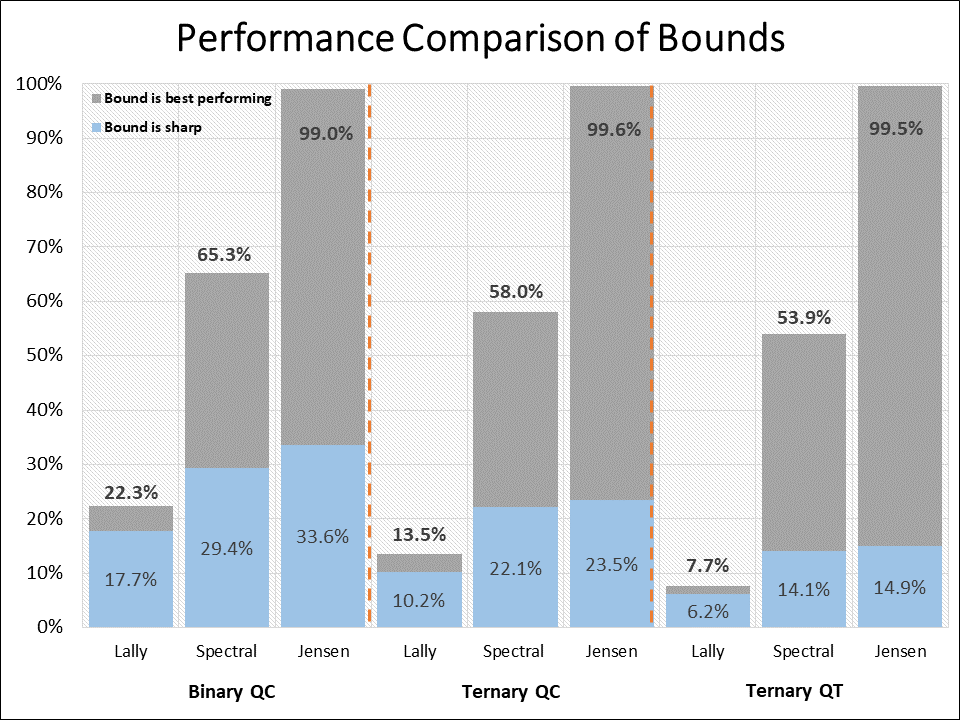} & 
\includegraphics[width=0.47\linewidth]{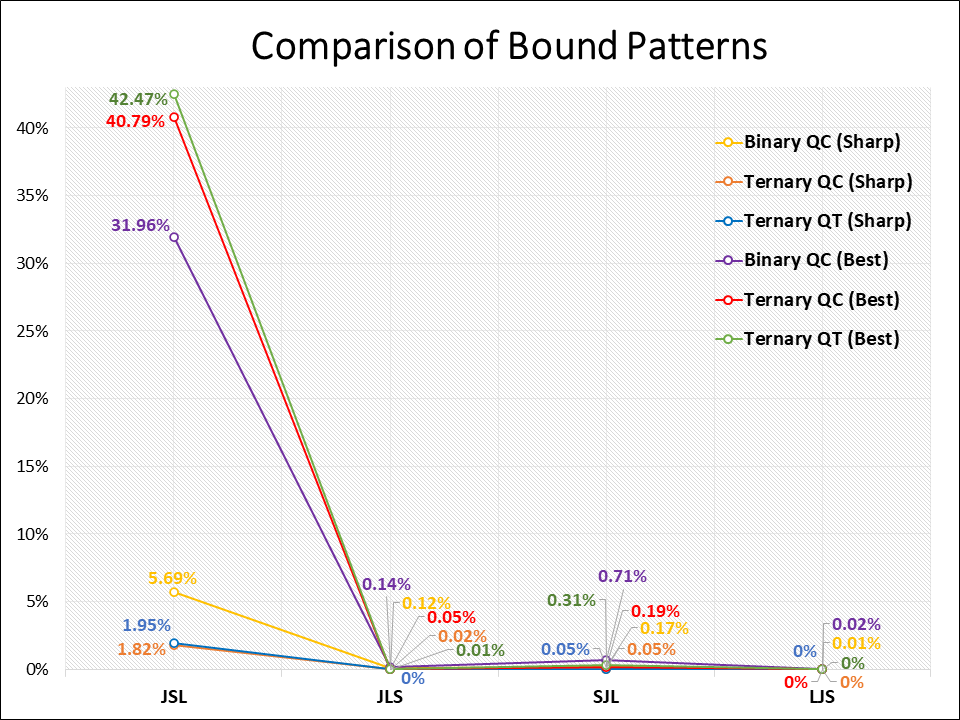} \\
(a) Overall performance. & (b) Performance by patterns. 
\end{tabular}
\label{fig:comparison}
\end{figure*}

\section{Numerical Comparisons}\label{Example section}

To compare the performance of the bounds we carried out two procedures. The first one looks into the overall performance of the bounds. The second one investigates their strict ranking.

For $q=2$, we construct $1000$ random codes on each input tuple $(m,\ell,r)$, with $m \in \{3,5,7,9,11\}$, $2 \leq \ell \leq 6$, and $1 \leq r \leq \ell$. Once $(m,\ell,r)$ is fixed, an array $\mathcal{A}$ of generator polynomials is randomly built. The number of polynomials in this array is $r \ell$. The corresponding binary $r$-generator QC code $C$ is generated by using the {\tt QuasiCyclicCode} function in \textsc{magma} with input $(m \ell, \mathcal{A}, r)$. If $C$ was nontrivial, then its minimum distance $d(C)$ and the values given by the three bounds $d_{L}$, $d_{Spec}$, and $d_{J}$ were determined, where $d_{Spec} :=d_{Spec}(\mathcal{B}_u(D_{\overline{\Omega}}))$ (see (\ref{big B})). We recorded the respective numbers of occasions when each bound was either sharp or was best-performing among the three. Double counting was allowed in cases where two or more bounds were simultaneously sharp. Similarly, double counting was also allowed for coinciding estimates with the best-performing bounds.

An identical routine was carried out for $q=3$. To keep $\gcd(m,q)=1$, we used $m \in \{4,5,7,8\}$, with $2 \leq \ell \leq 6$ and $1 \leq r \leq \ell$. When $\lambda=1$, we constructed $2000$ random codes on each input tuple $(m,\ell,r)$. Changing $\lambda$ to $2$, we completed $1000$ random constructions on each input tuple. In the strictly QT setup, the {\tt QuasiTwistedCyclicCode} function in \textsc{magma} generated the ternary $2$-QT code $C$ on input $(m \ell, \mathcal{A}, 2)$, built from the randomly generated polynomials in the corresponding array $\mathcal{A}$ of length $r \ell$. The outcomes of the first procedure can be found in Table~\ref{table:plot1} with the visualization given in Plot (a) of Figure~\ref{fig:comparison}.

In the second procedure we recorded the number of random constructions in which the bounds ranked in strictly decreasing order. We label the $6$ possible patterns in abbreviated form. The pattern JSL stands for $d_{J} > d_{Spec} > d_L$. The other patterns are analogously interpreted. Two patterns, namely SLJ and LSJ, for $d_{Spec} > d_{L} > d_{J}$ and $d_{L} >d_{Spec} > d_{J}$, respectively, never occurred in our constructions. For $q=2$, we ran $1000$ random codes on each input $(m,\ell,r)$, with $m \in \{3,5,7\}$, 
$2 \leq \ell \leq 6$ and $1 \leq r \leq \ell$. For $q=3, \lambda=1$ and $q=3, \lambda=2$, we used $m \in \{4,5,7,8\}$. The exact counts are presented in Table~\ref{table:plot2} and interpreted visually in Plot (b) of Figure~\ref{fig:comparison}. 

Our random constructions reveal that the Jensen bound has the best overall performances by a wide percentage margin. The situation where $d_{J} \geq d_{Spec} \geq d_{L}$ is typical. Consider, for example, the $[8, 2, 6]_3~2$-QT code with $(m,\ell,r) = (4,2,1)$, generated by $2x^3 + 2x + 1$ and $2x^2 + x + 1$. The code is optimal in terms of minimum distance with $d(C) = d_{J} = 6 > d_{Spec}=4 > d_{L}=3$. It has a generator matrix
\[
\begin{pmatrix}
1 & 0 & 1 & 1 & 2 & 1 & 0 & 1\\
0 & 1 & 1 & 2 & 1 & 0 & 1 & 1
\end{pmatrix}.
\]
While it is almost always safe to bet on the Jensen bound, there are occasions where $d_L$, similarly $d_{Spec}$, outperforms the other two bounds. 

\begin{figure*}[t!]
	\centering
	\caption{Patterns showing the ratios of the distance bound estimates given by $\frac{d_J}{d}, \frac{d_{Spec}}{d}, \frac{d_L}{d}$ in terms of the rate of the codes. The dotted lines are their respective quadratic curve fitting lines.}
	\begin{tabular}{cc}
		\includegraphics[width=0.46\linewidth]{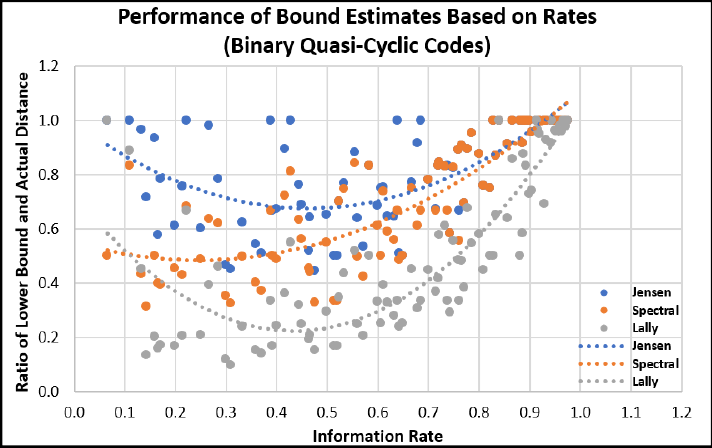} & 
		\includegraphics[width=0.46\linewidth]{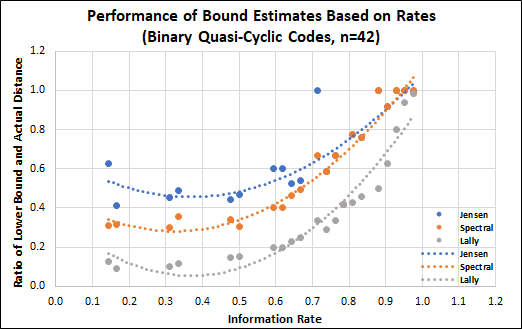} \\
		(a) Binary QC & (b) Binary QC of fixed length $42$\\
		\includegraphics[width=0.46\linewidth]{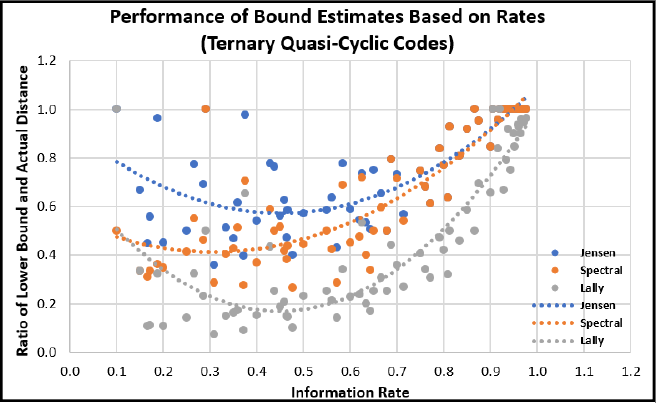} & 
		\includegraphics[width=0.46\linewidth]{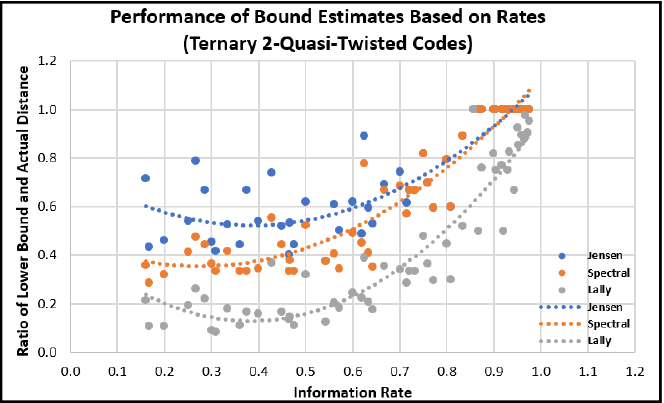} \\
		(c) Ternary QC & (d) Ternary QT with $\lambda=2$ 
	\end{tabular}
	\label{fig:pattern}
\end{figure*}

\begin{ex}\label{ex 4}
We start with an example where $d_{L} > d_{J} > d_{Spec}$ and the code is optimal since it reaches the best-possible minimum distance. The $[21, 6, 8]_2$ QC code with $(m,\ell,r)=(7,3,1)$, generator polynomials $x^4 + x^3 + x + 1$, $x^4 + 1$, and $x^3 + x$, and generator matrix
\[
\setcounter{MaxMatrixCols}{30}\setlength\arraycolsep{3pt}
\begin{pmatrix}
1 & 0 & 0 & 0 & 0 & 0 & 1 & 1 & 1 & 1 & 0 & 1 & 1 & 1 & 0 & 1 & 1 & 0 & 1 & 1 & 0 \\
0 & 1 & 0 & 0 & 0 & 0 & 1 & 0 & 0 & 0 & 1 & 1 & 0 & 0 & 0 & 1 & 0 & 1 & 1 & 0 & 1 \\
0 & 0 & 1 & 0 & 0 & 0 & 1 & 1 & 1 & 1 & 0 & 0 & 0 & 1 & 1 & 1 & 0 & 0 & 0 & 0 & 0 \\
0 & 0 & 0 & 1 & 0 & 0 & 1 & 0 & 0 & 0 & 1 & 1 & 1 & 1 & 0 & 0 & 0 & 0 & 1 & 1 & 0 \\
0 & 0 & 0 & 0 & 1 & 0 & 1 & 0 & 1 & 1 & 0 & 0 & 0 & 0 & 0 & 1 & 1 & 0 & 1 & 0 & 1 \\
0 & 0 & 0 & 0 & 0 & 1 & 1 & 1 & 1 & 0 & 1 & 1 & 1 & 1 & 1 & 1 & 0 & 1 & 1 & 0 & 0
\end{pmatrix},
\]
has $d_{L}=8 > d_{J} = 6 > d_{Spec}=4$.

Here is an example where $d_{Spec} > d_{J} > d_{L}$. In the $[21, 7, 6]_2$ {QC code} with $(m,\ell,r)=(7,3,1)$, generated by $x^6 + x^5 + x^4 + x^2 + 1$, $x^6 + x^5 + x^2$, and 
$x^6 + x^5 + x^4 + 1$, none of the bounds is sharp, since $d(C)=6 > d_{Spec} =5 > d_{J} = 3 > d_{L}=1$. The code has a generator matrix
\[
\setcounter{MaxMatrixCols}{30}\setlength\arraycolsep{3pt}
\begin{pmatrix}
1 & 0 & 0 & 0 & 0 & 0 & 0 & 1 & 1 & 0 & 0 & 0 & 0 & 1 & 0 & 0 & 0 & 1 & 1 & 0 & 0\\
0 & 1 & 0 & 0 & 0 & 0 & 0 & 1 & 1 & 1 & 0 & 0 & 0 & 0 & 0 & 0 & 0 & 0 & 1 & 1 & 0\\
0 & 0 & 1 & 0 & 0 & 0 & 0 & 0 & 1 & 1 & 1 & 0 & 0 & 0 & 0 & 0 & 0 & 0 & 0 & 1 & 1\\
0 & 0 & 0 & 1 & 0 & 0 & 0 & 0 & 0 & 1 & 1 & 1 & 0 & 0 & 1 & 0 & 0 & 0 & 0 & 0 & 1\\
0 & 0 & 0 & 0 & 1 & 0 & 0 & 0 & 0 & 0 & 1 & 1 & 1 & 0 & 1 & 1 & 0 & 0 & 0 & 0 & 0\\
0 & 0 & 0 & 0 & 0 & 1 & 0 & 0 & 0 & 0 & 0 & 1 & 1 & 1 & 0 & 1 & 1 & 0 & 0 & 0 & 0\\
0 & 0 & 0 & 0 & 0 & 0 & 1 & 1 & 0 & 0 & 0 & 0 & 1 & 1 & 0 & 0 & 1 & 1 & 0 & 0 & 0
\end{pmatrix}.
\]
\end{ex}

Figure~\ref{fig:pattern} illustrates the situation more comprehensively. We average, over a large number of distinct nontrivial codes, the respective ratios of distance estimates $d_J, d_{Spec},d_L$ over the actual minimum distance $d$ for fixed code rate $k/n$. For $q=2$, based on $4\,937$ distinct nontrivial QC codes, Figure~\ref{fig:pattern} (a) presents the \emph{average ratios} 
\[
\frac{d_J}{d}, \quad \frac{d_{Spec}}{d}, \quad \frac{d_L}{d}
\]	
on the vertical axis for the indicated code rates along the horizontal axis. The dotted lines are the quadratic curve fitting lines for the respective distance estimates.  Limiting the analysis to a specific length, say $n=42$ as in Figure~\ref{fig:pattern} (b), reveals similar patterns of behaviour. This is the main reason behind our choice of using the rate as reference without fixing the length $n$. The respective plots in Figure~\ref{fig:pattern} (c) and (d) show the patterns based on $37\,063$ distinct nontrivial ternary QC codes and $28\,853$ distinct nontrivial ternary QT codes with $\lambda=2$. The average values in Figure~\ref{fig:pattern} must of course be interpreted alongside the frequencies shown in Figure~\ref{fig:comparison}. 

\section{Conclusion}

We conclude the paper by restating the key insights that we have gained from comparing distance bounds for quasi-twisted codes. 
	
The general spectral bound, presented as Theorem \ref{main thm new}, is encompassing, counting the BCH-like, the HT-like, the Roos-like, and the shift bounds as special cases. It is also intriguingly simple to analyze. Once we have identified a nonempty set of eigenvalues of a given $\lambda$-QT code, we can define two parameters based on any chosen nonempty subset of this set. The first parameter is the minimum distance of the eigencode, which corresponds to the common eigenspace of the eigenvalues in the chosen subset. The second parameter is a minimum distance bound for any $\lambda$-constacyclic code whose zero set contains the chosen subset. The general spectral bound is computed based on these two derived parameters, each of which is simpler to work on than the original QT code.  

In terms of performance, we have seen by numerical comparisons that the bound to beat is overwhelmingly the Jensen bound. Despite its relatively poor performance against the Jensen bound, the general spectral bound may provide a better estimate for QT codes with high dimension. The two parameters of the spectral bound are bounded above by the index and the co-index of the given QT code, whose length is the product of these two. On the other hand, a QT code with a high dimension clearly has many nonzero constituents, which affects the direct sums and the size of the list (\ref{Jensen list}) in the Jensen bound. Hence, its performance starts to decay as the dimension increases whereas the spectral bound gets better, which can be observed in Figure \ref{fig:pattern}. Therefore, it is a good idea to check the estimates of both bounds if the dimension of the given QT code is sufficiently large. The majority of the occasions where the Jensen bound still performs better than the general spectral bound occurs when the QT code with high rate has many full space constituents. In that case, the size of (\ref{Jensen list}) gets smaller with longer direct sums. On the other hand, as indicated in Section \ref{comparison section}, full space constituents yield a smaller number of eigenvalues. Hence, the general spectral bound might be a better choice than the Jensen bound for QT codes of high dimension, unless they have a high number of full space constituents.

In addition to providing QT analogues of the minimum distance bounds given for constacyclic codes, the general spectral bound offers another major theoretical value. It reveals properties that allow for a direct structural comparison with both the Jensen and the Lally bounds. They serve as links, previously unavailable in the literature, that explicitly connect the three bounds. The poor performance of the Lally bound might be improved by extending its single-layer concatenated structure to a more general setup with some multi-layer concatenation, like in the Jensen bound.

\vfill

\end{document}